\documentclass[envcountsame,envcountsect]{llncs}

\usepackage{microtype}
\usepackage{mathtools,amssymb,mathrsfs}
\usepackage[table,svgnames]{xcolor}
\usepackage[colorlinks=true,linkcolor=black,citecolor=MidnightBlue,urlcolor=MidnightBlue]{hyperref}
\usepackage{xspace}
\usepackage{tikz}
\usepackage{pifont}
\usepackage{adjustbox}
\usepackage{subcaption}
\widowpenalty=10000
\clubpenalty=10000

\usepackage{algorithm}
\usepackage{algorithmicx}
\usepackage{graphicx}
\usepackage{algpseudocode}
\pagestyle{plain}
\usepackage{amsmath}


\newcommand{\negA}{\vspace{-0.05in}}
\newcommand{\negB}{\vspace{-0.1in}}

\newcommand{\posA}{\vspace{0.08in}}
\newcommand{\mysection}[1]{\negB\section{#1}\negA}
\newcommand{\mysubsection}[1]{\negA\subsection{#1}\negA}

\newcommand{\myparagraph}[1]{\par\smallskip\par\noindent{\bf{}#1:~}}

\newcommand{\eps}{\varepsilon}

\newcommand{\comment}[1]{}



\begin{document}
\sloppy
\setcounter{page}{1} 
\title{An APTAS for Bin Packing with Clique-graph Conflicts}
\author{Ilan Doron-Arad, Ariel Kulik \and Hadas Shachnai}
\institute{Computer Science Department, Technion, Haifa 3200003, Israel. \mbox{E-mail: {\tt \{idoron-arad,kulik,hadas\}@cs.technion.ac.il.}}}

\maketitle
	

\begin{abstract}
\label{sec:abstract}
We study the following variant of the classic {\em bin packing} problem.
Given a set of items of various sizes, partitioned into groups, 
find a packing of the items in a minimum number of identical (unit-size) bins, 
such that no two items of the same group are assigned to the same bin. 
This problem, known as {\em bin packing with clique-graph conflicts},
has natural applications in storing file replicas, security in cloud computing and signal distribution.

Our main result is an {\em asymptotic polynomial time approximation scheme (APTAS)} for the problem, improving upon the best known ratio of $2$. 
As a key tool, we apply 
a novel {\em Shift \& Swap} technique which generalizes 
the classic linear shifting technique to scenarios 
allowing conflicts between items.
The major challenge of packing {\em small} items using only a small number of extra bins is tackled through an intricate combination of enumeration and a greedy-based approach that utilizes the rounded solution of a {\em linear program}.

\comment{
We use our scheme to obtain APTASs for 
two non-trivial subclasses of instances of {\em bin packing with interval-graph conflicts}, where the interval graph is either {\em proper}, or consists of {\em constant length intervals}, thus improving the best known ratio of $\frac{7}{3}$ for such instances.
}
\end{abstract}

\negA
\negA
\negA
\negA
\comment{\keywords{Bin packing, clique-graph conflicts, asymptotic approximation scheme, interval graphs}}

\section{Introduction}\label{Introduction}
\label{sec:intro}

In the classic {\em bin packing (BP)} problem, we seek a packing of
items of various sizes into a minimum number of unit-size bins.
This fundamental problem arises in a wide variety of contexts and has been studied extensively since the early 1970's.
In some common scenarios, the input is partitioned into 
{\em disjoint groups}, such that items in the same  
group are {\em conflicting} and therefore cannot be packed together. 
For example, television and radio
stations often assign a set of programs to their channels. 
Each program falls into a genre such as
comedy, documentary or sports on TV, or various musical genres on radio. To maintain a diverse daily schedule of programs, the station would like to avoid broadcasting two programs of the same genre in one channel. Thus, we have a set of items (programs) partitioned into groups (genres) that need to be packed into a set of bins (channels), such that items belonging to the same group cannot be packed together.

We consider this natural variant of the classic bin packing problem that we call {\em group bin packing (GBP)}.
Formally, the input is a set of $N$ items $I = \{ 1, \dots , N \}$ with corresponding sizes $s_1,...,s_N \in (0,1]$, partitioned into $n$ disjoint groups $G_1,...,G_n$, i.e., $I=G_1 \cup G_2 \cup \ldots \cup G_n$. 
The items need to be packed in unit-size bins. 
A packing is {\em feasible} if the total size of items in each bin does not exceed the bin capacity, and no two items from the same group are packed in the same bin. We seek a feasible packing of all items in a minimum 
number of unit-size bins. 
We give in Appendix~\ref{sec:applications} some natural applications of GBP.
 
Group bin packing can be viewed as a special case of {\em bin packing with conflicts (BPC)}, in which the input is a set of items $I$, each having size in $(0,1]$, along with a conflict graph $G=(V,E)$. An item 
$i \in I$ is represented by a vertex 
$i \in V$, and there is an edge $(i,j) \in E$ if items $i$ and $j$ cannot be packed in the same bin. The goal is to pack the items in a minimum number of unit-size bins such that items assigned to each bin form an {\em independent set} in $G$.

Indeed, GBP is the special case where the conflict graph is a union of
cliques. Thus, GBP is also known as {\em bin packing with clique-graph conflicts} (see Section~\ref{sec:related_work}).
\negA

\comment{\subsection{Group Bin Packing via Clustering}
In the the big data era, the traditional computing paradigm involving a single machine which
receives a problem instance and returns its corresponding output
has become less relevant. Indeed, 
input instances are way too big to be tackled by a single machine. Instead,
new computing paradigms that involve many machines, each receiving a small portion of the input
instance, have been introduced, such as  MapReduce~\cite{DG08}, Hadoop~\cite{W12}, Dryad~\cite{IB+07} and Spark~\cite{Z+10}. The common practice  is to partition a massive input instance into smaller instances
and to process each 
separately by its own machine. Then, the individual
solutions are assembled into a solution for the entire instance.
In many cases this assembling stage cannot be pursued (e.g., 
due to locality issues, high costs, or privacy considerations). 
The performance measure considered in such scenarios is the {\em price of clustering}, evaluating the 
degradation in solution quality relative to the optimum due to partitioning the instance into clusters and solving for each cluster separately (see the formal definition below).

Our study of group bin packing encompasses also massive instances, for which a solution is obtained via clustering. We formulate necessary and sufficient conditions for obtaining an efficient global solution for GBP via clustering.

\subparagraph*{Clustering Model:}
Let $I$ be an instance of an optimization problem $\Pi$, and let $I_1, \ldots , I_m$ be a
partition of $I$ into clusters, where $I_i$ is the sub-instance assigned to cluster $i$, $1 \leq i  \leq m$.
We denote by $OPT(I), OPT(I_i)$ the number of bins used in an optimal solution for a GBP instance $I$ and $I_i$, respectively.
We use the notion of {\em price of clustering (PoC)} as defined in~\cite{PoC:A}.
Let $PoC(\Pi)$ denote the price of clustering for a given optimization problem $\Pi$.
Then the price of clustering for GBP is given by
$$PoC(GBP) = sup_{I,I_i}\frac{\sum_{I_i}OPT(I_i)}{OPT(I)}.$$
While computation models based on clustering have been in use in the past two decades, the price of clustering
measure was studied for the first time only very recently~\cite{PoC:A}. 

We note that PoC is defined with respect to the globally optimal solution and the optimal solution for each cluster.
However, in practice, due to the hardness of the underlying optimization problem, one might be able to use only an approximation algorithm, ${\cal A}$. Hence, it is natural to define a performance measure relating to such algorithms. In particular, given an algorithm ${\cal A}$ for GBP, we define
$$PoC_{{\cal A}}(GBP) = sup_{I,I_i}\frac{\sum_{I_i}{\cal A}(I_i)}{{\cal A}(I)},$$
where ${\cal A}(I)$ and ${\cal A}(I_i)$ are the number of bins used by ${\cal A}$ for the instance $I$ and $I_i$, respectively.
As noted in~\cite{PoC:A}, the price of clustering can be substantially affected by the partition used for a given problem instance. Thus, it is natural to seek a {\em good} partition. 

We focus in this paper on attributes of partitions that guarantee that $PoC(GBP)$ is bounded by some constant $d>1$.}

\subsection{Contribution and Techniques}
\label{sec:Contribution and Techniques}
Our main result (in Section~\ref{sec:APTAS}) is an APTAS for the group bin packing problem,
improving upon the best known ratio of $2$~\cite{allornothing}.\footnote{We note that $2$ is the best known absolute as well as {\em asymptotic} approximation ratio for the problem
(see Section~\ref{sec:related_work}). We give formal definitions of 
absolute/asymptotic ratios in Section~\ref{sec:preliminaries}.}
\comment{
In Section~\ref{sec:interval_graphs}, we extend this result to two non-trivial subclasses of {\em bin packing with interval-graph conflicts},
where the interval graph is either {\em proper}, or consists of {\em constant length intervals}. The best known ratio for such instances
is $\frac{7}{3}$, due to Epstein and Levin~\cite{epstein2008bin}.
}

Existing algorithms for BPC often rely on 
initial {\em coloring} of the instance.
This enables to apply in later steps 
known techniques for bin packing, considering each color class (i.e., a subset of non-conflicting items) separately.
In contrast, our approach uses 
a refined packing of the original instance while eliminating
conflicts, thus generalizing techniques for classic BP.

Our first technical contribution is an enhancement of the {\em linear shifting} technique of~\cite{Vega1981BinPC}. This enables our scheme to enumerate in polynomial time over packings of relatively large items, while guaranteeing that these packings 
respect the group constraints.
Our {\em Shift \& Swap} technique considers the set of large items that are associated with many different groups as a classic BP instance, i.e., the group constraints are initially relaxed. Then the scheme applies to these items the linear shifting technique of~\cite{Vega1981BinPC}. In the process, items of the same group may be packed in the same bin. 
Our {\em Swapping} algorithm resolves all conflicts, 
with no increase in the total number of bins used  (see Sections~\ref{sec:Rounding of Large and Medium Items} and~\ref{sec:Large Items and Medium Items from Large Groups}).

A common approach used for deriving APTASs for BP is to pack in a 
bounded number of extra bins a set of discarded small items of total size $O(\eps)OPT$,
where $OPT=OPT(I)$ is the minimum number of bins required for packing the given instance, $I$, and $\eps \in (0,1)$ is the accuracy parameter of the scheme.
As shown in Appendix~\ref{sec:gbp_bp_example}, this approach may fail for GBP, e.g.,  
when the discarded items belong to the same group. Our second contribution is an algorithm that overcomes this hurdle.
The crux is to find a set of small items of total size $O(\eps)OPT$ containing $O(\eps)OPT$ items from each group. This would enable to pack these items in a small number of extra bins. Furthermore, 
the remaining small items should be feasibly assigned to partially packed $OPT$ bins. Our algorithm identifies such sets of small items
through an intricate combination of enumeration and
a greedy-based approach that utilizes the rounded solution of a {\em linear program}.

 \comment{In Section~\ref{sec:POC} we consider the price of clustering for GBP. We show that $PoC(GBP)$ crucially depends on two main attributes of each cluster $I_i$: $S(I_i)$, the total size of items in $I_i$, and $v_{max}^i$, the maximum cardinality of a group within $I_i$. Our second technical contribution is a non-trivial construction that enables to obtain a tight bound of $1 + \frac{1}{\alpha}$ on the price of clustering for GBP where
$R_i = \frac{S(I_i)}{v_{max}^i} \geq \alpha$,
 for any $0 < \alpha < 1$ and $1 \leq i \leq m$. Our analysis yields tight bounds for all possible ranges for $R_i$ (see Table~\ref{tbl:results}). Finally, using the bounds for $PoC(GBP)$, we derive bounds on $PoC_{\cA}(GBP)$, where ${\cA}$  is the APTAS presented in this paper.}


\subsection{Related Work}
\label{sec:related_work}

The classic bin packing problem is known to be NP-hard. Furthermore, it cannot be approximated within a ratio better than $\frac{3}{2}$, unless P=NP. 
This ratio is achieved by the simple First-Fit Decreasing algorithm~\cite{S94}.
The paper~\cite{Vega1981BinPC} 
presents an APTAS for bin packing, which uses at most $(1+\eps)OPT+1$ bins, for any fixed $\eps \in (0, 1/2)$. The paper~\cite{karmarkar1982efficient}  
gives an approximation algorithm that uses at most $OPT+O(\log^2(OPT))$ bins. 
The additive factor was improved in~\cite{Ro13} to
$O(\log OPT \cdot \log \log OPT)$. 
For comprehensive surveys of known results for BP see, e.g.,~\cite{C+13,C+17}.

The problem of {\em bin packing with conflicts (BPC)} 
was introduced in~\cite{JO97}. As BPC includes as a special case the classic {\em graph coloring} problem,
it cannot be approximated within factor 
$N^{1-\eps}$ for an input of $N$ items, for all $ \eps >0$, unless $P=NP$~\cite{Z07}. Thus, most of the research work focused on obtaining approximation algorithms for BPC on classes of conflict graphs that can be optimally colored in polynomial time.
Epstein and Levin~\cite{epstein2008bin} presented sophisticated algorithms for two such classes, namely, a $\frac{5}{2}$-approximation for BPC with a {\em perfect} conflict 
graph,\footnote{For the subclass of interval graphs the paper~\cite{epstein2008bin} gives a $\frac{7}{3}$-approximation algorithm.} 
and $\frac{7}{4}$-approximation for a {\em bipartite} conflict graph.

The hardness of approximation of GBP (with respect to absolute approximation ratio) follows from the hardness of BP, which is the special case of GBP where the conflict graph is an independent set.
A $2.7$-approximation algorithm for general instances follows from a result of~\cite{JO97}. 
Oh and Son~\cite{OS95} showed that a simple algorithm based on First-Fit outputs a packing of any GBP instance $I$ in 
$1.7OPT+2.19v_{max}$ bins, where $v_{max}= \max_{1 \leq j \leq n} | G_j|$.
The paper~\cite{mccloskey2005approaches} shows that some special cases of the problem are solvable in polynomial time. 
The best known ratio for GBP is $2$ due to~\cite{allornothing}.

Jansen~\cite{jansen1999approximation} presented an 
{\em asymptotic fully polynomial time approximation scheme (AFPTAS)}
for BPC on d-inductive conflict graphs,\footnote{A graph $G$ is {\em d-inductive} if the vertices of $G$ can be numbered such that each vertex is connected by an edge to at most $d$ lower numbered vertices.}  where $d \geq 1$ is some constant.
The scheme of~\cite{jansen1999approximation} uses for packing a given instance $I$  at most $(1+\eps)OPT + O(d/\eps^2)$ bins.
This implies that GBP admits an AFPTAS on instances where the maximum clique size is some constant $d$.
Thus, the existence of an asymptotic approximation scheme for general instances remained open.

Das and Wiese~\cite{DW17} introduced the problem of makespan minimization with bag constraints.
In this generalization of the classic makespan minimization problem, each job belongs to a {\em bag}. The goal is to schedule the jobs on a set of $m$ identical machines, for some $m \geq 1$, such that no two jobs in the same bag are assigned to the same machine, and the makespan is minimized. For the classic problem of makespan minimization with no bag constraints, there are known 
{\em polynomial time approximation scheme (PTAS)}~\cite{hochbaum1987using,MS:10} as well as
{\em efficient polynomial time approximation scheme (EPTAS)}~\cite{hochba1997approximation,alon1998approximation,MS:5,JKV16}. Das and Wiese~\cite{DW17} developed a PTAS for the problem with bag constraints. Later, Grage et al.~\cite{Jansen_et_al:2019} obtained an EPTAS.


\mysection{Preliminaries: Scheduling with Bag Constraints}
\label{sec:preliminaries}

Our scheme is inspired by the elaborate framework of Das and Wiese~\cite{DW17} 
for makespan minimization with bag constraints.
For completeness, we give below an overview of the scheme of~\cite{DW17}.
Given a set of jobs $I$
partitioned into bags and $m$ identical machines,
let $p_\ell > 0$ be the processing time of job $\ell \in I$.
The instance is scaled such that the optimal makespan is $1$.
The jobs and bags are then classified using the next lemma. 
\negA

\begin{lemma}
\label{lem:k_val}
For any instance $I$ and $\eps \in (0,1)$, there is an integer $k \in \{1,...,\lceil \frac{1}{\eps^2} \rceil \}$ such that $\sum_{\ell \in I : \: p_{\ell} \in [\varepsilon^{k+1},\varepsilon^{k})} \: p_{\ell} \leq \varepsilon^{2} m$.
\end{lemma}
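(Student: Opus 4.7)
The plan is to prove the lemma by a straightforward averaging/pigeonhole argument over the $\lceil 1/\eps^2\rceil$ disjoint size intervals.

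First I would note the global budget on total processing time. Since the instance is scaled so that the optimal makespan is $1$, every machine in an optimal schedule receives jobs of total processing time at most $1$. Summing over the $m$ machines gives $\sum_{\ell \in I} p_\ell \le m$.

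Next I would set up the pigeonhole. The intervals $[\eps^{k+1},\eps^k)$ for $k=1,\ldots,\lceil 1/\eps^2\rceil$ are pairwise disjoint, so the sums $T_k := \sum_{\ell \in I:\ p_\ell \in [\eps^{k+1},\eps^k)} p_\ell$ satisfy $\sum_{k=1}^{\lceil 1/\eps^2\rceil} T_k \le \sum_{\ell \in I} p_\ell \le m$. If every $T_k$ exceeded $\eps^2 m$, summing over all $\lceil 1/\eps^2\rceil \ge 1/\eps^2$ indices would give total more than $\lceil 1/\eps^2\rceil \cdot \eps^2 m \ge m$, a contradiction. Hence at least one index $k$ in the required range satisfies $T_k \le \eps^2 m$, which is the claimed bound.

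There is essentially no obstacle here; the only point requiring any care is the choice of the number of intervals. Using exactly $\lceil 1/\eps^2\rceil$ intervals ensures that their count times the per-interval threshold $\eps^2 m$ is at least the total-work upper bound $m$, so the pigeonhole argument goes through cleanly for every $\eps \in (0,1)$, without needing to worry about whether $1/\eps^2$ is an integer.
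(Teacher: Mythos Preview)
Your proof is correct and follows essentially the same pigeonhole argument as the paper: assume every $k$ violates the bound, sum the disjoint contributions to exceed the total-work bound $m$ (respectively $OPT$), and derive a contradiction. The only cosmetic difference is that you phrase the conclusion directly while the paper writes it as a proof by contradiction.
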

\label{sec:dw}

\negA

A job $\ell$ is {\em small} if  
$p_{\ell} < \varepsilon^{k+1}$, {\em medium} if $p_{\ell} \in [\varepsilon^{k+1},\varepsilon^{k})$ and {\em large} if $p_{\ell} \geq \varepsilon^{k}$, where 
$k$ is the value found in Lemma~\ref{lem:k_val}. 
A bag is {\em large} if the number of large and medium jobs it contains is at least $\eps m$, and {\em small} otherwise. 

The scheme of~\cite{DW17} initially enumerates
over {\em slot patterns} so that large and medium jobs from large bags are optimally assigned  to the machines in polynomial time.
The enumeration is enhanced by using dynamic programming and a flow network to schedule also the large jobs from small bags. The medium jobs in each small bag are scheduled across the $m$ machines almost evenly, causing only small increase to the makespan.
 The small jobs are partitioned among {\em machine groups} with the same processing time and containing jobs from the same subset of large bags.
Then, a greedy approach is used with respect to the bags to schedule the jobs within each machine group, such that the overall makespan is at most $1+O(\eps)$.

Our scheme classifies the items and groups similar to the classification of jobs and bags in~\cite{DW17}. We then apply enumeration over patterns to pack the large and medium items.
Thus, Lemmas~\ref{lem:few_large_groups}, \ref{lem:dw1} and \ref{lem:classes} in this paper are adaptations of results obtained in \cite{DW17}. However, the remaining components of our scheme are different. One crucial difference is our use of a Shift \& Swap technique to round the sizes of large and medium items.
Indeed, rounding the item sizes using the approach of~\cite{DW17} may cause overflow in the bins,
requiring a large number of extra bins to accommodate the excess items.
 Furthermore, packing the {\em small} items 
 using $O(\eps) OPT$ extra bins requires new ideas (see Section~\ref{sec:APTAS}).
 
We use standard definitions of approximation ratio and {\em asymptotic} approximation ratio. Given a minimization problem  $\Pi$, let ${\cal A}$ be a polynomial-time algorithm  for $\Pi$. For an instance $I$ of $\Pi$, denote by 
$OPT(I)$ and ${\cal A}(I)$ the values of an optimal solution and the solution returned
by ${\cal A}$ for $I$, respectively. 
We say that ${\cal A}$ is a $\rho$-approximation algorithm for $\Pi$,
for some $\rho \geq 1$, if ${\cal A}(I) \leq  \rho \cdot OPT (I)$ 
for any instance $I$ of $\Pi$.  
 ${\cal A}$ is an {\em asymptotic} $\rho$-approximation for $\Pi$ if there is a constant $c\in \mathbb{R}$ such that ${\cal A}(I) \leq  \rho \cdot  OPT (I)+c$  for any instance $I$ of $\Pi$. An APTAS  for $\Pi$ is a family of algorithms $(A_{\eps})_{\eps>0}$ such that $A_\eps$ is a polynomial-time asymptotic $(1+\eps)$-approximation for each  $\eps>0$. When clear from the context, we use $OPT=OPT(I)$. 


\section{An APTAS for GBP}\label{APATS}
\label{sec:APTAS}

In this section we present an APTAS for GBP.
Let $OPT$ be the optimal number of bins for an instance $I$.
Our scheme uses as a subroutine a BalancedColoring algorithm proposed in \cite{allornothing} for the {\em group packing} problem (see the details in Appendix~\ref{sec:balanced_col}). 
Let $S(I)$ be the total size of items in $I$, i.e., $S(I) = \sum_{\ell \in [N]} s_\ell$.
Recall that $v_{max}$ is
the maximum cardinality of any group.
 The next lemma follows from a result of \cite{allornothing}.
 
 
 \begin{lemma}
	\label{lem:balancedColoring}
	Let $I$ be an instance of GBP. Then BalancedColoring packs $I$ in at most $\max \{ 2S(I), S(I) + v_{max} \}$ bins.
\end{lemma}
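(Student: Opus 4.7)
The plan is to inspect the BalancedColoring subroutine (whose pseudocode appears in Appendix~\ref{sec:balanced_col}) and analyze the number of bins it opens via a two-case charging argument. Since the lemma is stated as ``following from'' a result of~\cite{allornothing}, the task is to translate that result into the bound $\max\{2S(I),\,S(I)+v_{max}\}$, which interpolates between the classical bin packing bound of $2S(I)$ and the trivial lower bound of $v_{max}$ forced by the largest group.

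First I would recall the invariant governing BalancedColoring: processing the items in a fixed (non-increasing size) order, for each item $\ell$ the algorithm assigns $\ell$ to a currently open bin that (a) contains no other item of $\ell$'s group and (b) has enough remaining capacity for $s_\ell$, preferring the least-loaded such bin, and opens a new bin only when no such bin exists. The decisive structural observation is that when a new bin is opened for $\ell$, each previously open bin is either \emph{conflict-blocked} (contains an item of $\ell$'s group, so at most $|G(\ell)|-1 \leq v_{max}-1$ of them) or \emph{capacity-blocked} (has load strictly greater than $1-s_\ell$).

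Next I would split into the two regimes determined by the maximum. If $v_{max} \leq S(I)$, so that $\max\{2S(I),S(I)+v_{max}\}=2S(I)$, I would show the number of bins is at most $2S(I)$ by a pairing/averaging argument: at every opening event the number of conflict-blocked bins is $\leq v_{max}-1 \leq S(I)-1$, which is negligible compared to $2S(I)$, and combining this with the $>1-s_\ell$ lower bound on the remaining bins forces the average load over all open bins to be at least $1/2$, so $m \cdot \tfrac{1}{2} \leq S(I)$. If instead $v_{max} > S(I)$, so that $\max\{2S(I),S(I)+v_{max}\}=S(I)+v_{max}$, I would first charge $v_{max}$ bins directly to the items of a largest group (which must occupy distinct bins), and then argue that every additional bin beyond the first $v_{max}$ must have been opened due to capacity blockage rather than conflict — each such opening amortizes against a unit of total size already packed in the earlier bins, contributing at most $S(I)$ extra bins.

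The main obstacle is the interaction between conflict-blocked and capacity-blocked bins in the averaging step: conflicts may temporarily disable many bins, so the argument must exploit the balanced (least-load-first) tie-breaking rule together with the non-increasing processing order to guarantee that the load across non-conflict-blocked bins is sufficiently uniform. Concretely, I expect to prove by induction on the opening events that at the moment a new bin is opened, the total load in the capacity-blocked bins dominates $(m-v_{max})(1-s_\ell)$, and then to use $s_\ell \leq 1$ to convert this into the desired bound on $m$; summing or taking the maximum of the two regimes yields $m \leq \max\{2S(I),\,S(I)+v_{max}\}$ as claimed.
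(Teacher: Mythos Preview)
You have misread the algorithm. BalancedColoring (Appendix~\ref{sec:balanced_col}) is not a single-pass bin-opening procedure that checks capacity and conflicts simultaneously. It is a two-phase routine: first it distributes the items into exactly $v_{max}$ \emph{color classes} of unbounded capacity, placing each item into the least-loaded class that does not already contain an item of its group; then it runs ordinary First-Fit on each color class separately. In the coloring phase there is no capacity constraint to violate, and in the First-Fit phase there are no conflicts, so your central ``conflict-blocked versus capacity-blocked at the moment a bin is opened'' dichotomy never arises. The charging scheme you outline is analyzing a different algorithm.

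The paper does not supply its own proof here; it simply invokes~\cite{allornothing}. But the two-phase structure dictates what such a proof must look like. One first argues that the greedy balancing keeps the color-class totals close to one another (this is where the group-by-group processing and the non-increasing order within each group are used), so that each of the $v_{max}$ classes has total size close to $S(I)/v_{max}$. One then bounds the number of First-Fit bins per class and sums: when the average class size is at least~$1$ the per-class First-Fit cost is at most twice the class size, giving $2S(I)$; when it is below~$1$ each class costs essentially one bin plus an overflow term governed by total size, giving $v_{max}+S(I)$. Your case split $v_{max}\lessgtr S(I)$ is the correct dividing line, but it has to be executed on the color classes, not on bins being opened on the fly.
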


 By the above, given an instance $I$ of GBP, we can guess $OPT$ in polynomial time, by iterating over all integer values in $[1,\max \{ 2S(I), S(I) + v_{max}]$ and taking the minimal number of bins for which a feasible solution exists. 
 
 \comment{
 W.l.o.g., we may assume that $|G_j|= OPT$ for all $1 \leq j \leq n$.\footnote{Otherwise, we can add to $G_j$ dummy items of size $0$, with no increase to the number of bins in an optimal solution.}}

 \comment{
Using binary search, we can "guess" the optimal number of bins for packing a given GBP instance, $I$, and denote it by $OPT(I)$. , since otherwise $v_{max} \leq \frac{1}{\eps}$, implying that the conflict graph is $d$-inductive, where $d \leq 1/\eps$ is a constant. For such instances, GBP admits an AFPTAS~\cite{jansen1999approximation}.  
To simplify the scheme, we add "dummy items" to $I$, i.e, items of size 0, so that each group has the same number of items, without changing $OPT$. Thus, we assume from now on that each group has $OPT$ items. 
 . The packing is obtained in several steps, such that in each step we pack only a subset of the items in the same $OPT$ bins through all steps, and some steps require $O(\varepsilon)OPT$ extra bins, such that there is no overflow in any of the bins. We pack all discarded items from all steps together in $O(\varepsilon)OPT+c$ extra bins.  Hence, at the end of all steps, all items are packed feasibly in $(1+O(\varepsilon)) \cdot OPT+c$ bins. Our partition uses the next lemma. 
}
\comment{\begin{lemma}
\label{lem:k_val}
Given a GBP instance $I$ and $\eps \in (0,1)$, there is an integer $k \in \{1,...,\lceil \frac{1}{\eps^2} \rceil \}$ such that $\sum_{\ell \in I : \: s_{\ell} \in [\varepsilon^{k+1},\varepsilon^{k})} \: s_{\ell} \leq \varepsilon^{2} \cdot OPT$.
\end{lemma}}

Similar to Lemma~\ref{lem:k_val}, we can find a value of $k$, $1\leq k \leq \lceil \frac{1}{\eps^2} \rceil$,
satisfying 
$\sum_{\ell \in I : \: s_{\ell} \in [\varepsilon^{k+1},\varepsilon^{k})} \: s_{\ell} \leq \varepsilon^{2} \cdot OPT$. Now, we classify 
item $\ell$ as {\em small} if $s_{\ell} < \varepsilon^{k+1}$, {\em medium} if $s_{\ell} \in [\varepsilon^{k+1},\varepsilon^{k})$ and {\em large} otherwise.
A group is {\em large} if the number of large and medium items of that group is at least $\eps^{k+2} \cdot OPT$, and 
{\em small} otherwise. Given an instance $I$ of GBP and a constant $\eps \in (0,1)$, we also assume that $OPT > \frac{3}{\eps^{k+2}} $ (otherwise, the conflict graph is $d$-inductive, where $d$ is a constant, and the problem admits an AFPTAS~\cite{jansen1999approximation}).

\comment{\footnote{Otherwise, the conflict graph is $d$-inductive, where $d \leq \frac{2}{\eps ^{k+2}}$ is a constant, and the problem admits an AFPTAS~\cite{jansen1999approximation}.}} \comment{The next lemma follows from a result of ~\cite{DW17}. }

\begin{lemma}
\label{lem:few_large_groups}
There are at most $\frac{1}{\eps^{2k+3}}$ large groups.
\end{lemma}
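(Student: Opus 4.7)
The plan is a straightforward volume argument. Every item that is classified as large or medium has size at least $\eps^{k+1}$, by the definitions just above the lemma. Therefore any large group, which by definition contains at least $\eps^{k+2} \cdot OPT$ large and medium items, contributes total item size at least $\eps^{k+2} \cdot OPT \cdot \eps^{k+1} = \eps^{2k+3} \cdot OPT$.

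Next I would use the trivial capacity upper bound: since all items can be packed into $OPT$ unit-size bins, we have $S(I) \leq OPT$. If $L$ denotes the number of large groups, summing the lower bound from the previous paragraph over all large groups gives $L \cdot \eps^{2k+3} \cdot OPT \leq S(I) \leq OPT$, and dividing through by $\eps^{2k+3} \cdot OPT$ yields $L \leq \eps^{-(2k+3)}$, as desired.

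There is no real obstacle here; the only thing to double-check is that the two thresholds being multiplied are the intended ones (the size threshold $\eps^{k+1}$ for the weakest ``large-or-medium'' item, and the cardinality threshold $\eps^{k+2} \cdot OPT$ for a group to be large), so the exponent $2k+3$ comes out correctly. The argument does not use the small-item total size bound coming from the analogue of Lemma~\ref{lem:k_val}, nor the bag/group structure beyond the cardinality definition, so it mirrors the classical counting of ``big'' items in BP and the corresponding step in~\cite{DW17}.
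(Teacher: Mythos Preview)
Your proof is correct and essentially identical to the paper's own argument: both use the volume bound that each large group contributes total size at least $\eps^{k+2}\cdot OPT \cdot \eps^{k+1} = \eps^{2k+3}\cdot OPT$, and then divide the trivial upper bound $S(I)\le OPT$ by this quantity.
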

 
    

\subsection{Rounding of Large and Medium Items}
\label{sec:Rounding of Large and Medium Items}
We start by reducing the number of distinct sizes for the large and medium items.
Recall that in the linear shifting technique we are given a BP instance of $N$ items
and a parameter $Q \in (0,N]$.
The items are sorted
in non-increasing order by sizes and then partitioned into classes. Each class (except maybe the last one) contains $\max\{Q,1\}$ items. The items in class $1$ (i.e., largest items) are discarded (the discarded items are handled in a later stage of the algorithm). The sizes of items in each class are then rounded up to the maximum size of an item in this class. For more details see, e.g.,~\cite{Vega1981BinPC}.

We apply linear shifting to the large and medium items in each large group
with parameter $Q = \lfloor \eps^{2k+4} \cdot OPT \rfloor$. Let $I, I'$ be the instance before and after the shifting over large groups, respectively.

\begin{lemma}
\label{lem:shiftingNotIncreaseOPT}
$OPT(I') \leq OPT(I)$.
\end{lemma}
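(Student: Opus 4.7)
The plan is to take any optimal packing $\mathcal{P}$ of $I$ using $OPT(I)$ bins and transform it into a feasible packing $\mathcal{P}'$ of $I'$ that uses at most the same number of bins. Since $I$ and $I'$ agree on all items outside the large and medium items of large groups, the unaffected items will remain in their bins from $\mathcal{P}$; only the large and medium items of each large group will be repositioned.

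For each large group $g$, let $C_1^g, C_2^g, \ldots, C_{K_g}^g$ denote the classes produced by linear shifting, sorted in non-increasing order of size. Because every class except possibly the last contains exactly $Q$ items, we have $|C_{j+1}^g| \leq |C_j^g|$, so I can fix an injection $\sigma_j^g : C_{j+1}^g \to C_j^g$ for each $j \geq 1$. By the group constraint on $\mathcal{P}$, the large and medium items of group $g$ occupy pairwise distinct bins. To build $\mathcal{P}'$, I first remove from $\mathcal{P}$ every large or medium item of every large group, and then, for each such group $g$ and each $j \geq 1$, I place each rounded item $\ell \in C_{j+1}^g$ into the bin that originally held $\sigma_j^g(\ell)$; the items of $C_1^g$ are simply dropped, since they are not present in $I'$.

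It remains to verify that $\mathcal{P}'$ is feasible. For the group constraint: within each large group $g$, the new positions of rounded items form a subset of the pairwise distinct bins that previously held $g$'s large and medium items in $\mathcal{P}$, and every remaining item stays in its $\mathcal{P}$-bin, so no two items of the same group share a bin. For bin capacity: every affected bin loses an original item from some class $C_j^g$ and gains at most one rounded item from $C_{j+1}^g$, whose rounded size equals the maximum size in $C_{j+1}^g$ and is therefore at most the minimum -- and hence any -- size in $C_j^g$. The step I expect to deserve the most care is this capacity check, because the reassignments are carried out simultaneously across all large groups; it is essential that each replacement is a weak per-group size decrease in its bin, so that the total load in every bin is non-increasing regardless of which large groups contribute items to that bin. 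Concluding, $\mathcal{P}'$ is a feasible packing of $I'$ using at most $OPT(I)$ bins, which yields $OPT(I') \leq OPT(I)$.
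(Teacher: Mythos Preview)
Your proof is correct and follows essentially the same approach as the paper: place the rounded items of class $j+1$ into the bins that held the original class-$j$ items of the same group, and drop class $1$. Your treatment is slightly more explicit (the injections $\sigma_j^g$ and the per-group, per-bin capacity check across multiple large groups), but the underlying idea is identical.
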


\begin{lemma}
\label{lem:shiftingCanBeUsedForI}
 Given a feasible packing of $I'$ in $OPT$ bins, we can find a feasible packing of $I$ in $(1+O(\eps))OPT$ bins. 
\end{lemma}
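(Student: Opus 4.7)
The plan is to convert the given packing of $I'$ in $OPT$ bins into a feasible packing of $I$ in two steps: (a) undo the rounding in place on the items that survived the shifting, and (b) pack the discarded items separately in a small number of additional bins. Together these steps should cost only $O(\eps) \cdot OPT$ extra bins.

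For step (a), I would observe that the items whose sizes differ between $I$ and $I'$ are exactly the large and medium items of large groups that survived class $1$ of the shifting, and each such item has a weakly larger size in $I'$ than in $I$. Therefore, replacing each such rounded item in the packing of $I'$ by the corresponding original item of $I$ cannot cause overflow in any bin and cannot introduce a new group conflict, since the set of items assigned to each bin is unchanged. Small items, together with all items of small groups, coincide in $I$ and $I'$ and require no adjustment. This yields a feasible packing of $I \setminus D$ in $OPT$ bins, where $D$ denotes the set of items discarded by the shifting procedure.

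For step (b), the main task is to bound $|D|$. Linear shifting is applied independently in each large group with parameter $Q = \lfloor \eps^{2k+4} \cdot OPT \rfloor$, discarding the $\max\{Q,1\}$ items of class $1$ in each large group. Using Lemma~\ref{lem:few_large_groups}, which bounds the number of large groups by $\eps^{-(2k+3)}$, I expect
\[
|D| \;\leq\; \eps^{-(2k+3)} \cdot \bigl( \eps^{2k+4} \cdot OPT + 1 \bigr) \;=\; \eps \cdot OPT + O(1) .
\]
Placing each item of $D$ in its own dedicated bin gives a trivially feasible packing of $D$ using $|D|$ bins (single-item bins never violate capacity nor group constraints). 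Adding these to the $OPT$ bins from step (a) yields a feasible packing of $I$ in $(1+\eps) \cdot OPT + O(1) = (1+O(\eps)) \cdot OPT$ bins, as required.

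The argument is almost entirely bookkeeping. The single delicate point is the balance between the per-group shifting loss $Q$ and the upper bound on the number of large groups, which is the reason the parameter $Q$ was chosen as $\lfloor \eps^{2k+4} \cdot OPT \rfloor$ in Section~\ref{sec:Rounding of Large and Medium Items}: the product of the two bounds contributes exactly an $\eps$ fraction of $OPT$. No ideas beyond the classical linear shifting accounting appear to be needed here; the conflict structure plays no role in step (b) because each extra bin holds a single item.
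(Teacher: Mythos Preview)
Your proposal is correct and follows essentially the same approach as the paper: un-round in place (each rounded size dominates the original, so no overflow or new conflict arises), then count discarded items as at most $\eps^{-(2k+3)}\cdot Q = O(\eps)\,OPT$ via Lemma~\ref{lem:few_large_groups} and pack them in separate bins. The paper's write-up is slightly terser (it does not track the additive $+1$ from $\max\{Q,1\}$ and does not specify how the discarded items are packed), but the argument is the same.
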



Next, we round the sizes of large items in small groups. As the number of these groups may be large, we use the following {\em Shift \& Swap} technique.
We merge all of the large items in small groups into a single group, to which we apply linear shifting with parameter $Q = \lfloor{2\eps \cdot OPT}\rfloor$.
 In addition to items in class 1, which are discarded due to linear shifting, 
 we also discard the items in the last size class; these items are packed in a new set of bins
 (see the proof of Lemma~\ref{lem:discardFinal} in Appendix~\ref{Omitted Proofs}). 
 
\begin{lemma}
	\label{lem:rounding1}
	After rounding, there are at most $O(1)$ distinct sizes of large and medium items from large groups, and large items from small groups.
\end{lemma}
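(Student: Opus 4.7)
The approach is to bound the number of distinct sizes produced by each of the two shifting operations separately and sum the contributions, arguing that each is bounded by a function of $\eps$ (and $k$) alone.

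First, I would bound the number of large and medium items within a single large group. Every large or medium item has size at least $\eps^{k+1}$, and since items of the same group must go to distinct bins in any feasible packing, their total size is at most $OPT$; hence any large group $j$ contains $N_j \leq OPT / \eps^{k+1}$ large and medium items. Linear shifting with parameter $Q = \lfloor \eps^{2k+4} \cdot OPT \rfloor$ produces at most $\lceil N_j / Q \rceil$ size classes. Under the lower bound assumed on $OPT$ (sufficient to ensure $Q \geq 1$, otherwise the instance has constant size and is handled by the AFPTAS of \cite{jansen1999approximation}), this evaluates to $O(1/\eps^{3k+5})$ distinct sizes per large group. Multiplying by the bound of $1/\eps^{2k+3}$ on the number of large groups from Lemma~\ref{lem:few_large_groups}, the total number of distinct sizes coming from large and medium items of large groups is $O(1/\eps^{5k+8})$.

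Next, I would handle the large items of small groups, which after the Shift \& Swap merge form a single pseudo-group of size at most $OPT / \eps^k$ (each large item has size at least $\eps^k$, and the total size is at most $OPT$). Applying linear shifting with $Q = \lfloor 2\eps \cdot OPT \rfloor$ produces at most $\lceil (OPT / \eps^k) / Q \rceil = O(1/\eps^{k+1})$ distinct sizes.

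Summing both contributions and recalling that $k \in \{1, \ldots, \lceil 1/\eps^2 \rceil\}$ by Lemma~\ref{lem:k_val}, the total is bounded by a function of $\eps$ alone, hence $O(1)$. The only subtlety I anticipate is verifying $Q \geq 1$ for both shifting applications so that the $\lceil N/Q \rceil$ estimates are valid (this fails only when $OPT$ is already at most a constant in $\eps$, a case the paper dispatches by an AFPTAS for $d$-inductive graphs); beyond that, the proof is a routine counting argument using the size lower bounds on large and medium items together with the bound on the number of large groups.
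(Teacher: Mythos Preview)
Your proposal is correct and follows essentially the same approach as the paper: bound the number of large/medium items by $OPT$ divided by the minimum item size, divide by the shifting parameter $Q$ to count size classes, multiply by the number of large groups from Lemma~\ref{lem:few_large_groups}, and sum the two contributions, invoking $k\leq\lceil 1/\eps^2\rceil$ at the end. The only cosmetic differences are that the paper uses the uniform lower bound $\eps^{k+1}$ also for large items from small groups (you use the sharper $\eps^k$), and your remark that items of the same group go to distinct bins is unnecessary for the size bound (the total size of any subset is at most $OPT$ simply because the whole instance fits in $OPT$ unit bins).
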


 Relaxing the {\em feasibility} requirement for the packing of rounded large items from small groups, the statements of Lemma~\ref{lem:shiftingNotIncreaseOPT} and  Lemma~\ref{lem:shiftingCanBeUsedForI} hold for these items as well.
 To obtain a feasible packing of these items, we apply a Swapping subroutine which resolves the possible conflicts caused while packing the items.

\comment
{\subsection{Rounding of Large and Medium Items}
\label{sec:Rounding of Large and Medium Items}

$Q = \lfloor \eps \cdot OPT \rfloor$

Our method of packing the large and medium items (see Section~\ref{sec:Large Items and Medium Items from Large Groups}) runs in polynomial time in $N$ only if there are at most $O(1)$ different sizes of large and medium items, where $0 < \varepsilon < 1$ is a constant. We use rounding of item sizes to obtain this property, by  shifting technique, described as follows for completeness.

{\em Shifting technique} with Parameter $Q$ is a rounding scheme for BP which is described as follows. Given an instance $I$, The items of $I$ are sorted in a non decreasing order by their sizes. Then, the first $Q$ items belong to size class 1, the next $Q$ items belong to size class 2, etc. Then, the items of the last size class are discarded, and for any other size class we round the sizes of its items to the size of the maximal item in the size class. Let $I_L$ be the set of discarded items. The resulting instance is denoted by $I'$ and the following inequalities hold: $OPT(I' \cup I_L) \leq OPT(I) +Q$. A more detailed description of the technique can be found in \cite{Vega1981BinPC}. In particular we apply shifting technique with parameter $Q = \lfloor{\varepsilon \cdot OPT}\rfloor$ on all large and medium items. The discarded items are of total size of $\lfloor{\varepsilon \cdot OPT}\rfloor$. We note that the total number of large and medium items is at most $\frac{OPT}{\eps^{k+1}}$. Thus, by the selection of $Q$, we obtain a constant number of sizes.
}



Our scheme packs in each step a subset of items, using $OPT$ bins, while discarding some items. The discarded items are packed later in a set of $O(\eps) \cdot OPT + 1$ extra bins.
In Section~\ref{sec:Large Items and Medium Items from Large Groups} we pack the large  and medium items 
using enumeration over patterns 
followed by our Swapping algorithm to resolve conflicts. Section~\ref{subsection:NEW_Small_Items} presents an algorithm for packing the small items by combining recursive enumeration (for relatively ``large" items) with
a greedy-based algorithm that utilizes the rounded solution of a linear program
(for relatively ``small" items).
In Section~\ref{sec:putting it all together} we show that the 
components of our scheme combine together to an APTAS for GBP. 

\subsection{Large and Medium Items}
\label{sec:Large Items and Medium Items from Large Groups}
\label{subsection:Medium Items of Small Groups}
The large items and medium items from large groups are packed in the bins using {\em slot patterns}. Let $G_{i_1}, \ldots, G_{i_L}$ be the large groups, and
let `u' be a label representing all the small groups. Given the modified instance $I'$,
a slot is a pair $(s_{\ell},j)$, where $s_{\ell}$ is the rounded size of a large or medium item $\ell \in I'$ and $j \in \{i_1, \ldots, i_L\} \cup \{u\}$. A {\em pattern} is a multiset $\{t_1,\ldots, t_{\beta}\}$ 
for some $1 \leq \beta \leq {\lfloor{\frac{1}{\varepsilon^{k+1}}}\rfloor}$,
where $t_i$ is a slot for each $i \in [\beta]$.\footnote{Recall that the number of medium/large items that fit in a single bin is at most ${\lfloor{\frac{1}{\varepsilon^{k+1}}}\rfloor}$.}

\comment{A slot is characterized by a size, and by a {\em label}. A label can represent explicitly one of the large groups, or can denote `small group' with no indication to which small group it belongs. Denote by $u$ the label for all the small groups. Let $G_{i_1}, \ldots, G_{i_L}$ be the large groups.    
Formally, a slot is a pair $(s_{\ell},j)$, where $s_{\ell}$ is the size of an item $\ell \in I$ and $j \in \{i_1, \ldots, i_L\} \cup \{u\}$. A {\em pattern} is a multiset $\{t_1,\ldots, t_{\beta}\}$ containing at most ${\lfloor{\frac{1}{\varepsilon^{k+1}}}\rfloor}$ elements, where $t_i$ is a slot for each $i \in [\beta]$.  }


\begin{lemma}
\label{lem:dw1}
By using enumeration over patterns, we find a pattern for each bin for the large and medium items, such that these patterns correspond to an optimal solution. The running time is $O(N^{O(1)})$.
\end{lemma}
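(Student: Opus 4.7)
The plan is to enumerate over \emph{configuration vectors}, each specifying how many bins realize every distinct pattern, and output the smallest feasible one. By Lemma~\ref{lem:rounding1} the number of distinct rounded sizes under consideration is $O(1)$, and by Lemma~\ref{lem:few_large_groups} the number of labels is at most $\eps^{-2k-3}+1 = O(1)$, so the total number of distinct slots is $O(1)$. Restricting attention to \emph{valid} patterns---multisets of at most $\lfloor 1/\eps^{k+1}\rfloor$ slots in which each large-group label appears at most once, so that the group constraint is automatically respected inside every such bin---the number $P$ of valid patterns is still a constant depending only on $\eps$.

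Next, I would enumerate all vectors $(n_\pi)_\pi$ of non-negative integers indexed by valid patterns, with $n_\pi \le OPT$ for each $\pi$; there are at most $(OPT+1)^P \le N^{O(1)}$ such vectors. A vector is \emph{feasible} if (i) for every large group $G_{i_j}$ and every rounded size $s$, the number of $(s, i_j)$-slots summed over the chosen bins equals the number of size-$s$ items in $G_{i_j}$, and (ii) for every rounded size $s$, the number of $(s, u)$-slots equals the total number of size-$s$ large items that lie in small groups. Among all feasible vectors I would return one minimizing the total bin count $\sum_\pi n_\pi$. Note that when $u$-slots are filled with large items from small groups, two items from the same small group may initially land in one bin; this is precisely the conflict that the Swapping subroutine resolves without using additional bins.

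To see that this enumeration produces an optimal packing of the items under consideration, extract the configuration vector induced by an optimal packing of $I'$ by labeling each large/medium item with its rounded size and either its large group or the symbol $u$ (if its group is small). The resulting patterns are valid, and the vector is feasible by construction and has total weight $OPT$, so the enumeration returns a vector of weight at most $OPT$. The main obstacle is securing the $O(1)$ bound on $P$, which depends crucially on both Lemma~\ref{lem:rounding1} (constantly many sizes) and Lemma~\ref{lem:few_large_groups} (constantly many large groups); given these, each feasibility test reduces to $O(1)$ integer comparisons, and the overall running time is $N^{O(1)}$.
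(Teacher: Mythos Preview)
Your proposal is correct and follows essentially the same approach as the paper: bound the number of slots by $O(1)$ via Lemmas~\ref{lem:few_large_groups} and~\ref{lem:rounding1}, deduce that the number of patterns is $O(1)$, and then enumerate the $(OPT+1)^{O(1)}=N^{O(1)}$ configuration vectors specifying how many bins use each pattern. The paper does not single out ``valid'' patterns or spell out the feasibility test as explicitly as you do, and it phrases the conclusion as ``one of the enumerated packings corresponds to an optimal solution'' rather than ``return the minimum,'' but these are cosmetic differences rather than a different argument.
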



Given slot patterns corresponding to an optimal solution, large and medium items from large groups can be packed optimally, since they are identified both by a label and a size. On the other hand, large items from small groups are identified solely by their sizes. A greedy packing of these items, relating only to their corresponding patterns, may result in conflicts (i.e., two large items of the same small group are packed in the same bin). Therefore, we incorporate a process of swapping items of the same (rounded) size between their hosting bins, until there are no conflicts.

Given an item $\ell$ that conflicts with another item in bin $b$, for an item $y$
in bin $c$ such that $s_{\ell} = s_y$, $swap(\ell,y)$ is {\em bad} if it causes a conflict (either
because $y$ conflicts with an item in bin $b$, $\ell$ conflicts with an
item in bin $c$, or $c = b$); otherwise, $swap(\ell,y)$ is {\em good}.
We now describe our 
algorithm for packing the large items from small groups. 

Let $\zeta$ be the given slot patterns for $OPT$ bins. Initially, the items are packed by these patterns, where items from small groups are packed ignoring the group constraints. This can be done simply by placing an arbitrary item of size $s$ from some small group in each slot $(s,u)$. If $\zeta$ corresponds to an optimal solution, we meet the capacity constraint of each bin. However, this may result with conflicting items in some bins. 
Suppose there is a conflict in bin $b$. Then for one of the conflicting items, $\ell$,
we find a good $swap(\ell,y)$ with item $y$ in a different bin, such that $s_{y} = s_{\ell}$. We repeat this process until there are no conflicts. We give the pseudocode of Swapping in Algorithm~\ref{Alg:Swapping}. 


\begin{algorithm}[h]
	\caption{$Swapping(\zeta, G_1, \ldots, G_n)$}
	\label{Alg:Swapping}
	\begin{algorithmic}[1]
		\State{Pack the large and medium items from large groups in slots corresponding to their sizes and by labels.}
		\State{Pack large items from small groups in slots corresponding to their sizes.}
		\While{there is an item $\ell$ involved in a conflict}
		\State Find a good $swap(\ell,y)$ and resolve the conflict.
		\EndWhile
	\end{algorithmic}
\end{algorithm}

\begin{theorem}
\label{lem:swap}
Given a packing of large and medium items by slot patterns corresponding to an optimal solution, Algorithm~\ref{Alg:Swapping} resolves all conflicts in polynomial time. 
\end{theorem}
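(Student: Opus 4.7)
The plan is to verify three things in turn: that the only conflicts ever seen by the algorithm involve large items from small groups, that every iteration of the while loop strictly reduces a conflict potential, and that a good swap is always available whenever a conflict remains.

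First, the large and medium items from large groups are placed into slots identified by both a size and a group label, and since the slot patterns are taken from an optimal (hence feasible) solution no label is repeated in a pattern; in particular no two items of the same large group ever share a bin. Thus every conflict that arises during Swapping is between two large items from small groups sitting in $u$-labeled slots of the same rounded size, and every swap performed by the algorithm is between two such items.

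For termination I would track a potential $\Phi$ equal to the number of pairs $(\ell,\ell')$ with $\ell$ and $\ell'$ in the same group and in the same bin. Because $swap(\ell,y)$ is declared good exactly when neither of the two items creates a new conflict in its destination bin, and because moving $\ell$ out of bin $b$ removes the conflicting pair that triggered the swap (and possibly more), $\Phi$ strictly decreases in each iteration. Initially $\Phi = O(N^2)$, and each iteration does polynomial work to scan candidates, so the overall running time is polynomial in $N$.

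The main obstacle is proving that a good swap always exists while $\Phi>0$. Fix a conflicting item $\ell$ in bin $b$, with $G(\ell)$ necessarily a small group, and consider the pool of candidates: items $y$ occupying a $u$-slot of rounded size $s_\ell$ in some bin $c\neq b$. Call $y$ \emph{bad} if $(i)$ the group $G(y)$ already has a large item in $b$, or $(ii)$ the bin $c$ already contains a large item from $G(\ell)$. Bin $b$ holds at most $\lfloor 1/\eps^k\rfloor$ large items, drawn from at most that many distinct small groups, and by definition a small group contains at most $\eps^{k+2}\cdot OPT$ large items; the same bound applies to $G(\ell)$ itself. Hence each of $(i)$ and $(ii)$ rules out at most $\eps^2\cdot OPT$ candidates. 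On the other hand, the Shift \& Swap rounding merged all large items from small groups into a single group and applied linear shifting with parameter $Q=\lfloor 2\eps\cdot OPT\rfloor$, so every surviving rounded size class contains exactly $Q$ items; after excluding the at most $\lfloor 1/\eps^k\rfloor$ of them that live in bin $b$, the candidate pool still has size at least $2\eps\cdot OPT-1/\eps^k-1$. The standing assumption $OPT>3/\eps^{k+2}$ makes this strictly larger than the $2\eps^2\cdot OPT$ bad candidates, so a good swap always exists, which combined with the termination argument proves the lemma.
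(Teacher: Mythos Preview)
Your proof is correct and follows essentially the same route as the paper: both arguments observe that conflicts occur only among large items from small groups, bound the number of bad swap partners for a fixed conflicting item $\ell$ by roughly $2\eps^2\cdot OPT$ via the two cases $(i)$ and $(ii)$, and compare this against the size $\lfloor 2\eps\cdot OPT\rfloor$ of the rounded size class (exploiting that the last class was discarded) together with the assumption $OPT>3/\eps^{k+2}$. The only cosmetic difference is in the termination argument: you use a potential $\Phi$ counting conflicting pairs (initially $O(N^2)$), whereas the paper bounds the total number of conflicts directly by $N/\eps^k=O(N)$, yielding a slightly sharper $O(N^2)$ running time versus your $O(N^3)$; both are polynomial, so the distinction is immaterial.
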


\comment{
\negA
\negA
\begin{proof}
We prove that for each conflict involving an item $\ell \in G_i$ of size $s_\ell$ in bin $b$, there is an item $y \in G_j \neq G_i$ of size $s_{y} = s_\ell$ in bin $c \neq b$, such that $swap(\ell,y)$ is good. Consider 
a packing of large and medium items by a slot pattern corresponding to an optimal solution. Then, the items are packed in $OPT$ bins with no
overflow, and the only conflicts may occur among items from small groups.

Due to shifting with parameter $Q = \lfloor 2\eps \cdot OPT \rfloor$ for large items from small groups, there are $\lfloor 2\eps \cdot OPT \rfloor-1$ items of size $s_\ell$ in addition to $\ell$ (recall that the last size class, which may contain less items, is discarded).
 We prove that the number of items $y$ for which $swap(\ell,y)$ is bad is at most $\lfloor 2\eps \cdot OPT \rfloor-2$; therefore, there exists an item $y$ of size $s_{\ell}$, for which $swap(\ell,y)$ is good. 
 We note that $swap(\ell,y)$ is bad if (at least) one of the following holds: $(i)$ $y$ belongs to a group $G_j$ which has an item in bin $b$, or
 $(ii)$ there is an item from $G_i$ in bin $c$.

We handle $(i)$ and $(ii)$ separately. $(i)$ The number of items of size $s_\ell$ from groups $G_j$ that contain an item in bin $b$ is bounded by $\frac{1}{\eps^{k}} \cdot \eps^{k+2} OPT = \eps^2 OPT$ 
 (i.e., the number of large items from small groups in bin $b$ times the number of items in a small group that has an item in $b$). 
 Indeed, since all of 
 these groups are small, each group contains at most $\eps^{k+2} OPT$ large items. $(ii)$ The number of items of size $s_\ell$ in bins which contain items from $G_i$ is bounded by $\frac{1}{\eps^{k}} \eps^{k+2} OPT = \eps^2 OPT$ (i.e., the number of items in a bin times the number of items in group $G_i$); since $G_i$ is small, it contains at most $\eps^{k+2} OPT$ large items. 
Using the union bound, the number of bad swaps 
  for $\ell$, i.e., $swap(\ell,y)$ for some item $y$,
   is at most 
   $2\eps^2 OPT$. We have $2\eps^2OPT < \eps OPT < \eps OPT+\eps OPT-3 \leq \lfloor 2\eps \cdot OPT \rfloor-2.$ The first inequality holds since we may assume that $\eps < \frac{1}{2}$.
For the second inequality, we note that $OPT > \frac{3}{\eps^{k+2}} > \frac{3}{\eps}$. 

We conclude that in the size class of $\ell$ there is an item $y$
such that $swap(\ell,y)$ is good. 
We now show that the Swapping algorithm is polynomial in $N$. We note that items of some group are in conflict only if they are placed in the same bin. As these are only large items, an item may conflict with at most $\frac{1}{\eps^{k}}$ items. Hence, there are at most $\frac{N}{\eps^{k}} = O(N)$ conflicts.
As finding a good swap takes at most $O(N)$, the overall running time of Swapping is $O(N^2)$. \qed
\end{proof}
}


 We use the Swapping algorithm for each possible guess of patterns to obtain a feasible packing of the large items and medium items from large groups in $OPT$ bins. 
 \comment{
 Unfortunately, even if we pack the small items from small groups in an optimal manner, we may still violate the group constraints, as the feasible packing of large items from small groups may not be optimal. Indeed, this is due to the fact that slots allocated to items from small groups do not specify the group from which an item is selected. This may cause a conflict between  a small item and a large item in the same small group, when packed in the same bin. We resolve such conflicts  in the process of packing the small items from small groups (see Section~\ref{subsection:Small Items from Small Groups}).
}

Now, we discard the medium items from small groups and pack them later in a new set of bins with other discarded items. 
This requires only a small number of extra bins 
(see the proof of Lemma~\ref{lem:discardFinal} in Appendix~\ref{Omitted Proofs}). 

\comment{
\mysubsection{Small Items from Large Groups}
\label{subsection:Small Items from Large Groups}

We now partition the $OPT$ bins, partially packed with large and medium items,
into {\em types}. Each type contains bins having the same total size of packed large/medium items; also, the items packed in each bin type belong to the same subset of large groups, and the same number of slots is allocated in these bins to items from small groups.
\negA
\negA

\begin{lemma}
\label{lem:classes}
There are $O(1)$ types of bins.
\end{lemma}

\negA
\negA

Algorithm $EnumGroups$ enumerates over all feasible packings of the small items from large groups, by assigning groups of items to {\em bin types}. In the sequel, the items assigned to each type are packed in the bins belonging to this type. Given a set of groups with remaining small items, $G_1, \ldots , G_W$, and the bin types $B_1, \ldots, B_R$, we first apply shifting to the items of each group $G_1, \ldots , G_W$, separately, with parameter $Q = \left \lfloor{\varepsilon^{2k+4} \cdot OPT}\right \rfloor$. Then, the set of (at most) $Q$ largest items are discarded from each group. Similar to the rounding for large items from large groups, Lemmas~\ref{lem:shiftingNotIncreaseOPT} and~\ref{lem:shiftingCanBeUsedForI} guarantee that a packing of the rounded instance can be used for packing the original instance, with $O(\eps)OPT$ extra bins used for the discarded items. An assignment of groups to the bin types is {\em feasible} if, for every bin type $B_{\ell}$, the following conditions are met: $(i)$ there are at most $|B_{\ell}|$ items from each group $G_j$ assigned to $B_{\ell}$, and $(ii)$ the total size packed in $B_{\ell}$, including the previously packed medium and large items, is at most $|B_{\ell}|$.

We now describe the steps of EnumGroups. 
Initially, the algorithm guesses a feasible partition of the items in $G_j$ among the bin types $B_1,\ldots,B_R$, for each group $G_j$, $1 \leq j \leq W$.
Let $G_1(B_\ell), \ldots, G_W(B_\ell)$ be the set of small items assigned to bin type $B_\ell$, $1 \leq \ell \leq R$. We start by packing the items in $G_1(B_\ell)$; then, the items in $G_2(B_\ell), \ldots , G_W(B_\ell)$ are packed by a recursive call to EnumGroups.
All feasible partitions of groups to bin types are enumerated by EnumGroups. In this process, a partition that corresponds to an optimal solution is considered. We give the pseudocode of EnumGroups in Algorithm~\ref{Alg:small item from large groups} (see Appendix~\ref{sec:Algorithm LargeGroups}).


\negA
\negA

\begin{theorem}
\label{lem:larggroupComplexity}
\label{lem:larggroupCorrectness}
Let $W \geq 1$ be some constant. Then Algorithm~\ref{Alg:small item from large groups} outputs in polynomial time a packing which corresponds to an optimal packing of the items in $G_1, \ldots , G_W$.
\end{theorem}
\negA
\negA
 \negA

\comment{\begin{lemma}
\label{lem:larggroupComplexity}

Let $W \geq 1$ be some constant. Then the running time of Algorithm~\ref{Alg:small item from large groups} is polynomial in $N$.
\end{lemma}}

\comment{\begin{lemma}
\label{lem:discarded}
The discarded items from the shifting technique can be packed in $2\varepsilon \cdot OPT+1$ additional bins.
\end{lemma}}

\negA

\comment{\begin{algorithm}[h]
	\caption{$GreedyPack(\{G_{i_1}^s, \ldots, G_{i_H}^s\}, b_1, \ldots, b_{OPT})$}
	\label{Alg:greedyPack}
	\begin{algorithmic}[1]
		\State{Sort the bins in a non-increasing order by the total size of packed items. \label{step:greedy}}
		\State{Renumber the bins from $1$ to $OPT$ using this order.}
		\For {$j = 1,\ldots,H$}
		\State{Sort $G^s_{i_j}$ in a non-increasing order by sizes.}
		\State{Let $y_{i_j}$ be the largest item in $G_{i_j}^s$.}
		\EndFor
		\For {$z = 1,\ldots,OPT$}
		\State{Add to bin $z$ the items $y_{i_1}, \ldots, y_{i_H}$.}
		\While{total size of bin $z$ $>$ 1}
		\State{Select a group $G_{i_j}^s \in \{G_{i_1}^s, \ldots, G_{i_H}^s\}$ s.t. $y_{i_j}$ is not last in $G_{i_j}^s$.}
		\If {cannot complete last step}
		\State{return $failure$}
		\EndIf
		\State{Return $y_{i_j}$ to $G_{i_j}^s$.}
		\State{Let $y'_{i_j}$ to be the next largest item in $G_{i_j}^s$.}
		\State{Add $y'_{i_j}$ to bin $z$. }
		\EndWhile
		\For {$j = 1,\ldots,H$}
		\If {$G_{i_j}^s$ has a large item in bin $z$} 
		\State{\label{step:discardsmalloverlarge} discard the small item.}
		\EndIf
		\EndFor
		\EndFor
	\end{algorithmic}
\end{algorithm}}

\negA
}
\subsection{Small Items}
\label{subsection:NEW_Small_Items}

Up to this point, all large items and the medium items from large groups are feasibly packed in $OPT$ bins. We proceed to pack the small items. Let $I_0, B$ be the set of unpacked items and the set of $OPT$ partially packed bins, respectively. The packing of the small items is done in four phases: an {\em optimal phase}, an {\em eviction phase}, a {\em partition phase} and a {\em greedy phase}. 

The optimal phase is an iterative process consisting of a constant number of iterations.
In each iteration, a subset of bins is packed with a subset of items whose (rounded) sizes are large relative to the free space in each of these bins. As these items belong to a 
{\em small} collection of groups among $G_1, \ldots, G_n$, they can be selected using enumeration. Thus, we obtain a packing of these items which corresponds to an optimal solution.
For packing the remaining items, we want each item to be small relative to the free space in its assigned bin. To this end, in the eviction phase  we discard from some bins items
of non-negligible size (a single item from each bin).
Then, in the partition phase, the unpacked items are partitioned into a constant number of sets satisfying certain properties, which guarantee that these items can be feasibly packed 
in the available free space in the bins. Finally, in the greedy phase, the items in each set are packed in their allotted subset of bins greedily, achieving a feasible packing of all items, except for a small number of items from each group, of small total size. The pseudocode of our algorithm for packing the small items is given in Algorithm~\ref{Alg:PackSmallItems}.

\myparagraph{The optimal phase}
For any $b \in B$, denote by $f^0_b$ the free capacity in bin $b$, i.e., $f^0_b = 1-\sum_{\ell \in b} s_{\ell}$. We say that item $\ell$ is {\em $b$-negligible} if $s_{\ell} \leq \eps^2 f^0_b$, and $\ell$ is {\em $b$-non-negligible} otherwise. We start by classifying the bins into two disjoint sets. 
Let $E_0 = \{b \in B |~ 0 < f^0_b < \eps \}$ and $D_0 = B \setminus E_0$.

We now partition $B$ into {\em types}. Each type contains bins having the same total size of packed large/medium items; also, the items packed in each bin type belong to the same set of large groups, and the same number of slots is allocated in these bins to items from small groups. Formally, for each pattern $p$ we denote by $t_{p}$ the subset of bins packed with $p$.\footnote{For the definition of patterns see Section~\ref{sec:Large Items and Medium Items from Large Groups}.} Let $T$ denote the set of bin types. Then $|T|=|P|$, where $P$ is the set of all patterns. The {\em cardinality} of type $t \in T$ is the number of bins of this type.
We use for the optimal phase algorithm $RecursiveEnum$ (see the pseudocode in Algorithm~\ref{Alg:RecursiveEnum}). 

\begin{lemma}
\label{lem:classes}
There are $O(1)$ types before Step~\ref{step:beg} of Algorithm~\ref{Alg:RecursiveEnum}.
\end{lemma}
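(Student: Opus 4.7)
The plan is to bound $|T|$ by showing that $|P|$, the number of distinct slot patterns usable for packing the large and medium items, is a constant depending only on $\eps$. Since each type corresponds to a pattern and $|T|=|P|$, a constant bound on $|P|$ yields the claim.

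First, I would count the number of distinct slots $(s_\ell,j)$ that can appear in a pattern. By Lemma~\ref{lem:rounding1}, after the Shift \& Swap rounding there are only $O(1)$ distinct rounded sizes for large and medium items from large groups and for large items from small groups. The label $j$ ranges over $\{i_1,\dots,i_L\}\cup\{u\}$, and by Lemma~\ref{lem:few_large_groups} the number of large groups satisfies $L \leq \eps^{-(2k+3)}$, which is a constant since $k \leq \lceil 1/\eps^2 \rceil$. Hence the total number of distinct slots is $O(1) \cdot O(1) = O(1)$.

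Next I would bound the size of a pattern. Every large or medium item has size at least $\eps^{k+1}$, so a single bin contains at most $\lfloor 1/\eps^{k+1}\rfloor$ such items. Consequently every pattern is a multiset of at most $\lfloor 1/\eps^{k+1}\rfloor$ slots drawn from a universe of $O(1)$ slots. The number of such multisets is bounded by a standard stars-and-bars count, which is again $O(1)$ (a constant depending only on $\eps$). Therefore $|P|=O(1)$, and since each bin type $t_p$ is indexed by a pattern $p$, we conclude $|T|=|P|=O(1)$.

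The only subtlety I anticipate is the qualifier ``before Step~\ref{step:beg}'': at that point the bin types reflect only the packing of large and medium items (and the slot reservations for small groups), so the counting above applies directly. No partitioning into $E_0,D_0$ or subsequent refinements made later in $RecursiveEnum$ has been performed yet, so the pattern-based characterization fully determines the types. This is essentially a routine enumeration argument once Lemmas~\ref{lem:few_large_groups} and~\ref{lem:rounding1} are in hand; the main thing to be careful about is that the small-group label $u$ contributes only a single additional label rather than one per small group, which is precisely why the count stays constant despite potentially many small groups.
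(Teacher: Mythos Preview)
Your proposal is correct and follows essentially the same counting argument as the paper: both use Lemma~\ref{lem:rounding1} to bound the number of distinct rounded sizes, Lemma~\ref{lem:few_large_groups} to bound the number of labels, and the $\lfloor 1/\eps^{k+1}\rfloor$ cap on items per bin to bound the multiset size. The only cosmetic difference is that you bound $|P|$ directly via the number of slot multisets (leveraging the stated identity $|T|=|P|$), whereas the paper instead multiplies bounds on the number of distinct total sizes, the number of subsets of large groups, and the number of small-group slot counts; your route is arguably the more direct of the two given the formal definition of types.
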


Once we have the classification of bins, each type $t$ of cardinality smaller than $1/\eps^4 $ is padded with empty bins so that 
$|t| \geq 1/\eps^4$. 
An item $\ell$ is {\em $t$-negligible} if $\ell$ is $b$-negligible for all bins $b$ of type $t$ (all bins in the same type have the same free capacity), and {\em $t$-non-negligible} otherwise. Denote by $I'_t$ the large/medium items that are packed in the bins of type $t$, and let $I_t(g)$ be the set of small items that are packed in $t$ in some solution $g$ (in addition to $I'_t$).
For any $1 \leq i \leq n$, 
a group $G_i$ is {\em $t(g)$-significant} if $I_t(g)$ contains at least $\eps^4 |t|$ $t$-non-negligible items from $G_i$, and $G_i$ is {\em $t(g)$-insignificant} otherwise.

{\em RecursiveEnum} proceeds in iterations.
In the first iteration, it
{\em guesses} for each type $t \subseteq E_0$ a subset of the items $I_{t}(g_{opt}) \subseteq I_0$, where $g_{opt}$ corresponds to an optimal solution for completing the packing of $t$. Specifically, $RecursiveEnum$ initially guesses $L(t,g_{opt})$ groups that are $t(g_{opt})$-significant: $G_{i_1}, \ldots, G_{i_{L(t,g_{opt})}}$. For each $G_{i_j}, j \in \{1,\ldots,L(t,g_{opt})\}$, the algorithm guesses which items of $G_{i_j}$ are added to $I_{t(g_{opt})}$. Since the number of guesses might be exponential, we apply to $G_{i_j}$
linear shifting as follows. 
Guess $\lceil \frac{1}{\eps^3} \rceil$ {\em representatives} in $G_{i_j}$, of sizes $s_{\ell_1} \leq s_{\ell_2} \leq \ldots s_{\ell_{\lceil 1/\eps^3 \rceil}}$. 
The $k$th representative is the largest item in size class $k$, $1 \leq k \leq \lceil \frac{1}{\eps^3} \rceil$ for the linear shifting of $G_{i_j}$ in type $t$.
Using the parameter $Q^t_{i_j} = \eps^3|t|$, the item sizes in class $k$ are rounded up to $s_{\ell_k}$, for $1 \leq k \leq \lceil \frac{1}{\eps^3} \rceil$. Given a correct guess of the representatives, the actual items in size class $k$ are selected at the end of 
algorithm $RecursiveEnum$ (in Step~\ref{step:sizeclassGreedy}).
 Denote the chosen items from $G_{i_j}$ to bins of type $t$ by $G_{i_j}^t$.

We now extend the definition of patterns for each type $t$. A slot is a pair  $(s_{\ell},j)$, where $s_{\ell}$ is the (rounded) size of a $t$-non-negligible item $\ell \in 
I_t(g_{opt})$, and there is a label for each $t(g_{opt})$-significant group $G_{i_j}$,
$j \in \{1, \ldots, L(t,g_{opt})\}$.\footnote{Note that we do not need a label for the $t(g_{opt})$-insignificant groups, because their items are packed separately.} A {\em t-pattern} is a multiset $\{q_1,\ldots, q_{\beta_t}\}$ containing at most ${\lfloor{\frac{1}{\varepsilon^{2}}}\rfloor}$ elements, where $q_i$ is a slot for each 
$i \in \{1, \ldots, \beta_t \}$. Now, for each type $t \in T$ we use enumeration over patterns for assigning  $G_{i_1}^t, \ldots, G_{i_{L(t,g_{opt})}}^t$ to bins in $t$. This completes the first iteration, and the algorithm proceeds recursively. 

We now update $D_0, E_0$ for the next iteration by removing from $E_0$ bins $b$ that have a considerably large free capacity with respect to $f^0_b$. For each $b \in B$, let $f^1_{b}$ be the capacity available in $b$ after iteration 1.
Then $E_1 = \{b \in E_0 |~ 0 < f^1_b < \eps f^0_b\}$ and 
$D_1 = B \setminus E_1$.

Now, each type $t \in T$ is partitioned into {\em sub-types} that differ by the packing of $I_t(g_{opt})$ in the first iteration. 
The set of types $T$ is updated to contain these sub-types.
At this point, a recursive call to $RecursiveEnum$ computes for each bin type $t \subseteq E_1$ a guessing and a packing of its $t$-non-negligible items.\footnote{An item is $b$-non-negligible w.r.t $f^1_b$ in this iteration, or w.r.t $f^h_b$ in iteration $h+1, h \in \{0,\ldots,\alpha-1\}$.} We repeat this recursive process $\alpha = \frac{1}{\eps}+5 = O(1)$ times. 

Let $G_i^t$ be the subset of items (of rounded sizes) assigned from $G_i$ to bins of type $t$ at the end of $RecursiveEnum$, for $1 \leq i \leq n$ and $t \in T$. Recall that  the algorithm did not select specific items in $G_i^t$; that is, we only have their rounded sizes and the number of items in each size class. The algorithm proceeds to pack items from $G_i$ in all types $t$ for which $G_i$ was $t(g_{opt})$-significant in some iteration.
Let $T_{G_i}$ be the set of these types. The algorithm considers first the type 
$t \in T_{G_i}$ for which the class $C$ of largest size items contains the item of maximal size, where the maximum is taken over all types $t \in T_{G_i}$.
The algorithm packs in bins of type $t$ the $Q_i^t$ largest remaining items in $G_i$ in the slots allocated to items in $C$; it then proceeds similarly to the remaining size classes in types $t \in T_{G_i}$ and the remaining items in $G_i^t$. 

\comment{
\begin{lemma}
$E_{\alpha} \subseteq E_{\alpha-1} \subseteq \ldots E_0$. In addition, the total size available in $E_0$ is at most $\eps OPT$. 
\end{lemma}

\begin{proof}
Let there be a bin $b \in E_i$. After the i+1 iteration, this bin can either be added to $E_{i+1}$ 
    $E_i \subseteq E_{i-1}, i \in \{1,\ldots,\alpha\}$ 
\end{proof}
}

\begin{algorithm}[htb]
	\caption{$RecursiveEnum(I_0,B)$}
	\label{Alg:RecursiveEnum}
	\begin{algorithmic}[1]
		\State{Let $f^0_b$ be the remaining free capacity in bin $b \in B$. \label{step:beg}}
        \State{Let $E_0 = \{b \in B | 0 < f^0_b < \eps \}$ and $D_0 = B \setminus E_0$.}
		\State{Denote by $T$ the collection of bin types.}
		\For {$h = 0,\ldots,\alpha$}
		\For{all types $t \subseteq E_h$}
		\If {$|t| < \frac{1}{\eps^4}$}
		\State{increase the cardinality of $t$ to $\frac{1}{\eps^4}$.\label{step:padding}}
		\EndIf
		\State{Guess $t(g_{opt})$-significant groups: $G_{i_1}, \ldots, G_{i_{L(t,g_{opt})}}$\label{step:t-significantGroups}}
		\For{$j = 1,\ldots,L(t,g_{opt})$}
		\State{Guess the number of items from $G_{i_j}$ to be added to bins of type $t$.\label{step:GuessNumberOfItems}}
		\State{Guess a representative for each size class of $t$-non-negligible items of $G_{i_j}$ 
		$~~~~~~~~~~~~~~~~$ for linear shifting.\label{step:Representative}}
		\EndFor
		\State{Guess $|t|$ $t$-patterns for bins in $t$ using the sizes after linear shifting of
			$~~~~~~~~~~~~~~~~$ $G_{i_1}, \ldots ,G_{i_{L(t,g_{opt})}}$.\label{step:t-pattern}}
		\State{Replace type $t$ in $T$ by all of the sub-types of $t$.\label{step:subTypes}}
		\EndFor
		\State{Let $f^{h+1}_b$ be the remaining free capacity in bin $b \in B$.}
		
		\State{Let $E_{h+1} = \{b \in E_h | 0 < f^{h+1}_b < \eps f^h_b\}$ and 
			$D_{h+1} = B \setminus E_{h+1}$.
			\label{step:UpdateDE}}
		
		\EndFor
		\State{Complete the packing of all size classes by assigning items greedily.\label{step:sizeclassGreedy}}
	\end{algorithmic}
\end{algorithm}

\begin{lemma}
\label{lem:recursiveOPT}
The following hold for RecursiveEnum: (i) the running time is polynomial; (ii) the increase in the number of bins in Step~\ref{step:padding} is at most $\eps OPT$; (iii) In Step~\ref{step:Representative} we discard at most $\eps OPT$ items from each group of total size at most $\eps OPT$. (iv) One of the guesses in Steps~\ref{step:t-significantGroups},~\ref{step:Representative} corresponds to an optimal solution.
\end{lemma}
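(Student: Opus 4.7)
My strategy is to handle the four claims in order, underpinned by a single structural invariant: the cardinality of the type collection $T$ remains $O(1)$ throughout all iterations of $RecursiveEnum$. Lemma~\ref{lem:classes} gives this at the base; in Step~\ref{step:subTypes} each type is split into sub-types indexed by the chosen $t$-pattern, and since every $t$-pattern uses at most $\eps^{-2}$ slots drawn from an $O(1)$-size alphabet of (size, label) pairs, only $O(1)$ distinct $t$-patterns exist, so each type produces $O(1)$ sub-types. With $\alpha = O(1)$ iterations, $|T|$ therefore remains $O(1)$. For (i), I would then bound the guess space per (type, iteration) pair. A bin of type $t$ holds at most $\eps^{-2}$ $t$-non-negligible items (each consumes at least an $\eps^2$-fraction of the residual capacity), so type $t$ has at most $\eps^{-2}|t|$ such items and hence at most $\eps^{-6}$ groups can be $t(g_{opt})$-significant. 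Step~\ref{step:t-significantGroups} thus contributes $n^{O(1)}$ guesses; Step~\ref{step:GuessNumberOfItems} gives $N^{O(1)}$; Step~\ref{step:Representative} (picking $\lceil \eps^{-3}\rceil$ representatives from $N$ items per significant group) gives $N^{O(1)}$; and Step~\ref{step:t-pattern} (distributing $O(1)$ distinct patterns among the bins of $t$) gives $N^{O(1)}$. Multiplying by the $O(1)$ (type, iteration) pairs yields a polynomial overall running time.

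For (ii), Step~\ref{step:padding} affects at most $O(1)$ types across all iterations, each padded to cardinality $\eps^{-4}$, so the total additive increase is $O(\eps^{-5})$; this is at most $\eps\, OPT$ provided $OPT$ is sufficiently large in terms of $\eps$, a condition that either strengthens the stated assumption $OPT > 3/\eps^{k+2}$ or is absorbed by invoking the AFPTAS of~\cite{jansen1999approximation} for small $OPT$. For (iii), linear shifting of $G_{i_j}$ in type $t$ with parameter $Q^t_{i_j} = \eps^3|t|$ discards at most $\eps^3|t|$ items; so the discards from $G_{i_j}$ within one iteration sum to at most $\eps^3 \sum_{t} |t| \le \eps^3\, OPT$, and summing over the $\alpha = \eps^{-1}+5$ iterations yields $(\eps^2 + 5\eps^3)OPT \le \eps\, OPT$ for small $\eps$. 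Since each discarded item has size at most $1$, the total discarded size per group is likewise at most $\eps\, OPT$.

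For (iv), fix an optimal solution $g_{opt}$ and argue by induction on the iteration index $h$. In iteration $h$ and for each $t \subseteq E_h$, the list of $t(g_{opt})$-significant groups is a subset of size $O(1)$ that is tried explicitly in Step~\ref{step:t-significantGroups}, and the largest item in each of the $\lceil \eps^{-3}\rceil$ shifting classes of $G_{i_j}$'s $t$-non-negligible items under $g_{opt}$ forms one of the tuples tried in Step~\ref{step:Representative}; the associated counts and $t$-patterns in Steps~\ref{step:GuessNumberOfItems} and~\ref{step:t-pattern} are likewise enumerated, so one branch of the recursion replays $g_{opt}$'s decisions at every level. The main obstacle, I expect, lies in (i): one must simultaneously control the recursive proliferation of sub-types, the nested guessing over groups, representatives, and patterns, and the depth of recursion—so the $O(\eps^{-6})$ bound on the number of $t(g_{opt})$-significant groups is the key combinatorial lever that ties together Lemma~\ref{lem:classes}, the $O(1)$ bound on distinct $t$-patterns, and the constant recursion depth to give a uniform polynomial-time guarantee.
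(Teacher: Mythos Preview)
Your approach matches the paper's: both establish the $|T|=O(1)$ invariant (via Lemma~\ref{lem:classes} and the constant bound on distinct $t$-patterns), bound the number of $t(g_{opt})$-significant groups per type by a constant, count the items discarded by shifting as at most $\eps^3|t|$ per type-iteration pair and sum over the $\alpha=O(1/\eps)$ iterations, and argue~(iv) by exhaustive enumeration over the guesses. One minor slip: your $O(\eps^{-5})$ bound in~(ii) understates the number of types---$|T|$ can be doubly exponential in $1/\eps$---but since you correctly conclude that the total padding is a constant depending only on~$\eps$ and hence at most $\eps\,OPT$ for sufficiently large $OPT$, this does not affect the validity of the argument.
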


\myparagraph{The eviction phase}
One of the guesses in the optimal phase corresponds to an optimal solution. For simplicity, henceforth assume that we have this guess. Recall that $E_{\alpha}$ is the set of all bins $b$ for which $0 < f^\alpha_b < \eps f^{\alpha-1}_b$. In Step~\ref{step:eviction} of $PackSmallItems$ (Algorithm~\ref{Alg:PackSmallItems})
 we evict an item from each $b \in E_{\alpha}$ such that the available capacity of $b$ increases to at least $\frac{f^\alpha_b}{\eps}$. This is done greedily: 
 consider the bins in $E_\alpha$ one by one in arbitrary order. From each bin
 discard a small item $\ell \in G_i$, for some $G_i$, $1 \leq i \leq n$, such that the following hold: (i) $s_\ell \geq \frac{f^\alpha_b}{\eps}$, and (ii) less than $\eps OPT$ items were discarded from $G_i$ in this phase.
 Since $\alpha$ is large enough, this phase can be completed successfully, as shown below. Let $T = \{t_1, \ldots,t_{\mu}, t'\}$ be the types after the optimal phase, where $t'$ is a new type such that $|t'| = \eps OPT$. Bins of type $t'$ are empty, i.e., each bin $b$ of type $t'$ has free space $1$.   
Denote by $f(t)$ the free space in each bin $b$ of type $t$ after the eviction phase, and let $I_L$ be the large items from small groups (already packed in the bins). 

\begin{lemma}
\label{lem:eviction}
After Step~\ref{step:ExtraType} of $PackSmallItems$ there exists a partition of $I_{\alpha}$ into types $I_{t_1},\ldots,I_{t_\mu}, I_{t'}$, for which the following hold. For each $t \in T$, (i) 
$|G_j^t| = |G_j \cap I_t| \leq |t|-|(I'_t \setminus I_L) \cap G_j|$, for all $1 \leq j \leq n$. (ii) for any $\ell \in I_t: s_{\ell} \leq \eps f(t)$, and (iii) $S(I_t) \leq f(t) |t|$.
\end{lemma}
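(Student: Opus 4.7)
The plan is to construct the required partition by mimicking an optimal packing $\sigma$ corresponding to the correct guesses of $RecursiveEnum$. For each type $t \in \{t_1,\ldots,t_\mu\}$, I would start with $I_t := I^\sigma_t$, where $I^\sigma_t \subseteq I_\alpha$ is the set of items placed by $\sigma$ in the bins of type $t$. Any items violating condition~(ii) would then be relocated to $I_{t'}$, together with every item evicted during the eviction phase. The extra type $t'$ (with $|t'|=\eps OPT$ empty bins of free space $1$ each) thus serves as the overflow reservoir.

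Conditions~(i) and~(iii) follow directly from the feasibility of $\sigma$. For~(i), $\sigma$ places at most one item from each $G_j$ in each bin of $t$, giving $|G_j \cap I^\sigma_t| + |(I'_t \setminus I_L) \cap G_j| \le |t|$; the exclusion of $I_L$ reflects the fact that Swapping may have interchanged large items from small groups within a common size class, and any resulting conflicts with the small items of $I_t$ are postponed to the greedy phase. For~(iii), feasibility of $\sigma$ yields $S(I^\sigma_t) \le f(t)|t|$, and relocating items out of $I_t$ only decreases this sum.

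The core of the proof is~(ii), which I would verify by casing on where each bin $b$ exits the chain $E_0 \supseteq E_1 \supseteq \cdots \supseteq E_\alpha$. If $b \in D_0$, then $f(t) = f^0_b \ge \eps$ while small items have size $<\eps^{k+1} \le \eps f(t)$. If $b \in E_h \setminus E_{h+1}$ for some $h < \alpha$, then $f(t)=f^{h+1}_b \ge \eps f^h_b$ and items left unpacked after iteration $h+1$ are $t$-negligible w.r.t.\ $f^h_b$, so $s_\ell \le \eps^2 f^h_b \le \eps f(t)$. For $b \in E_\alpha$, the eviction ensures $f(t) \ge f^\alpha_b/\eps$; any item violating $s_\ell \le \eps f(t)$ is relocated to $I_{t'}$, where~(ii) is automatic since $f(t')=1$ and small items have size $<\eps^{k+1}<\eps$.

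The hard part will be controlling the burden placed on $I_{t'}$: I must certify that the total size and the per-group cardinality of the evicted items plus the items reassigned out of $E_\alpha$-bins jointly satisfy~(i) and~(iii) for $t'$. This is where the choice $\alpha = \tfrac{1}{\eps}+5$ enters decisively. Iterating $f^{h+1}_b < \eps f^h_b$ yields $f^\alpha_b < \eps^{\alpha+1}$, so the residual free space in $E_\alpha$-bins, and consequently both the total size and the number of items relocated from them, are $O(\eps) \cdot OPT$. At the same time, each $b \in E_\alpha$ packs at most $\alpha/\eps^2$ small items --- at most $1/\eps^2$ per iteration --- so by pigeonhole its largest packed small item has size $\Omega(\eps^2 f^0_b/\alpha)$, comfortably exceeding the eviction target $(1-\eps)f^\alpha_b/\eps$; moreover, since all items within a single bin come from pairwise distinct groups, the eviction always has enough group choices to respect the $\eps OPT$ cap on discards per group. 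These two observations together ensure that $I_{t'}$ meets all three conditions.
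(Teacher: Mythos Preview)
Your overall architecture matches the paper's: define $I_t$ via the optimal solution and shunt the ``bad'' items to the overflow type $t'$. But the proposal has a genuine gap in the case $b\in E_h\setminus E_{h+1}$. You assert that ``items left unpacked after iteration $h{+}1$ are $t$-negligible w.r.t.\ $f^h_b$'', and from this conclude there are no violators of~(ii) for such bins. This is false. In iteration~$h$, $RecursiveEnum$ packs the $t$-non-negligible items only from the \emph{$t(g_{opt})$-significant} groups; the $t$-non-negligible items coming from \emph{insignificant} groups are never touched and remain in $I_\alpha$. These items can have size up to $f^h_b$, which may well exceed $\eps f^{h+1}_b=\eps f(t)$. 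So violators of~(ii) arise not only from $E_\alpha$-bins but from every iteration along the chain, and your burden analysis for $I_{t'}$ (which only looks at $E_\alpha$) misses them entirely. The paper's proof hinges on exactly this point: $I_{t'}$ is defined as the set of items that are $t$-non-negligible and belong to a $t(g_{opt})$-insignificant group in \emph{some} iteration, and the per-group bound on $|G_j\cap I_{t'}|$ comes directly from the definition of insignificance (at most $\eps^4|t|$ such items per group per type, summed over all types and all $\alpha$ iterations).

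Two smaller points. First, evicted items are not in $I_\alpha$: they were already packed by $RecursiveEnum$ and are discarded to extra bins, so they should not appear in your partition of $I_\alpha$ at all. Second, your eviction argument shows by pigeonhole that the \emph{largest} packed item in $b\in E_\alpha$ is big enough, but the per-group cap requires many eligible candidates per bin, not just one. The paper obtains this by observing that each bin in $E_\alpha$ received at least one item in every iteration (otherwise it would have left $E$), hence holds at least $\alpha$ items from pairwise distinct groups, and all those packed in iterations $\le\alpha-3$ already have size at least $\eps^2 f_b^{\alpha-3}\ge f_b^\alpha/\eps$; since $\alpha-3>1/\eps$, this gives the required slack for balancing evictions across groups.
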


We explain the conditions of the lemma below.
\myparagraph{The partition phase}
Let $T$ be the set of types after Step~\ref{step:ExtraType} of Algorithm~\ref{Alg:PackSmallItems}, and
$I_\alpha$ the remaining unpacked items.\footnote{Recall that we consider only items that were not discarded in previous steps, as discarded items are packed in a separate set of bins.}
We seek a partition of $I_{\alpha}$ into subsets associated with bin types such that
the items assigned to each type $t$ are relatively tiny; also, the total size and the cardinality of the set of items assigned to $t$ allow to feasibly pack these items in bins of this type. This is done by proving that a polytope representing the conditions in Lemma~\ref{lem:eviction} has vertices at points which are integral up to a constant number of coordinates. Each such coordinate, $x_{\ell,t}$, corresponds to a fractional selection of some item $\ell \in I_\alpha$ to type $t \in T$.
We use $G_j$  to denote the subset of remaining items in $G_j$, $1 \leq j \leq n$.

Formally, we define a polytope $P$ as the set of all points ${x} \in [0,1]^{I_\alpha \times T}$ which satisfy the following constraints.
\begin{equation*}
	\begin{array}{lcl}
	\forall \ell \in I_{\alpha}, t \in T \textnormal{ s.t. } s_{\ell} > \eps f(t)&~:~& \displaystyle x_{\ell,t} = 0\\
	\rule{0pt}{1.8em}
	\forall t \in T&~:~& \displaystyle \sum_{\ell \in I_{\alpha}} x_{\ell,t} s_\ell \leq f(t) |t|\\
	\rule{0pt}{1.8em}
	\forall \ell \in I_{\alpha}&~:~& \displaystyle \sum_{t \in T} x_{\ell,t} = 1\\
	\rule{0pt}{1.8em}
	\forall 1 \leq j \leq n, t \in T&~:~& \displaystyle  \sum_{\ell \in G_j} x_{\ell,t} \leq |t|-|(I'_t \setminus I_L) \cap G_j|
	\end{array}
\end{equation*}

The first constraint refers to condition $(ii)$ in Lemma~\ref{lem:eviction}, which implies that items assigned to type $t$ need to be tiny w.r.t the free space in the bins of this type. The second constraint reflects condition $(iii)$ in the lemma, which guarantees that the items in $I_t$ can be feasibly packed in the bins of type $t$. The third constraint ensures that overall  
each item $\ell \in I_\alpha$ is (fractionally) assigned exactly once.

The last constraint reflects condition $(i)$ in Lemma~\ref{lem:eviction}. Overall, we want to have at most $|t|$ items of $G_j$ assigned to bins of type $t$. Recall that these bins may already contain large/medium items from $G_j$ packed in previous steps. While large/medium items from {\em large} groups are packed optimally, the packing of large items from {\em small} groups, i.e., $I_L$, is not necessarily optimal. In particular, the items in $I_L$ packed by our scheme in bins of type $t$ may not appear in these bins in the optimal solution  $g_{opt}$ to which our packing corresponds.
Thus, we exclude these items and only require that the number of items assigned from $G_j$ to bins of type $t$ is bounded by $|t|-|(I'_t \setminus I_L) \cap G_j|$.

\begin{theorem}
\label{thm:partitionPhase}
 Let $x \in P$ be a vertex of $P$. Then, $$|\{\ell  \in I_{\alpha} ~ |~\exists t\in T: ~ x_{\ell,t} \in (0,1)\}| = O(1).$$ 
\end{theorem}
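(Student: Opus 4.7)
The plan is to decouple the $|T|=O(1)$ knapsack-like size constraints (which are the only ``irregular'' ones) from the remaining ``combinatorial'' constraints (assignment, per-group capacity, and box), showing that the latter already define an integer polytope so that all fractionality at a vertex of $P$ is caused by the constantly many size constraints.

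Define $P_0$ to be the polytope obtained from $P$ by dropping the size constraints $\sum_\ell s_\ell x_{\ell,t}\le f(t)|t|$. Since each item $\ell$ belongs to a unique group $G_{j(\ell)}$, the variable $x_{\ell,t}$ appears in exactly one assignment equality (the one for $\ell$) and in exactly one group inequality (the one for $(G_{j(\ell)},t)$). The constraint matrix of $P_0$ is therefore, up to sign, the incidence matrix of a bipartite graph with items as sources (of unit supply) and $(G_j,t)$-pairs as sinks (of integer capacity $|t|-|(I'_t\setminus I_L)\cap G_j|$). This is a totally unimodular network matrix, so $P_0$ is a transportation polytope and all its vertices are integral.

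Now let $x$ be a vertex of $P$ and let $F$ be the face of $P_0$ cut out by the $P_0$-constraints that are tight at $x$. Within $F$ only the size constraints remain active, and since $x$ is a vertex of $P$ it must be pinned in $F$ by them, so $\dim F\le |T|$; recall that $|T|=O(1)$ by Lemma~\ref{lem:classes} (together with the fact that each of the $\alpha=O(1)$ subdivision steps in \emph{RecursiveEnum} creates only $O(1)$ sub-types). Because $P_0$ is integral, so is the face $F$, and by Carath\'eodory's theorem we may write $x=\sum_{i=1}^{K}\lambda_i v_i$ with $K\le\dim F+1\le|T|+1$, where each $v_i$ is an integer vertex of $F$, $\lambda_i>0$, and $\sum_i\lambda_i=1$. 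An item $\ell$ is fractional in $x$ iff the $v_i$'s do not all agree on $\ell$'s assignment, so it suffices to bound the set of items on which any two $v_i$'s differ.

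For this I would use the standard cycle description of the direction space of a transportation polytope: the direction space of $F$ has dimension $\dim F\le|T|$ and is spanned by integer vectors supported on elementary (augmenting) cycles of the bipartite items--$(G_j,t)$ graph. The crucial combinatorial observation is that items of $G_j$ are adjacent only to sinks of the form $(G_j,\cdot)$; hence every simple such cycle is confined to a single group and uses at most $|T|$ distinct types, so it involves at most $|T|$ items. Fix a basis of at most $|T|$ such cycles for the direction space of $F$; each difference $v_i-v_1$ is an integer combination of basis cycles, so any item on which some $v_i$ differs from $v_1$ must lie on at least one basis cycle. Consequently the fractional items of $x$ are contained in the union of at most $|T|$ item sets, each of size at most $|T|$, giving at most $|T|^2=O(1)$ fractional items, which is the desired bound. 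The main obstacle is justifying the cycle-basis description and in particular ruling out ``phantom'' fractional coordinates arising from cancellations across several basis cycles; this is handled by observing that if an item's coordinates are identically zero in every basis cycle then they are zero in any vector of the direction space, and hence $\ell$ is integral in $x$.
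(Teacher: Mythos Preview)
Your argument is essentially correct and takes a genuinely different route from the paper. The paper works in two stages: for each fractional group $G_j$ it exhibits a nonzero ``movement'' $\bm^j$ supported on the fractional entries of that group (Lemma~\ref{lem:movement}, via total unimodularity of an auxiliary polytope $P_j$), and then combines $|T|+1$ such movements through a homogeneous linear system in the $|T|$ size constraints to contradict extremality, concluding that at most $|T|$ groups are fractional (Lemma~\ref{lem:FewFractionalGroups}); a separate tight-constraint count on a per-group polytope $Q_j$ that \emph{retains} the size constraints then bounds the fractional items in each such group by $2|T|$ (Lemma~\ref{lem:FewFractionalItems}). You instead strip all $|T|$ size constraints at once, recognise the remainder $P_0$ as a transportation polytope, and use the face-dimension bound $\dim F\le|T|$ together with the observation that the bipartite graph on items versus $(G_j,t)$-pairs is a disjoint union over groups. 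This is a cleaner packaging of the same two phenomena (few fractional groups, few fractional items per group) and avoids the auxiliary polytope $Q_j$ entirely.

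One looseness worth fixing: when some group inequality $\sum_{\ell\in G_j}x_{\ell,t}\le L_{t,j}$ is \emph{slack} at $x$, the corresponding $(G_j,t)$-node is unconstrained in the direction space of $F$, so that space is spanned not only by cycles but also by simple paths between two non-tight $(G_j,t)$-nodes. Such paths are still confined to a single group and touch at most $|T|$ items, so your bound survives unchanged, but ``cycles'' alone need not span. If you prefer to avoid the cycle/path bookkeeping altogether, here is a shortcut in the same spirit: since the defining constraints of $P_0$ decouple across groups, the direction space factors as $D=\bigoplus_j D_j$; whenever $D_j=\{0\}$, total unimodularity forces the unique solution $x|_{G_j\times T}$ of the tight subsystem to be integral, so the number of fractional groups is at most $\dim D\le|T|$; and within a fractional group, each of the $m_j$ fractional items contributes at least two fractional coordinates against at most $m_j+|T|$ tight assignment/group equalities, giving $\dim D_j\ge m_j-|T|$, hence $\sum_j m_j\le \dim D+|T|\cdot|T|\le |T|+|T|^2$.
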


By Theorem~\ref{thm:partitionPhase}, we can find a feasible partition (with respect to the constraints of the polytope) by finding a vertex of the polytope, and then discarding the $O(1)$ fractional items. These items can be packed in $O(1)$ extra bins. 
By Lemma~\ref{lem:eviction} we have that $P\neq \emptyset$; thus, a vertex of $P$ exists and the partition can be found in polynomial time.

\myparagraph{The greedy phase}
In this phase we pack the remaining items using algorithm $GreedyPack$ (see the pseudocode in Algorithm~\ref{Alg:greedyPack}).
 Let $G_{1}^t, \ldots, G_{n}^t$ be the items in $I_t$ from each group, and let $S(I_t)$ be the total size of these items, i.e., $S(I_t) = \sum_{j = 1}^{n} \sum_{\ell \in G_{j}^t} s_{\ell}$.

\begin{algorithm}[htb]
	\caption{$GreedyPack(I_t = \{G_{i_1}^t, \ldots, G_{i_H}^t\}, t = \{b_1,\ldots,b_{|t|}\}$}
	\label{Alg:greedyPack}
	\begin{algorithmic}[1]
		\For {$j = 1,\ldots,H$}
		\State{Sort $G^t_{i_j}$ in a non-increasing order by sizes.}
		\EndFor
		\State{Let $y_{i_j}$ be the largest remaining item in $G_{i_j}^t$, $j=1, \ldots , H$.}
		\For {each bin $b \in t$}
		\State{Add to bin $b$ the items $y_{i_1}, \ldots, y_{i_H}$.}
		\While{total size of items packed in bin $b$ $>$ 1}
		\State{Select a group $G_{i_j}^t \in \{G_{i_1}^t, \ldots, G_{i_H}^t\}$ such that $y_{i_j}$ is not last in $G_{i_j}^t$.}
		\If {cannot complete last step}
		\State{return $failure$}
		\EndIf
		\State{Return $y_{i_j}$ to $G_{i_j}^t$.}
		\State{Let $y'_{i_j}$ be the next largest item in $G_{i_j}^t$.}
		\State{Add $y'_{i_j}$ to bin $b$. }
		\EndWhile
		\For {$j = 1,\ldots,H$}
		\If {$G_{i_j}^t$ has a large item in bin $b$} 
		\State{\label{step:discardsmalloverlarge} discard the small item.}
		\EndIf
		\EndFor
		\EndFor
	\end{algorithmic}
\end{algorithm}

\comment{
Algorithm GreedyPack starts by sorting the small items in each small group in non-increasing order by sizes. 
Denote the items of $G_{j}^t$ in the sorted order by $s_{1,j_t}, s_{2,j_t},\ldots$.
Let $S_{x,t} (G_{j}^t)$ be the total size of the $x$ largest items in $G_{j}^t$ for an integer $x \geq 0$; that is, the total size of items of sizes $s_{1,j_t}, \ldots , s_{x,j_t}$.\footnote{An empty sum is equal to $0$.} 

Then, $S_{x,t} = \sum_{j=1}^n S_{x,t} (G_{j}^t)$ is the total size of the $x$ largest items in the groups $G_{1}^t, \ldots, G_{n}^t$. A different number of items is discarded from each group, depending on the value of $S_{\eps |t|,t}$ (Step~\ref{step:discardCapacity}).\footnote{Assume for simplicity that $\eps |t| \in \mathbb{N}$.} Specifically,
 if $S_{\eps |t|+1,t} > \eps |t|$ then 
 discarding the $(\eps |t|+1)$ largest items from each small group might
 require using a large number of extra bins.
 Thus, the algorithm finds the first integer $x$, such that $\eps |t| \geq S_{x-1,t}$ and $\eps |t| < S_{x,t}$.
 In case $x = 1$ the 
 algorithm discards the largest item in each group starting from $G_{j}^t$, until the total size of discarded items exceeds $\eps |t|$.  If $x > 1$, we discard the $x$ largest items in each group $G_{j}^t, \ldots, G_{n}^t$. The difference between the two cases above is made to guarantee that we discard items of total size $\Theta(\eps)|t|$.

Assume now that $S_{\eps |t|+1,t} \leq \eps |t|$. We discard the largest $\eps |t|+1$ items of each group $G_{1}^t, \ldots, G_{n}^t$. 
}

We now describe the packing of the remaining items in $I_t$ in bins of type $t$. First, we add $2\eps |t|$ extra bins to $t$. The extra bins are empty and thus have capacity $1$; however, we assume that they have capacity $f(t) \leq 1$. This increases the overall number of bins in the solution by $2\eps OPT$. 
Consider the items in each group in non-increasing order by sizes. For each bin $b \in t$ in an arbitrary order, GreedyPack assigns to $b$ the largest remaining item in each group $G^t_1, \ldots, G^t_n$. If an overflow occurs, replace an item from some group $G_{j}^t$ by the next item in $G_{j}^t$. This is repeated until there is no overflow in $b$. W.l.o.g., we may assume that $|G_j|= OPT$ for all $1 \leq j \leq n$; thus, $b$ contains one item from each group (otherwise, we can add to $G_j$ dummy items of size $0$, with no increase to the number of bins in an optimal solution).

Recall that the large items from small groups are packed using the Swapping algorithm, that yields a feasible packing. Yet, it does not guarantee that the small items can be added without causing conflicts. Hence, GreedyPack may output a packing in which a small and large item from the same small group are packed in the same bin. Such conflicts are resolved by discarding the small item in each.

\begin{lemma}
\label{lem:5ecoupledItems}
The total size of items discarded in $GreedyPack$ in Step~\ref{step:discardsmalloverlarge} due to conflicts 
is at most $\eps OPT$, and at most $\eps^{k+2} \cdot OPT$
items are discarded from each group.
\end{lemma}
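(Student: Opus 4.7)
The plan is to charge each discard in Step~\ref{step:discardsmalloverlarge} to a large item from a small group, and then derive both the per-group bound and the total size bound from the structural properties of Swapping and the definition of small groups.

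I would first observe that conflicts resolved in Step~\ref{step:discardsmalloverlarge} can only involve a small item and a large item belonging to the same \emph{small} group. For any large group, the pattern enumeration of Section~\ref{sec:Large Items and Medium Items from Large Groups} places large and medium items by explicit group label, and $RecursiveEnum$ packs small items from $t(g_{opt})$-significant groups using $t$-patterns that likewise carry labels; hence two items of the same large group never share a bin. Medium items from small groups were discarded earlier. Thus the only items from a small group $G_j$ that can coexist in a single bin by the time $GreedyPack$ runs are large items placed by Swapping and a single small item per group that $GreedyPack$ itself inserts.

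Next I would establish the per-group bound. Because $GreedyPack$ initially assigns one item $y_{i_j}$ from each small group $G^t_{i_j}$ to every bin and its inner loop swaps $y_{i_j}$ only with another item of the same group, every bin contains exactly one small item from each group processed by $GreedyPack$. Therefore a bin $b$ can contribute at most one discard from $G_j$, and only when $b$ already hosts a large item from $G_j$. Since Swapping yields a feasible packing, each bin contains at most one large item from $G_j$, so the number of discards from $G_j$ is bounded by the number of large items in $G_j$; as $G_j$ is small, this number is strictly less than $\eps^{k+2}\cdot OPT$, giving the required per-group bound (and the bound is trivial for large groups, from which nothing is discarded).

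For the total size bound I would charge each discarded item to the unique large item from a small group residing in the same bin that triggered the conflict. The above invariants show this charging is well-defined and injective into the set $I_L$ of large items from small groups. Since all items of $I_L$ are packed in the $OPT$ bins, $\sum_{\ell \in I_L} s_\ell \leq OPT$, and each item of $I_L$ has size at least $\eps^k$, giving $|I_L| \leq OPT/\eps^k$. Every discarded item is small and hence has size strictly less than $\eps^{k+1}$, so the total discarded size is strictly less than $(OPT/\eps^k)\cdot \eps^{k+1} = \eps\cdot OPT$, as required. The main subtlety is verifying the ``at most one item per group per bin'' invariant maintained jointly by Swapping, $RecursiveEnum$, and $GreedyPack$, which is precisely what validates the injective charging and rules out cross-group conflicts.
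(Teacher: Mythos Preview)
Your proof is correct and follows essentially the same approach as the paper: both bound the per-group discards by the number of large items in a small group, and both bound the total discarded size by charging each discarded small item to the unique large conflicting item in its bin. The paper's version is terser---it simply asserts that ``all groups are small'' and observes that each coupled large item is at least $1/\eps$ times larger than the small item it conflicts with, concluding by contradiction that the total discarded size cannot exceed $\eps\cdot OPT$---while you make the injective charging explicit and derive the same bound by counting $|I_L|\leq OPT/\eps^{k}$ and multiplying by $\eps^{k+1}$; the arithmetic is identical.
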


\begin{proof}

The number of items discarded from each group is at most $\eps^{k+2} \cdot OPT$, since all groups are small.
 Assume that the total size of these items is strictly larger than $\eps OPT$. Since each discarded item is {\em coupled} with a large conflicting item from the same group, whose size is at least $1/\eps$ times larger (recall that the medium items are discarded), this implies that the total size of large conflicting items is greater than $OPT$. Contradiction. \qed  
 
 \end{proof}
\begin{algorithm}[htb]
	\caption{$PackSmallItems(I_0,B)$}
	\label{Alg:PackSmallItems}
	\begin{algorithmic}[1]
		\For {each guess of $RecursiveEnum(I_0,B)$}
		\For {$b \in E_{\alpha}$}
		\State{evict from $b$ the largest item $\ell$ satisfying: $\ell$ is small, and less than $\eps OPT$ $~~~~~~~~~~~$ items where evicted from $G_i$, where $\ell \in G_i$.
			\label{step:eviction}}
		\EndFor
		\State{Add to $T$ a new type $t'$ consisting of $\eps OPT$ empty bins.\label{step:ExtraType}}
		\State{Compute a feasible partition of $I_{\alpha}$ into the types in $T$.\label{step:partition}}
		\For {$t \in T$}
		\State{Add $2\eps |t|$ extra bins to $t$.\label{step:discardCapacity}}
		\State{Assign $I_t$ to bins of type $t$ using $GreedyPack(I_t,t)$.\label{step:packGreedy}}
		\EndFor
		\EndFor
		
	\end{algorithmic}
\end{algorithm}


\comment{$T_E = \{t_{E,1}, \ldots, t_{E,e}\}$, $T_D = \{t_{D,1}, \ldots, t_{D,d}\}$ be all recursively generated sub-types that their bins are in $E_{\alpha}$ and $D_{\alpha}$, respectively. Denote by $f(t) = f^{\alpha}_b, b \in t$ for every type $t$. Assume that $f(t_{E,i}) \leq f(t_{E,i+1}), i \in [e-1]$ and similarly for $f(t_{D,i})$. We list all types in a specific order $\sigma$, as follows. First, $t_{E,i}$ is before $t_{E,i+1}$ for all $i \in [e-1]$, and similarly, $t_{D,i}$ is before $t_{D,i+1}$ for all $i \in [d-1]$. Second, $t_{E,i}$ is before $t_{D,j}$ if and only if $f(t_{E,i}) < \eps f(t_{D,j})$.   }

\comment{We now describe algorithm $GreedyPack$, which packs the remaining items besides a few more discarded items. Consider the order $\sigma$ of the types, and the items in each group in non-increasing order by sizes. For each bin $z \in t$ in this order, GreedyPack assigns to $z$ the largest remaining item in each group with two conditions: (i) if $E_{\alpha}$, then only $t$-negligible items are added to $z$. Otherwise, if $t \in D_{\alpha}$, only items $\ell \in G_i$ with $s_{\ell} \leq f^{\alpha_z}$ are added to $z$. (ii) item $\ell$ is added to $z$ if and only if $z$ does not contain an item from $G_i$.

If an overflow occurs, replace an item from arbitrary group $G_{i}$
by the next item in This group by the sorted order of the group. This process is repeated until there is no overflow. Note that, $z$ contains one item from each group.\footnote{Recall that $|G_j|= OPT$ $\forall~ 1 \leq j \leq n$.}
Let $items(t)$ be all items that belong to type $t$ besides large items from small groups (recall that these items are packed greedily and we cannot assume that they are packed in correspondence to an optimal solution). }

\begin{lemma}
\label{lem:greedyInType}
For any $t \in T$, given a parameter $0<\delta<\frac{1}{2}$ and a set of items $I_t$ such that (i) $|G_j^t| \leq |t|-|(I'_t \setminus I_L) \cap G_j|$; (ii) for all $\ell \in I_t: s_{\ell} \leq \delta f(t)$, and (iii) $S(I_t) \leq (1-\delta) f(t) |t|$, $GreedyPack$ finds a feasible packing of $I_t$ in bins of type $t$. 
\end{lemma}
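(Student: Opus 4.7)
The plan is to separate feasibility into two aspects: (a) the output packing contains at most one item from each group per bin, and (b) GreedyPack never returns failure, so in particular the total size packed in each bin does not exceed $f(t)$.

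Aspect (a) is immediate from the structure of GreedyPack. For each bin $b$, the outer loop places at most one item $y_{i_j}$ from each group $G_{i_j}^t$, and the inner while loop only replaces $y_{i_j}$ by a smaller item from the same group. Hence no bin ends up with two items of the same group through the action of GreedyPack itself. Residual conflicts with the previously packed large items from small groups (the set $I_L$) are resolved explicitly by the discarding step (Step~\ref{step:discardsmalloverlarge}), whose aggregate cost is controlled by Lemma~\ref{lem:5ecoupledItems}.

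For aspect (b) I would argue by contradiction. Assume GreedyPack returns failure while processing bin $b$. By the failure condition, for every group $j$ with a non-empty pool, the current item $y_{i_j}$ is the smallest item left in the pool $P_j^b$ of $G_{i_j}^t$, and $\sum_j y_{i_j} > f(t)$. The key observation is that every item currently in $P_j^b$ has size at least $y_{i_j}$; consequently the total size of items not yet committed is at least $\sum_j |P_j^b|\, y_{i_j}$. I would then combine this lower bound with the total-size constraint $S(I_t) \leq (1-\delta) f(t) |t|$ from (iii), the per-item bound $s_\ell \leq \delta f(t)$ from (ii), and the cardinality constraint from (i) (which implies that each pool still contains enough items for the $|t|-b+1$ remaining bins), in order to derive a chain of inequalities incompatible with $\sum_j y_{i_j} > f(t)$.

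The main obstacle is that GreedyPack's swap-down mechanism may commit smaller items to earlier bins in order to resolve overflow in those bins, so a naive averaging of sizes across bins need not yield the desired contradiction. The way to close the argument is to exploit the per-bin slack of $\delta f(t)$ provided jointly by conditions (ii) and (iii): this slack absorbs the imbalance caused by swap-downs and ensures that whenever the algorithm would otherwise be forced to place items summing to more than $f(t)$ in one bin, some pool must still contain an item strictly smaller than the current $y_{i_j}$, contradicting the assumption that the failure condition was reached.
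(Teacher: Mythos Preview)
Your plan correctly separates feasibility into the two aspects, and aspect~(a) is handled adequately. The gap is in aspect~(b): you recognize the obstacle posed by swap-downs in earlier bins, but your proposed fix --- ``exploit the per-bin slack of $\delta f(t)$'' --- is not an argument, only a description of the resource you hope will suffice. Concretely, at the moment of failure in bin $b$ you know $\sum_j y_{i_j} > f(t)$ where each $y_{i_j}$ is the smallest \emph{remaining} item in its group. To contradict condition~(iii) via a total-size bound you would like $S(I_t) \geq |t| \sum_j y_{i_j}$, but this fails precisely because items committed to bins $1,\dots,b-1$ via swap-downs may be smaller than $y_{i_j}$. Your final paragraph asserts that the slack ``absorbs'' this imbalance without showing how.

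The missing observation, which the paper exploits, is that between two consecutive attempts at packing a single bin the total size drops by at most $\delta f(t)$ (one item is replaced by a smaller one, and every item has size at most $\delta f(t)$ by condition~(ii)). Hence whenever the algorithm performs \emph{any} swap-down in a bin, the final committed total in that bin is at least $f(t) - \delta f(t) = (1-\delta)f(t)$; and if no swap-down occurs, the bin receives the largest remaining item from every group. The paper packages this dichotomy as an induction on $|t|$: it first shows the current bin cannot fail (if it did, the sum of the \emph{smallest} item in each group would exceed $f(t)$, and with the padding convention $|G_j^t|=|t|$ this forces $S(I_t) > |t|\,f(t)$, contradicting~(iii)); then either that bin carries load at least $(1-\delta)f(t)$, so condition~(iii) is preserved for the residual $|t|-1$ bins, or it took the largest of every group, whence all subsequent bins succeed trivially.

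Your contradiction framing can be completed once you have this dichotomy --- if failure occurs at bin $b$, no earlier bin was swap-free (else bin $b$ would succeed trivially), so each earlier bin carries load at least $(1-\delta)f(t)$, the residual size before bin $b$ is at most $(1-\delta)f(t)(|t|-b+1)$, and the sum-of-smallest-remaining argument then yields the contradiction --- but this is the paper's induction in disguise, and without the dichotomy your last paragraph does not close the gap.
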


\begin{lemma}
\label{lem:greedySummary}
Algorithm~\ref{Alg:PackSmallItems} assigns in Step~\ref{step:packGreedy} to $OPT$ bins all items except for $O(\eps)OPT$ items from each group, of total size $O(\eps)OPT$. 
\end{lemma}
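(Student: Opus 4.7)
The plan is to combine, across the four phases of $PackSmallItems$, the guarantee that $GreedyPack$ packs every bin type successfully with a per-group and aggregate-size accounting of the items that leave the $OPT$ main bins. Throughout I would condition on the execution of $RecursiveEnum$ whose guess corresponds to an optimal completion $g_{opt}$, which exists by Lemma~\ref{lem:recursiveOPT}(iv).

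The first step is to verify the three hypotheses of Lemma~\ref{lem:greedyInType} with $\delta = \eps$ for every type $t$ entering Step~\ref{step:packGreedy}, including the appended empty-bin type $t'$. By Lemma~\ref{lem:eviction} the polytope $P$ is non-empty, so Step~\ref{step:partition} produces a vertex-based integral partition of $I_{\alpha}$ up to the $O(1)$ fractional items removed by Theorem~\ref{thm:partitionPhase} (which are placed in $O(1)$ separate bins). For this partition, the fourth polytope constraint yields hypothesis (i), the first polytope constraint yields hypothesis (ii), and the second constraint gives $S(I_t) \leq f(t)|t|$. The $2\eps|t|$ bins added in Step~\ref{step:discardCapacity} enlarge the effective capacity of type $t$ to $f(t)(1+2\eps)|t|$, and since $(1-\eps)(1+2\eps) = 1+\eps-2\eps^2 \geq 1$ for $\eps \leq 1/2$, one gets $S(I_t) \leq (1-\eps)\,f(t)(1+2\eps)|t|$, which is hypothesis (iii) after identifying the enlarged $|t|$. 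Lemma~\ref{lem:greedyInType} then certifies that $GreedyPack$ produces a feasible packing of $I_t$ in the bins of type $t$ (including the padding bins), for every $t$.

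Next I would tally the items that do not land in the $OPT$ main bins. The linear shifting inside $RecursiveEnum$ drops at most $\eps OPT$ items per group of total size at most $\eps OPT$ by Lemma~\ref{lem:recursiveOPT}(iii). The eviction in Step~\ref{step:eviction} removes fewer than $\eps OPT$ items per group by its explicit stopping condition, and since every evicted item is small (size below $\eps^{k+1} \leq \eps^2$), the total evicted size is at most $|E_\alpha|\cdot \eps^{k+1} \leq \eps OPT$. Theorem~\ref{thm:partitionPhase} contributes only $O(1)$ fractionally-assigned items, occupying $O(1)$ extra bins. Finally, Lemma~\ref{lem:5ecoupledItems} bounds the items discarded by $GreedyPack$ in Step~\ref{step:discardsmalloverlarge} due to small-large conflicts by $\eps^{k+2}OPT \leq \eps OPT$ per group of total size at most $\eps OPT$. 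Summing the four contributions yields $O(\eps)OPT$ items per group of total size $O(\eps)OPT$; these spill into the $2\eps|t|$ padding bins of each type, the bins of type $t'$, and the $O(1)$ fractional-items bins, all of which lie outside the $OPT$ original bins.

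The main obstacle I anticipate is the parameter coordination in the first step: condition (iii) of Lemma~\ref{lem:greedyInType} demands a slack $(1-\delta)$ bounded away from $1$, while condition (ii) forces $\delta$ to match the size threshold $\eps f(t)$ embedded in the polytope. Reconciling both through the precise $2\eps|t|$-bin padding of Step~\ref{step:discardCapacity} is the crux; once this is in place, every remaining piece of the lemma follows by routine addition of the per-phase bounds already established.
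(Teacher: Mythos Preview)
Your proposal is correct and follows essentially the same approach as the paper: invoke Lemma~\ref{lem:greedyInType} with $\delta=\eps$ by combining the polytope constraints with the $2\eps|t|$-bin padding (the same $(1-\eps)(1+2\eps)\geq 1$ computation), and then bound the items that escape. The paper's own proof is terser and accounts only for the discards arising inside Step~\ref{step:packGreedy} itself (handled via Lemma~\ref{lem:5ecoupledItems}), whereas you additionally fold in the shifting, eviction, and fractional-item contributions---this is harmless and arguably cleaner, though slightly beyond what the lemma statement strictly asks.
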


\comment{For every $b \in B$, denote by $f^0_b$ the capacity of $b$ excluding items packed in $b$, i.e., $f^0_b = 1-\sum_{\ell \in b} s_{\ell}$. We say that item $\ell$ is {\em negligible in $b$} if $s_{\ell} \leq \eps f^0_b$, and $\ell$ is {\em non-negligible in $b$} otherwise. 

The packing of $I_s$ in $B$ is done iteratively, where each iteration consists of three steps: {\em partition phase}, {\em initial packing} and {\em permanent packing}. We give a short overview of these phases below, and elaborate in the following sections. 

In the partition phase, the items are partitioned into the bins such that each $b \in B$ is assigned with at most one negligible in $b$ item from each small group without overflow and with a small number of discarded items from each group. In addition, each bin $b$ is assigned with at most $\frac{1}{\eps}$ items, which are non-negligible in $b$, where conflicts and overflows are allowed for the latter items. An item $\ell$ is considered as {\em non-negligible} if $\ell$ is assigned to bin $b$ in the partition phase and $\ell$ is non-negligible in $b$.  

In the initial packing phase, a subset of non-negligible items is discarded, such that the remaining non-negligible items  }

\comment{
\mysubsection{Small Items from Small Groups}
\label{subsection:Small Items from Small Groups}
Up to this point we packed in $OPT$ bins the large items, as well as the medium and small items from large groups. We proceed to pack the small items from small groups. To this end,
we define a {\em profile} as a feasible packing of all items previously packed. An {\em almost optimal profile} is a profile for which there is an assignment of the small items from small groups into the bins adhering to two conditions: $(i)$ no overflow in any of the bins, and $(ii)$ the only conflicts allowed are between a small item and a large item from the same small group.

Algorithm $SmallGroups$ handles the packing of the small items of small groups; it combines guessing with several greedy-based components.
 Let $G_{i_1}^s, \ldots, G_{i_S}^s$ be the small items in the small groups, and let $V$ be the total size of these items, i.e., $V = \sum_{j = 1}^{S} \sum_{\ell \in G_{i_j}^s} s_{\ell}$.


Algorithm SmallGroups starts by sorting the small items in each small group in non-increasing order by sizes. 
Denote the items of $G_{i_j}^s$ in the sorted order by $s_{1,i_j}, s_{2,i_j},\ldots$.
Let $V_x (G_{i_j}^s)$ be the total size of the $x$ largest items in $G_{i_j}^s$ for an integer $x \geq 0$; that is, the total size of items of sizes $s_{1,i_j}, \ldots , s_{x,i_j}$.\footnote{An empty sum is equal to $0$.} 
Then, $V_x = \sum_{j=1}^S V_{x} (G_{i_j}^s)$ is the total size of the $x$ largest items in the groups $G_{i_1}^s, \ldots, G_{i_S}^s$. Let $m=\lfloor \eps OPT \rfloor$, then 
$m \geq 1$.
 A different number of items is discarded from each group, depending on the value of $V_{m+1}$. Specifically,
 if $V_{m+1} > \eps OPT$ then 
 discarding the $(m+1)$ largest items from each small group will
 require using a large number of extra bins.
 Thus, the algorithm finds the first integer $x$, such that $\eps OPT \geq V_{x-1}$ and $\eps OPT < V_x$.
 In case $x = 1$ the 
 algorithm discards the largest item in each group starting from $G_{i_1}^s$,
 until the total size of discarded items exceeds
  $\eps OPT$. If $x > 1$, we discard the $x$ largest items in each small group $G_{i_1}^s, \ldots, G_{i_S}^s$.

Assume now that $V_{m+1} \leq \eps OPT$. We discard the largest $m+1$ items of each small group $G_{i_1}^s, \ldots, G_{i_S}^s$. To enable an efficient packing of the remaining items, it is crucial to have the maximum size of an item sufficiently small. We obtain this property by handling separately all {\em special} groups $G_{i_j}^s$ for which $s_{m+2,i_j} > \frac{\varepsilon \cdot V}{OPT}$. Denote this set of groups by $A$.
 \negA
 \begin{lemma}
\label{lem:disgroups}
$|A| \leq \frac{1}{\eps^2}$
\end{lemma}
\negA
\negA
\begin{proof}
Consider a group $G_{i_j}^s \in A$. Since $G_{i_j}^s$ is {\em special}, it holds that $s_{m+2,i_j} > \frac{\varepsilon \cdot V}{OPT}$.
	Also, as $s_{1,i_j} \geq s_{2,i_j}\geq \ldots \geq s_{m+2,i_j}$, the total size of $G_{i_j}^s$ is larger than
	$(m+2) \cdot \frac{\eps V}{OPT} = (\lfloor \eps OPT \rfloor +2) \frac{\eps V}{OPT} > \eps^{2} V$. 
	Since the total size of the small items in small groups is equal to $V$
	we have the statement of the lemma. \qed
\end{proof}
\negA
 Hence, algorithm
 EnumGroups can be used to obtain in polynomial time a packing of $A$ which corresponds to an optimal solution (see Theorem~\ref{lem:larggroupComplexity}).

We now describe algorithm $GreedyPack$, which packs the remaining items.
Consider the $OPT$ bins in non-increasing order by the total size of packed items, and the items in each group in non-increasing order by sizes. For each bin $z$ in this order, GreedyPack assigns to $z$ the largest remaining item in each group. If an overflow occurs, replace an item from some group $G_{i_j}^s$
by the next item in 
$G_{i_j}^s$.
This is repeated until there is no overflow. Note that, given an almost optimal profile, we are guaranteed to find a packing of $z$ with no overflow. Moreover, $z$ contains one item from each group.\footnote{Recall that $|G_j|= OPT$ $\forall~ 1 \leq j \leq n$.}

By definition, an almost optimal profile does not guarantee that large items from small groups are packed optimally. 
Hence, GreedyPack may output a packing in which a small and large item from the same small group are packed in the same bin. Such conflicts are resolved by discarding the small item. 
\negA
\begin{lemma}
\label{lem:5ecoupledItems}
The total size of items discarded in $GreedyPack$ 
is at most $\eps OPT$, and at most $\eps^{k+2} \cdot OPT$
items are discarded from each group.
\end{lemma}
\begin{proof}
The number of items discarded from each group is at most $\eps^{k+2} \cdot OPT$, since all groups are small.
 Assume that the total size of these items is strictly larger than $\eps OPT$. Since each discarded item is {\em coupled} with a large conflicting item from the same group, whose size is at least $1/\eps$ times larger (recall that the medium items are discarded), this implies that the total size of large conflicting items is greater than $OPT$. Contradiction. \qed    
\end{proof}
\negA
The pseudocodes of GreedyPack and SmallGroups are given in Algorithm~\ref{Alg:greedyPack} and Algorithm~\ref{Alg:smallGroups}, respectively.\footnote{For the pseudocode of GreedyPack see Appendix~\ref{sec:algorithmGreedyPack}.} 

\begin{theorem}
\label{lem:smallgroup}
 Given an almost optimal profile, Algorithm~\ref{Alg:smallGroups} packs feasibly in polynomial time all the small items from small groups.
\end{theorem}

We prove the theorem using the following lemmas. Given a packing of an instance $I$ of GBP in $OPT$ bins, the {\em free space} is given by $OPT- S(I)$. The {\em free space on average} is $\mu = \frac{OPT-S(I)}{OPT}$.

\negA

 \begin{algorithm}[h]
	\caption{$SmallGroups(G_{i_1}^s, \ldots, G_{i_S}^s, b_1, \ldots, b_{OPT})$}
	\label{Alg:smallGroups}
	\begin{algorithmic}[1]
	\If {$V_{m+1} > \eps OPT$}
	\State{Find the minimal $x$ such that $V_x > \eps OPT$.}
    \If {$x = 1$}   
    \State{$SUM \leftarrow 0$}
    \State{$j \leftarrow 1$}
    \While{$SUM < \eps OPT$}
      \State{Discard $s_{1,i_j}$. \label{step:discardx=1} }
      \State{$SUM \leftarrow SUM + s_{1,i_j}$}
      \State{$j \leftarrow j+1$}
\EndWhile
\Else
\State{discard the largest $x$ items of each small group $G_{i_1}^s, \ldots, G_{i_S}^s$ \label{step:discard1}}
\EndIf
\If {$GreedyPack(\{G_{i_1}^s, \ldots, G_{i_S}^s\}, b_1, \ldots, b_{OPT}) = failure$ \label{step:greedystep2}}
     \State{return $failure$} 
	\EndIf
\Else
	\State{Discard the largest $m+1$ items of each small group $G_{i_1}^s, \ldots, G_{i_S}^s$. \label{step:discardregular}}
	\State{A small group $G_{i_j}^{s}$ is considered as {\em special} if $s_{m+2,i_j} > \frac{\varepsilon \cdot V}{OPT}$.    \label{step:discardA}}
	\State{Let $A \subseteq \{G_{i_1}^s, \ldots, G_{i_S}^s\}$ be the set of  special groups.}
	\State{For each $G_{i_j}^s \in A$ apply shifting to items in $G_{i_j}^s$ with parameter $Q = \lfloor \eps^{3} OPT \rfloor$.\label{step:linearGrouping}}
	\State{Compute bin types $B_1,...,B_t$. \label{step:largeitem}}
	\For{$\hat{b}_1, \ldots, \hat{b}_{OPT}$, result of $EnumGroups(B_1,\ldots,B_t$,$A$,$b_1,\ldots,b_{OPT}$) \label{step:largegroups}}
    \If {$GreedyPack(\{G_{i_1}^s, \ldots, G_{i_S}^s\} \setminus A, \hat{b}_1, \ldots, \hat{b}_{OPT}) = failure$ \label{step:greedystep1}}
    \State{Return $failure$.}
    \EndIf
    \EndFor
    \EndIf
	\end{algorithmic}
\end{algorithm}




\begin{lemma}
\label{lem:helpgreedy}
Given $OPT$ bins and an almost optimal profile, free space on average at least $\mu \geq 0$ and the largest remaining item of size at most $\mu$,
GreedyPack
yields a feasible packing of the remaining small items from small groups.
\end{lemma}
\negA
\begin{proof}
We prove the claim by induction on $OPT$. For the base case, let $OPT = 1$. Since there is only one bin in an optimal solution, each group consists of a single item.
Also, if the free space is non-negative, the total size of all items is at most the capacity of the bin (since we have an almost optimal profile). Hence, we can pack all items feasibly.

For the induction step, assume the claim holds for $OPT-1$ bins. Now, suppose that 
there are $OPT$ bins, partially packed with an almost optimal profile. GreedyPack initially assigns items to the bin of maximum total size.
Now, consider two cases.

$(i)$ After packing the first bin, the total size of items in this bin is less than $1-\mu$. Hence, the first bin contains the largest item left in each small group.
As the bins are sorted in non-decreasing order by free capacity, and the items in each group are sorted in non-increasing order by sizes, we can feasibly pack all items in the remaining $OPT-1$ bins.

$(ii)$ The first bin is packed with total size at least $1-{\mu}$. 
We distinguish between two sub-cases. $(a)$ Prior to applying GreedyPack, the first bin  had free space less than $\mu$. Then, clearly, the total free space in the remaining $OPT-1$ bins is at least $(OPT-1) \mu$.
$(b)$ Otherwise, the initial free space in the first bin is at least $\mu$. As the bins are sorted in non-decreasing order by free space, the total free space in the other bins is at least $(OPT-1) \mu$.

Also, there are $OPT -1$ items remaining in each group. Hence, by the induction hypothesis, the remaining items can be packed in bins $2, \ldots, OPT$.
\qed  
\end{proof}

\negA

\begin{lemma}
\label{lem:smallO1groups}
Given an almost optimal profile, if $V_{m+1} \leq \eps OPT$ then GreedyPack outputs (in Step~\ref{step:greedystep1} of Algorithms~\ref{Alg:smallGroups}) a feasible packing of all small items of small groups.
\end{lemma}

\begin{proof}
	In Step~\ref{step:discardregular} algorithm SmallGroups discards 
	the $m+1$ largest small items of each small group, whose total size 
	is at least $\eps V$. Thus, the free space on average is at least $\mu \geq \frac{\eps V}{OPT}$. In addition, the maximum size of any remaining item is at most $\frac{V\cdot\varepsilon}{OPT}$ (since we use algorithm EnumGroups for special groups). There are at most $OPT$ items in each group.
	Hence, given an almost optimal profile, by Lemma~\ref{lem:helpgreedy}, all the remaining small items can be packed feasibly. \qed    
\end{proof}
\negA
The proof of the next lemma is similar to the proof of Lemma~\ref{lem:smallO1groups}. 
\negA
\begin{lemma}
\label{lem:additionalsmallitems}
Given an almost optimal profile, if $V_{m+1} > \eps OPT$, GreedyPack outputs (in Step~\ref{step:greedystep2} of Algorithms~\ref{Alg:smallGroups}) a feasible packing of all small items of small groups.
\end{lemma}

 \negA

\noindent{\bf Proof of Theorem~\ref{lem:smallgroup}:}
\label{proof:theorem10}
By Lemmas~\ref{lem:smallO1groups} and~\ref{lem:additionalsmallitems}, for any value of $V_{m+1}$ all small items are packed feasibly, given an almost optimal profile. Since our scheme enumerates over all possible slot patterns, it is guaranteed to reach an almost optimal profile. For the running time, 
we first note that, by Theorem~\ref{lem:larggroupComplexity} and Lemma~\ref{lem:disgroups}, the time complexity of EnumGroups (in Step~\ref{step:largegroups} of SmallGroups) is polynomial. Also, algorithm GreedyPack (called in Steps~\ref{step:greedystep2} and~\ref{step:greedystep1} of SmallGroups) runs in time
$O(N \cdot OPT)$. 
Finally, algorithm SmallGroups packs the small items with no violation of bin capacities while avoiding group conflicts. Thus, the resulting packing is feasible. \qed   
  }
\subsection{Putting it all Together}
\label{sec:putting it all together}
It remains to
show that the items discarded throughout the execution of the scheme can be packed in a small number of extra bins.


\begin{lemma}
\label{lem:discardFinal}
The medium items from small groups and all discarded items can be packed in $O(\eps) \cdot OPT$ extra bins.
\end{lemma}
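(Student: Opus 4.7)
The plan is to collect all discarded items together with the medium items from small groups into a single sub-instance $D$, establish that $S(D) = O(\eps) \cdot OPT$ and $v_{\max}(D) = \max_j |D \cap G_j| = O(\eps) \cdot OPT$, and then invoke Lemma~\ref{lem:balancedColoring} with the BalancedColoring subroutine to pack $D$ in $\max\{2 S(D),\, S(D) + v_{\max}(D)\} = O(\eps) \cdot OPT$ extra bins.

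The accounting is obtained by going through every rule of the scheme that discards items. The medium items from small groups (set aside at the end of Section~\ref{sec:Large Items and Medium Items from Large Groups}) contribute total size at most $\eps^2 \cdot OPT$ by the choice of $k$ in the analogue of Lemma~\ref{lem:k_val}, and at most $\eps^{k+2} \cdot OPT$ per small group by definition. The first size class from the linear shifting inside each large group contributes at most $O(\eps^{-(2k+3)}) \cdot Q$ items in total by Lemma~\ref{lem:few_large_groups}, where $Q = \lfloor \eps^{2k+4} \cdot OPT \rfloor$; this is $O(\eps) \cdot OPT$ items (of size at most $1$), with at most $Q$ per group. The first and last size classes from the Shift \& Swap on large items in small groups contribute at most $2Q = O(\eps \cdot OPT)$ items, and per small group at most $\eps^{k+2} \cdot OPT$ items, since each such group is small. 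Lemma~\ref{lem:recursiveOPT}(iii) bounds the representatives discarded in Step~\ref{step:Representative} of RecursiveEnum by $\eps \cdot OPT$ per group and $\eps \cdot OPT$ total size per iteration, and there are $\alpha = O(1)$ iterations. The eviction phase evicts at most $\eps \cdot OPT$ items per group by construction, and since each evicted item is small (size $<\eps^{k+1}$) with at most one eviction per bin, its total size is at most $\eps^{k+1} \cdot OPT$. The partition phase (Theorem~\ref{thm:partitionPhase}) rounds off $O(1)$ fractional items, a constant. Finally, the conflicts resolved in Step~\ref{step:discardsmalloverlarge} of GreedyPack contribute, by Lemma~\ref{lem:5ecoupledItems}, at most $\eps \cdot OPT$ total size and at most $\eps^{k+2} \cdot OPT$ per group. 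Summing these $O(1)$ contributions yields $S(D) = O(\eps) \cdot OPT$ and $v_{\max}(D) = O(\eps) \cdot OPT$.

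The main obstacle is the bookkeeping: every discard rule was stated with its own parameter (some in terms of $\eps^{k+2}$, some in terms of $\eps^{2k+4}$, some only via a total-size bound, some only via a per-group bound), and one must verify that when these $O(1)$ contributions are added both the aggregate size and the aggregate maximum per-group cardinality stay $O(\eps) \cdot OPT$ simultaneously — a per-group bound alone is not enough to apply Lemma~\ref{lem:balancedColoring}. Once this double bookkeeping is in place, Lemma~\ref{lem:balancedColoring} packs $D$ in $O(\eps) \cdot OPT$ extra bins, and the additive $O(1)$ from the fractional items of the polytope is absorbed into the additive constant in the APTAS guarantee.
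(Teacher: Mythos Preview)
Your proposal is correct and follows essentially the same approach as the paper: bound both the total size and the per-group cardinality of the union of all discarded items (including the medium items from small groups) by $O(\eps)\cdot OPT$, then apply BalancedColoring via Lemma~\ref{lem:balancedColoring}. The paper's actual proof is terser---it asserts the two aggregate bounds and invokes BalancedColoring---while you spell out the case-by-case bookkeeping, but the structure and the key idea are identical.
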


Algorithm~\ref{Alg:summarize} summarizes the steps of our scheme (see Appendix~\ref{sec:Approximation Scheme Pseudocode}).
\begin{theorem}
\label{theorem:APTAS}
There is an APTAS for the group bin packing problem.
\end{theorem}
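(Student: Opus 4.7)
The plan is to glue together the components developed in Sections~\ref{sec:Rounding of Large and Medium Items}--\ref{subsection:NEW_Small_Items} into a single enumeration-based scheme and verify that the total bin count is $(1+O(\eps))OPT + O(1)$ while the running time stays polynomial in $N$ for every fixed $\eps$. Fix a target accuracy $\eps_0 > 0$; I would run the scheme with a rescaled parameter $\eps = \Theta(\eps_0)$, chosen so that the cumulative $O(\eps)OPT$ losses from every phase sum to at most $\eps_0 OPT$. Since Lemma~\ref{lem:balancedColoring} bounds $OPT$ by $\max\{2S(I),S(I)+v_{max}\}$, I would try the scheme for every integer candidate in this polynomial-size range and return the packing using the fewest bins, so correctly identifying $OPT$ costs only a polynomial factor.

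Assuming the guess of $OPT$ is correct, the pipeline is: select $k$ via Lemma~\ref{lem:k_val}; classify items and groups; apply the Shift~\&~Swap rounding of Section~\ref{sec:Rounding of Large and Medium Items}, which by Lemmas~\ref{lem:shiftingNotIncreaseOPT}, \ref{lem:shiftingCanBeUsedForI} and \ref{lem:rounding1} produces an instance $I'$ with $OPT(I') \leq OPT$, $O(1)$ distinct rounded sizes for large/medium items, and only $O(\eps)OPT$ discarded items; enumerate slot patterns for the large and medium items (Lemma~\ref{lem:dw1}) and run the Swapping algorithm of Section~\ref{sec:Large Items and Medium Items from Large Groups} to pack large items and medium items from large groups feasibly in $OPT$ bins (Theorem~\ref{lem:swap}). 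Then invoke $PackSmallItems$: by Lemma~\ref{lem:recursiveOPT}, some enumeration branch of $RecursiveEnum$ matches $g_{opt}$; for that branch, Lemma~\ref{lem:eviction} and Theorem~\ref{thm:partitionPhase} yield a type-partition of the unpacked items satisfying the greedy preconditions up to $O(1)$ fractional items, and then Lemma~\ref{lem:greedyInType} together with Lemma~\ref{lem:greedySummary} guarantees that $GreedyPack$ packs all remaining items into the $OPT$ partially filled bins (plus $2\eps|t|$ extra bins per type) while discarding only $O(\eps)OPT$ items of total size $O(\eps)OPT$, with at most $O(\eps)OPT$ discards from each group. Finally, Lemma~\ref{lem:discardFinal} accommodates all discarded items together with the medium items of small groups in $O(\eps)OPT$ additional bins.

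Summing the additive losses across phases, namely the shifting discards (Lemma~\ref{lem:shiftingCanBeUsedForI}), the type padding in Step~\ref{step:padding} of $RecursiveEnum$, the representative-shifting discards in Step~\ref{step:Representative}, the eviction, the $2\eps|t|$ extra bins per type in $GreedyPack$, the $O(1)$ fractional items from the polytope vertex, and the bins of Lemma~\ref{lem:discardFinal}, gives a total of $(1+c\eps)OPT + O(1)$ for some universal constant $c$. Rescaling $\eps := \eps_0/c$ turns this into an asymptotic $(1+\eps_0)$-approximation. For the running time, every enumeration has $N^{O_\eps(1)}$ branches: by Lemma~\ref{lem:classes} only $O(1)$ bin types are generated, patterns per bin have $O(1)$ slots, the recursion depth $\alpha = O(1/\eps)$ of $RecursiveEnum$ is constant, and the polytope $P$ has polynomial encoding whose vertex is computable by linear programming; $Swapping$ and $GreedyPack$ are manifestly polynomial. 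This yields a polynomial-time asymptotic $(1+\eps_0)$-approximation, i.e., an APTAS.

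The main obstacle I expect is ensuring that the bin losses aggregate \emph{additively} rather than multiplicatively across the $\alpha = O(1/\eps)$ recursive levels of $RecursiveEnum$; an exponential-in-$\alpha$ blow-up would destroy the bound. This is controlled by the geometric shrinkage dictated by the update rule in Step~\ref{step:UpdateDE}, $f^{h+1}_b < \eps f^h_b$ for $b \in E_{h+1}$, which forces the total free capacity remaining in $E_\alpha$ to be at most $\eps OPT$; the eviction and padding losses therefore telescope to $O(\eps)OPT$ independently of $\alpha$. Verifying this telescoping carefully, together with confirming that the optimal branch exists simultaneously at every guessing level (Lemma~\ref{lem:recursiveOPT}(iv)) and that the $O(1)$ fractional items of the polytope vertex are indeed absorbed by the extra-bin budget, completes the argument.
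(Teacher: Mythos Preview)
Your proposal is correct and follows essentially the same synthesis argument as the paper: assemble the lemmas of Sections~\ref{sec:Rounding of Large and Medium Items}--\ref{subsection:NEW_Small_Items}, observe that all discards and extra bins sum to $O(\eps)OPT + O(1)$ (the paper delegates this bookkeeping to Lemma~\ref{lem:discardFinal}), rescale $\eps$ by the resulting universal constant, and note that every enumeration step is polynomial. Your write-up is in fact more explicit than the paper's own proof, which is a terse three-sentence pointer to Lemma~\ref{lem:discardFinal} and the polynomiality of each subroutine; your additional discussion of the telescoping across the $\alpha$ levels of $RecursiveEnum$ is a valid elaboration of what Lemma~\ref{lem:recursiveOPT}(ii)--(iii) already certifies.
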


\comment{

\negA
 \negA

\mysection{Bin Packing with Interval-graph Conflicts}
\label{sec:interval_graphs}

Consider instances of BPC with {\em interval-graph conflicts (IC)}. We show that the scheme in Section~\ref{sec:APTAS} can be extended to yield APTASs
for two non-trivial subclasses of such conflict graphs, namely, {\em proper}
and {\em constant length} interval graphs. 
Applying our scheme for such instances requires 
refinements of the definitions of slots and patterns (for packing the large items). Recall that for GBP each slot is associated with a
group. For IC the items may belong to multiple groups (i.e., maximal cliques in the interval graph); thus, a slot may be associated with a subset of groups.

For the small items, we generalize our greedy-based approach. A key 
idea in GreedyPack is to assign to the current bin a maximum total size while taking an item from each group. However, GreedyPack does not necessarily work on instances where items belong to multiple groups. Hence, we use dynamic programming to find for the current bin (in the course of GreedyPack) a feasible packing, taking an item from each small group (one item may cover multiple groups), such that the total packed size is maximized. The proofs of the next theorems and the complete details are given in Appendix~\ref{An APTAS for Proper Interval Conflict Graph}.

\begin{theorem}
\label{thm:proper}
Given an instance $I$ of bin packing with proper interval conflict graph and a fixed $\eps \in (0,1)$, there is a polynomial-time algorithm that packs $I$ in at most
$(1 +\eps) OPT(I) +1$ bins.
\end{theorem}

\begin{theorem}
\label{thm:ConstantLength}
Given an instance $I$ of bin packing with constant length interval conflict graph and a fixed $\eps \in (0,1)$, there is a polynomial-time algorithm that packs $I$ in at most
$(1 +\eps) OPT(I) +1$ bins.
\end{theorem}

}
\bibliographystyle{splncs04}
\bibliography{bibfile}

\begin{thebibliography}{10}
\providecommand{\url}[1]{\texttt{#1}}
\providecommand{\urlprefix}{URL }
\providecommand{\doi}[1]{https://doi.org/#1}

\bibitem{allornothing}
Adany, R., Feldman, M., Haramaty, E., Khandekar, R., Schieber, B., Schwartz,
  R., Shachnai, H., Tamir, T.: All-or-nothing generalized assignment with
  application to scheduling advertising campaigns. ACM Transactions on
  Algorithms (TALG)  \textbf{12}(3),  1--25 (2016)

\bibitem{alon1998approximation}
Alon, N., Azar, Y., Woeginger, G.J., Yadid, T.: Approximation schemes for
  scheduling on parallel machines. Journal of Scheduling  \textbf{1}(1),
  55--66 (1998)

\bibitem{boinc04}
Anderson, D.P.: Boinc: A system for public-resource computing and storage. In:
  Fifth IEEE/ACM international workshop on grid computing. pp. 4--10. IEEE
  (2004)

\bibitem{JOUR}
Anderson, D.P.: {BOINC}: A platform for volunteer computing. Journal of Grid
  Computing  \textbf{18},  99 -- 122 (2017)

\bibitem{C+17}
Christensen, H.I., Khan, A., Pokutta, S., Tetali, P.: Approximation and online
  algorithms for multidimensional bin packing: A survey. Computer Science
  Review  \textbf{24},  63--79 (2017)

\bibitem{C+13}
Coffman, E.G., Csirik, J., Galambos, G., Martello, S., Vigo, D.: Bin packing
  approximation algorithms: survey and classification. In: Handbook of
  combinatorial optimization, pp. 455--531 (2013)

\bibitem{DW17}
Das, S., Wiese, A.: On minimizing the makespan when some jobs cannot be
  assigned on the same machine. In: 25th Annual European Symposium on
  Algorithms, {ESA}. pp. 31:1--31:14 (2017)

\bibitem{ennajjar2017securing}
Ennajjar, I., Tabii, Y., Benkaddour, A.: Securing data in cloud computing by
  classification. In: Proceedings of the 2nd international Conference on Big
  Data, Cloud and Applications. pp.~1--5 (2017)

\bibitem{epstein2008bin}
Epstein, L., Levin, A.: On bin packing with conflicts. SIAM Journal on
  Optimization  \textbf{19}(3),  1270--1298 (2008)

\bibitem{Vega1981BinPC}
{Fernandez de la Vega}, W., Lueker, G.S.: Bin packing can be solved within 1 +
  $\varepsilon$ in linear time. Combinatorica  \textbf{1},  349--355 (1981)

\bibitem{Jansen_et_al:2019}
Grage, K., Jansen, K., Klein, K.M.: An {EPTAS} for machine scheduling with
  bag-constraints. In: The 31st ACM Symposium on Parallelism in Algorithms and
  Architectures. pp. 135--144 (2019)

\bibitem{guerine2019provenance}
Guerine, M., Stockinger, M.B., Rosseti, I., Simonetti, L.G., Oca{\~n}a, K.A.,
  Plastino, A., de~Oliveira, D.: A provenance-based heuristic for preserving
  results confidentiality in cloud-based scientific workflows. Future
  Generation Computer Systems  \textbf{97},  697--713 (2019)

\bibitem{hochba1997approximation}
Hochbaum, D.S. (ed.): Approximation Algorithms for {NP}-Hard Problems. PWS
  Publishing Co., USA (1996)

\bibitem{hochbaum1987using}
Hochbaum, D.S., Shmoys, D.B.: Using dual approximation algorithms for
  scheduling problems theoretical and practical results. Journal of the ACM
  \textbf{34}(1),  144--162 (1987)

\bibitem{HK56}
Hoffman, A.J., Kruskal, J.B.: Integral boundary points of convex polyhedra. In:
  Linear Inequalities and Related Systems.(AM-38), Volume 38, pp. 223--246.
  Princeton University Press (1956)

\bibitem{jansen1999approximation}
Jansen, K.: An approximation scheme for bin packing with conflicts. Journal of
  combinatorial optimization  \textbf{3}(4),  363--377 (1999)

\bibitem{MS:5}
Jansen, K.: An {EPTAS} for scheduling jobs on uniform processors: using an
  {MILP} relaxation with a constant number of integral variables. SIAM Journal
  on Discrete Mathematics  \textbf{24}(2),  457--485 (2010)

\bibitem{JKV16}
Jansen, K., Klein, K., Verschae, J.: Closing the gap for makespan scheduling
  via sparsification techniques. In: 43rd International Colloquium on Automata,
  Languages, and Programming (ICALP),. pp. 72:1--72:13 (2016)

\bibitem{JO97}
Jansen, K., {\"{O}}hring, S.R.: Approximation algorithms for time constrained
  scheduling. Inf. Comput.  \textbf{132}(2),  85--108 (1997)

\bibitem{karmarkar1982efficient}
Karmarkar, N., Karp, R.M.: An efficient approximation scheme for the
  one-dimensional bin-packing problem. In: 23rd Annual Symposium on Foundations
  of Computer Science. pp. 312--320. IEEE (1982)

\bibitem{MS:10}
Leung, J.Y.: Bin packing with restricted piece sizes. Information Processing
  Letters  \textbf{31}(3),  145--149 (1989)

\bibitem{EAFR}
{Lin}, Y., {Shen}, H.: Eafr: An energy-efficient adaptive file replication
  system in data-intensive clusters. IEEE Transactions on Parallel and
  Distributed Systems  \textbf{28}(4),  1017--1030 (2017)

\bibitem{mccloskey2005approaches}
McCloskey, B., Shankar, A.: Approaches to bin packing with clique-graph
  conflicts. Computer Science Division, University of California (2005)

\bibitem{OS95}
Oh, Y., Son, S.: On a constrained bin-packing problem. Technical Report
  CS-95-14  (1995)

\bibitem{Ro13}
Rothvo{\ss}, T.: Approximating bin packing within {O}(log {OPT} * log log
  {OPT)} bins. In: 54th Annual {IEEE} Symposium on Foundations of Computer
  Science. pp. 20--29. {IEEE} Computer Society (2013)

\bibitem{S94}
Simchi-Levi, D.: New worst-case results for the bin-packing problem. Naval
  Research Logistics (NRL)  \textbf{41}(4),  579--585 (1994)

\bibitem{vazirani}
Vazirani, V.V.: Approximation Algorithms. Springer-Verlag, Berlin, Heidelberg
  (2001)

\bibitem{Z07}
Zuckerman, D.: Linear degree extractors and the inapproximability of max clique
  and chromatic number. Theory of Computing  \textbf{3}(1),  103--128 (2007)

\end{thebibliography}

\appendix

\section{Applications of Group Bin Packing}
\label{sec:applications}
\subsection{Storing File Replicas}
Different versions (or replicas) of critical data files are distributed to servers around the network \cite{EAFR}. Each server has its storage capacity and can thus be viewed as a {\em bin}. Each data file is an {\em item}. The set of replicas of each data file forms a {\em group}.
To ensure better fault tolerance, 
 replicas of the same data file must be stored on different servers. The problem of storing a given set of file replicas on a minimal number of servers in the network can be cast as an instance of GBP.

\subsection{Security in Cloud Computing}
Computational projects of large data scale, such as scientific experiments or simulations, often rely on cloud computing. Commonly, the project data is also stored in the cloud. In this setting, a main concern is that a malicious entity might gain access to confidential data~\cite{ennajjar2017securing}. To strengthen security, data is dispersed among multiple cloud storage 
services~\cite{guerine2019provenance}. Projects are fragmented into critical tasks, so that no single task can reveal substantial information about the entire project.
 Then, each task is stored on a different storage service.
 Viewing a cloud storage service as a {\em bin} and each project as a {\em group} containing a collection of critical tasks ({\em items}), the problem of storing a set of projects on a minimal number of (identical) storage services yields an instance of GBP.

\comment{\subsubsection{Cloud Scheduling}
Often, jobs submitted for processing on cloud servers are deadline-sensitive; that is, such jobs need to be completed by a given deadline (see, e.g.,~\cite{sensitiveJobs}). 
Due to efficiency considerations, some jobs cannot be processed on the same server (for example, to better balance the load, any two CPU intensive jobs are assigned to different servers~\cite{jansen1999approximation}). To enable the completion of all jobs in a batch by their deadlines, more servers can be added. Viewing each server as a 
{\em bin} and the jobs as {\em items}, can be seen as the problem of 
{\em bin packing with conflicts}. In the special case where the input consists of groups of conflicting jobs,
it can be considered as an instance of GBP.}

\subsection{Signal Distribution}
Volunteer computing allows researchers and organizations to harvest computing capacity from volunteers, e.g.,
donors among the general public. 
The principal framework for volunteer
computing is the Berkeley Open Infrastructure for Network Computing, popularly known as
BOINC~\cite{boinc04,JOUR}. 
Such distributed systems must dispense work items to clients. The clients can be viewed as bins containing work
items. Each item requires some amount of processing time. A client will contribute only a fixed
number of processor cycles per day. Assume that work items that are correlated (such as signals from
the same region of the sky) can be verified against each other. To avoid tampering, 
signals are distributed so that no client processes more than one signal from the same region. Viewing signals from the same region as {\em groups}, we have an instance of GBP.

\section{Group Bin Packing vs. BP} 
\label{sec:gbp_bp_example}

A common approach in developing asymptotic approximation schemes for the bin packing problem is to distinguish between 
{\em large} and {\em small} items. Initially, the small items are {\em discarded} from the instance, and an (almost) optimal packing is obtained for the large items. The small items are then added in the remaining free space, with possible use of a small number of extra bins (see, e.g.,~\cite{Vega1981BinPC} and the comprehensive survey in~\cite{vazirani}). Unfortunately, when handling a GBP instance, this approach may not lead to an APTAS. Indeed, it may be the case that all of the small items belong to a {\em single} group. Thus, 
a large number of extra bins may be required to accommodate small items which cannot be added to previously packed bins. We give a detailed example below.

\begin{figure}[h!]
\centering
\subcaptionbox{}
{\includegraphics[width=6.0cm]{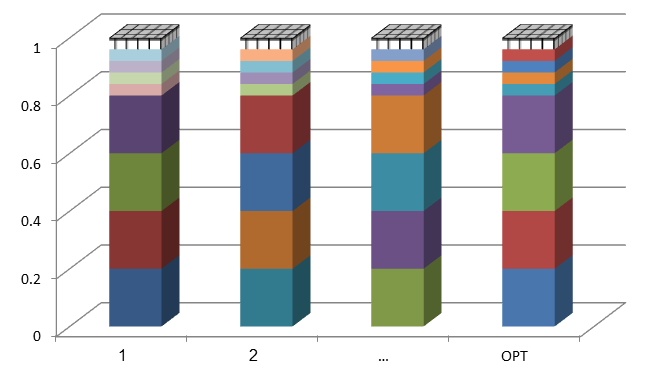}}
\subcaptionbox{}
{\includegraphics[width=6.0cm]{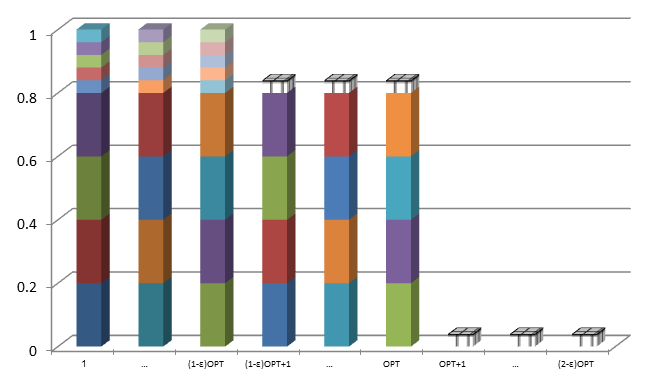}}
\caption{(a) An optimal packing. (b) A packing in $(2-\varepsilon) OPT$ bins.}
\end{figure}

Figure $1$ $(a)$ and $(b)$ illustrate two different packings of a given GBP instance $I$ with $N$ items and $n$ groups, where $I = \{G_1, \ldots, G_n\}$. The content of each group is given as a multiset, where each item is represented by a number (its size) in the range $(0,1]$. The instance consists of $n_1$ groups of a single item of size $\frac{1}{5}$: $G_1 = \{\frac{1}{5}\}, G_2 = \{\frac{1}{5}\}, \ldots, G_{n_1} = \{\frac{1}{5}\} \: $; $n_2$ groups with a single item of size $\frac{\varepsilon}{5}$: $G_{n_1+1} = \{\frac{\varepsilon}{5}\}, \ldots, G_{n_1+n_2} = \{\frac{\varepsilon}{5}\}$, and one group with $\hat{N}$ items of size $\frac{\varepsilon}{5}$; that is,
$G_n = \{\frac{\varepsilon}{5}, \ldots,\frac{\varepsilon}{5} \}$, where $n=n_1+n_2+1$ and $|G_n| = \hat{N}$. 
In addition, $n_1 = 4 {\hat N}$ and $n_2 = {\hat N} \cdot (\frac{1}{\eps} -1)$.\footnote{For simplicity, assume that $1/\eps$ and $\eps {\hat N}$ are integers.}
The checkered boxes (the upper rectangles on each column) represent items in $G_n$. All other groups contain a single item. Each of these other groups is represented by a box of a different color, of size corresponding to the size of the single item in the group, which can be either $\frac{1}{5}$ or $\frac{\varepsilon}{5}$. Each column represents a bin of unit capacity. 

Figure $1 (a)$ shows a packing where each bin contains $4$ items of size $\frac{1}{5}$ and $\frac{1}{\eps}$ items of size $\frac{\eps}{5}$; among these items, exactly one item belongs to $G_n$. Thus, the total number of bins is $\hat N$. Since each bin is full, this packing is optimal, i.e., $OPT={\hat N}$.

Now, suppose that, initially, all items of sizes larger than $\delta$, for some $\delta \in (0, \frac{1}{5})$, are packed optimally. The small items are then added in the free space in a greedy manner. Specifically, starting from the first bin, small items are added until the bin is full, or until it contains an item from each group. We then proceed to the next bin. Figure $1 (b)$ shows a packing in which the large items (each of size $\frac{1}{5}$) are packed first optimally, $4$ items in each bin, using $\frac{n_1}{4}={\hat N}$ bins. Then, the items in $G_{n_1+1}, \ldots , G_{n_1+n_2}$ are added greedily so that the first $(1 -\eps){\hat N}$ bins are full. Then, the first $\eps {\hat N}$
items in $G_n$ are packed in the remaining bins and a set of $(1-\eps){\hat N}$ bins is added for the remaining items. Overall, the number of bins used is 
$(2-\eps){\hat N}=(2-\eps)OPT$.

\section{Balanced Coloring}
\label{sec:balanced_col}
Our scheme uses as a subroutine an algorithm proposed in~\cite{allornothing} for the {\em group packing} problem. For completeness, we include an outline of the algorithm adapted to handle instances of our problem. Given a GBP instance $I$ with a set of groups $G_1, \ldots, G_n$,
denote by $S(I)$ the total size of items in $I$, i.e., $S(I) = \sum_{\ell \in [N]} s_\ell$, where $[N] = \{1,\ldots,N\}$. Let $v_j$ be the number of items in group $G_j$. 
Consider the following {\em balanced coloring} of the groups. Color the items of $G_1$ in arbitrary order, using  $v_1$  colors (so that each item is assigned a distinct color). Now, sort the items in $G_2$  in non-increasing order
by size. Scanning the sorted list, add the next item in $G_2$ to the color class of
minimum total size which does not contain an item in $G_2$. We handle similarly the items in $G_3, \ldots,
G_n$. Then, each color class can be packed, using First-Fit, as a bin packing instance, i.e., with no group constraints. Algorithm~\ref{Alg:balancedColoring} is the pseudocode of  BalancedColoring. 

\begin{algorithm}[h]
	\caption{$BalancedColoring(G_1,\ldots,G_n)$}
	\label{Alg:balancedColoring}
	\begin{algorithmic}[1]
	\State{Let $v_j=|G_j|$ be the cardinality of $G_j$ and $v_{max}= \max_{1 \leq j \leq n} v_j$ be the maximum cardinality of any group.} 	
	\State{Partition arbitrarily the items of $G_1$ into $v_{1}$ color classes, such that each item is assigned a distinct color.}
	\State{Add $v_{max}-v_{1}$ empty color classes.}
	\For{$j = 2, \ldots, n$}
	\State{Sort $G_j$ in a non-increasing order by item sizes.}
	\State{Let ${i_1}, \ldots, {i_{v_j}}$ be the items in $G_j$ in the sorted order. }
	\For{$\ell = 1, \ldots, v_j$}
	\State{Add $i_{\ell}$ to a color class of minimum total size with no items from $G_j$.}
	\EndFor
	\EndFor
	\For{$k = 1, \ldots, v_{max}$}
	\State{Pack the items in color class $c_k$ in a new set of bins using First-Fit.}
	\EndFor
	\end{algorithmic}
\end{algorithm}


\section{Approximation Scheme for GBP}
\label{sec:Approximation Scheme Pseudocode}
Algorithm~\ref{Alg:summarize} gives the pseudocode of our APTAS for GBP. 

\begin{algorithm}[h]
	\caption{$Approximation Scheme(I, G_1,\ldots,G_n, \eps)$}
	\label{Alg:summarize}
	\begin{algorithmic}[1]
	\raggedright
		\State{Guess $OPT$.}
		\State{Let $b_1,\ldots,b_{OPT}$ be $OPT$ empty bins.}
		\State{Find an integer $k \in \{1,...,\frac{1}{\varepsilon^2}\}$ such that $\sum_{\ell \in I : \: s_{\ell} \in
				[\varepsilon^{k+1},\varepsilon^{k})} \: s_{\ell} \leq \varepsilon^{2} \cdot OPT$.}
		\State{An item $\ell$ is small if $s_{\ell} < \varepsilon^{k+1}$, medium if $s_{\ell} \in [\varepsilon^{k+1},\varepsilon^{k})$ and large otherwise.}
		\State{Define a group as large if it contains at least $\eps^{k+2} OPT$ large and medium items, and as small otherwise.}
		\For{large groups $G_i$}
		\State{\parbox[t]{\dimexpr\textwidth-\leftmargin-\labelsep-\labelwidth}{Apply linear shifting to medium and large items from $G_i$ with parameter $Q = \lfloor \eps^{2k+4} \cdot OPT \rfloor$.}}
		\EndFor
		\State{Apply linear shifting jointly to all large items from small groups with parameter $Q = \lfloor 2\eps \cdot OPT \rfloor$ and discard the last size class.}
		\For{each guess $\zeta$ of slot patterns for the bins}	
		\State{Discard medium items from small groups.}
		\State{Pack large and medium items by $Swapping(\zeta, G_1, \ldots, G_n)$. }
		\State{pack the remaining items using $PackSmallItems(I_0,B)$}
		\EndFor	
	\end{algorithmic}
\end{algorithm}

\comment{
\section{Balanced Coloring}\label{BalancedColoring}
\label{appendix:BalancedColoring}

Algorithm~\ref{Alg:balancedColoring} is the pseudocode of algorithm BalancedColoring. 

\begin{algorithm}[h]
	\caption{$BalancedColoring(G_1,\ldots,G_n)$}
	\label{Alg:balancedColoring}
	\begin{algorithmic}[1]
	\State{Let $v_j=|G_j|$ be the cardinality of $G_j$ and $v_{max}= \max_{1 \leq j \leq n} v_j$ be the maximum cardinality of any group.} 	
	\State{Partition arbitrarily the items of $G_1$ into $v_{1}$ color classes, such that each item is assigned a distinct color.}
	\State{Add $v_{max}-v_{1}$ empty color classes.}
	\For{$j = 2, \ldots, n$}
	\State{Sort $G_j$ in a non-increasing order by item sizes.}
	\State{Let ${i_1}, \ldots, {i_{v_j}}$ be the items in $G_j$ in the sorted order. }
	\For{$\ell = 1, \ldots, v_j$}
	\State{Add $i_{\ell}$ to a color class of minimum total size with no items from $G_j$.}
	\EndFor
	\EndFor
	\For{$k = 1, \ldots, v_{max}$}
	\State{Pack the items in color class $c_k$ in a new set of bins using First-Fit.}
	\EndFor
	
	\State{Pack all discarded items by $BalancedColoring$.\label{step:balanced}}
	\end{algorithmic}
\end{algorithm}




\section{An example: Group Bin Packing vs. BP}\label{Group Bin Packing vs. BP}
\label{appendix:Group Bin Packing vs. BP}

\begin{figure}[h!]
\centering
\subcaptionbox{}
{\includegraphics[width=6.0cm]{Captureilan8.png}}
\subcaptionbox{}
{\includegraphics[width=6.0cm]{Captureilan9.png}}
\caption{(a) An optimal packing. (b) A packing in $(2-\varepsilon) OPT$ bins.}
\end{figure}

Figure $1$ $(a)$ and $(b)$ illustrate two different packings of a given GBP instance $I$ with $N$ items and $n$ groups, where $I = \{G_1, \ldots, G_n\}$. The content of each group is given as a multiset, where each item is represented by a number (its size) in the range $(0,1]$. The instance consists of $n_1$ groups of a single item of size $\frac{1}{5}$: $G_1 = \{\frac{1}{5}\}, G_2 = \{\frac{1}{5}\}, \ldots, G_{n_1} = \{\frac{1}{5}\} \: $; $n_2$ groups with a single item of size $\frac{\varepsilon}{5}$: $G_{n_1+1} = \{\frac{\varepsilon}{5}\}, \ldots, G_{n_1+n_2} = \{\frac{\varepsilon}{5}\}$, and one group with $\hat{N}$ items of size $\frac{\varepsilon}{5}$; that is,
$G_n = \{\frac{\varepsilon}{5}, \ldots,\frac{\varepsilon}{5} \}$, where $n=n_1+n_2+1$ and $|G_n| = \hat{N}$. 
In addition, $n_1 = 4 {\hat N}$ and $n_2 = {\hat N} \cdot (\frac{1}{\eps} -1)$.\footnote{For simplicity, assume that $1/\eps$ and $\eps {\hat N}$ are integers.}
The checkered boxes (the upper rectangles on each column) represent items in $G_n$. All other groups contain a single item. Each of these other groups is represented by a box of a different color, of size corresponding to the size of the single item in the group, which can be either $\frac{1}{5}$ or $\frac{\varepsilon}{5}$. Each column represents a bin of unit capacity. 

Figure $1 (a)$ shows a packing where each bin contains $4$ items of size $\frac{1}{5}$ and $\frac{1}{\eps}$ items of size $\frac{\eps}{5}$; among these items, exactly one item belongs to $G_n$. Thus, the total number of bins is $\hat N$. Since each bin is full, this packing is optimal, i.e., $OPT={\hat N}$.

Now, suppose that, initially, all items of sizes larger than $\delta$, for some $\delta \in (0, \frac{1}{5})$, are packed optimally. The small items are then added in the free space in a greedy manner. Specifically, starting from the first bin, small items are added until the bin is full, or until it contains an item from each group. We then proceed to the next bin. Figure $1 (b)$ shows a packing in which the large items (each of size $\frac{1}{5}$) are packed first optimally, $4$ items in each bin, using $\frac{n_1}{4}={\hat N}$ bins. Then, the items in $G_{n_1+1}, \ldots , G_{n_1+n_2}$ are added greedily so that the first $(1 -\eps){\hat N}$ bins are full. Then, the first $\eps {\hat N}$
items in $G_n$ are packed in the remaining bins and a set of $(1-\eps){\hat N}$ bins is added for the remaining items. Overall, the number of bins used is 
$(2-\eps){\hat N}=(2-\eps)OPT$.

\comment{
In (a) there is a packing of $I$ such that there are 4 items of size $\frac{1}{5}$ and $\frac{1}{\eps}$ items of size $\frac{\varepsilon}{5}$ in each bin such that exactly one of them is of group $G_n$. In each bin there is one item of group $G_n$, thus there are $\hat{N}$ bins in the packing. Since each bin is packed using all of its capacity, we conclude that (a) is an optimal packing, hence $OPT = \hat{N}$. (b) is a packing of $I$ by an algorithm that packs optimally all items larger than $\varepsilon$ (assuming $\varepsilon < \frac{1}{5}$), and all items with sizes smaller or equal to $\varepsilon$ are packed greedily by the following way. The algorithm tries to pack each bin with arbitrary remaining items, until it cannot add anymore items, due to the capacity constraint of the bin or that the bin already has an item from all of the groups; then, the algorithm continues to the next bin and so on. The packing in (b) is resulted by packing the items of group $G_n$ last, such that the large items are packed optimally, the same way as in (a), and additionally, the first $(1-\varepsilon)\hat{N}$ bins are packed greedily, to the fullest of their capacity with items of size $\frac{\eps}{5}$ on top of the larger items. Then, the algorithm packs items of group $G_n$, however after packing $\varepsilon \hat{N}$ bins with an item from $G_n$, the algorithm needs to use a new set of $(1-\varepsilon)\hat{N}$ bins, in order to avoid conflicts of two items from group $G_n$ in one bin. Overall, in (b) the packing uses $(2-\varepsilon) \hat{N}$ bins, that is $(2-\varepsilon) OPT$ bins.
}

\comment{\section{Price of Clustering for GBP}
\label{app:POC}

In the next theorems we derive tight bounds for $PoC(GBP)$ depending on the range of values for $R_i$, $1 \leq i \leq m$.

\begin{theorem}
\label{theorem:PoC1}
If $R_i \geq 1$ $\forall i \in [m]$  then $PoC(GBP) = 2$.
\end{theorem}

\begin{proof}
If $R_i \geq 1$ then $$v_{max}^{i} \leq S(I_i) \rightarrow v_{max}^{i}+S(I_i) \leq 2S(I_i)$$
Using Lemma~\ref{lem:balancedColoring}, we have that 
$PoC(GBP) \leq sup_{I,I_i}\frac{\sum_{I_i}2S(I_i)}{OPT(I)}$.
As the total size of items is a lower bound on $OPT(I)$, we have
$$PoC(GBP) \leq sup_{I,I_i}\frac{\sum_{I_i}2S(I_i)}{S(I)} = sup_{I,I_i} \frac{2S(I)}{S(I)} = 2$$

We now show that $2$ is also a lower bound.
Consider a partition of an instance into clusters, such that
$$I_1 = \{G_1 = \{1\},G_2 = \{\frac{1}{N}\}\}, I_2 = \{G_3 = \{1\}, G_4 = \{\frac{1}{N}\}\}, \ldots , I_N = \{G_{2N-1} = \{1 \}, G_{2N} = \{ \frac{1}{N} \} \}$$
In words, each cluster consists of two groups: the first contains a single item of size $1$, and the second contains a single item of size $\frac{1}{N}$.
For the above partition $v^i_{max} = 1$; thus, $R_i = 1+\frac{1}{N} \geq 1, i \in [N]$. In the optimal packing, all items of size $\frac{1}{N}$ can be packed in a single bin, while each item of size 1 is packed in a separate bin. Thus, $OPT(I) = N+1$. Also, $OPT(I_i) = 2$ for all $i$; thus, $\sum_{I_i}OPT(I_i) = 2N$. Hence, $$\lim_{N \rightarrow \infty}\frac{2N}{N+1} = 2.$$ 
It follows that  $PoC(GBP) = 2$.
\qed    \end{proof} }

\comment{ 
 \begin{theorem}
 \label{theorem:PoCk}
if $\forall i \in [m] \: R_i > k$ then $PoC(GBP)_{k \rightarrow \infty} = 1.691...$, and $PoC(GBP)_{k \geq 3} \leq 1.93558$.
\end{theorem}
\begin{proof}
if $\forall i \in [m] \: R_i > k$ then we can use BalancedColoring with $v_{max}^{i}$ colors, for every cluster $I_i$. this assures us that for every two colors $c_i,c_j:i,j \in [v_{max}^{i}]$ their sizes hold: $|c_i-c_j| \leq 1$. This guarantees that after the coloring each color contains size of items of strictly more than $k-1$, hence each color can be seen as a regular bin packing instance such that it needs at least $k$ bins to pack its items. Each color is indeed a BP instance (i.e, GBP instance such that each group is with one item) since we split each group to the  $v_{max}^{i}$ colors, thus there is at most one item from that group in a color after the coloring.

Denote by $c_1,...,c_{v_{max}^{i}}$ the color sets. By looking at some color group $c_j$ as a bin packing instance, then: $OPT(c_j) \geq k$ because the sizes of the bins are $1$, and a lower bound on the optimum is the total size of the items (or closest integer value from above for non integer total size), thus we need at least $k$ bins in order to pack $c_j$.
Consider each color as a new "cluster", thus since the partition into color sets does not necessarily yield the optimum for each cluster, the solution achieved by packing optimally each color set separately is an upper bound on $OPT(I_i)$. Therefore:

$$\displaystyle{PoC(GBP) = sup_{I,I_i}\frac{{\sum_{I_i}OPT(I_i)}}{OPT(I)} \leq sup_{I,I_i}\frac{\sum_{I_i}\sum_{c_j \in c1,...,c_{v_{max}^{i}}}OPT(c_j)}{OPT(I)}}$$

We deduce that these are exactly the conditions for the results of previous works on classic BP with clustering \cite{PoC:A,PoC:LE}, since $OPT(c_j) \geq k$, Therefore:
$PoC(GBP)_{k \rightarrow \infty} \leq 1.691...$, and $PoC(GBP)_{k \geq 3} \leq 1.93558$ 

We now show that $PoC(GBP)_{k \rightarrow \infty} \geq 1.691...$ therefore $PoC(GBP)_{k \rightarrow \infty} = 1.691...$.

Every instance $I$ of BP can be represented as a GBP instance $I'$ such that, each item is in a group with exactly one item. Consequently, A given BP instance $I$  can be transformed into a GBP instance with the same ratio of $\frac{{\sum_{I_i}OPT(I_i)}}{OPT(I)}$, thus $PoC(GBP)(I')$  must be an upper bound on $PoC(BP)(I)$  . Hence $PoC(GBP)_{k \rightarrow \infty} = 1.691...$ by \cite{PoC:A}. 
\qed    \end{proof}

The next theorem shows that in order to have $PoC(GBP) \leq 2$, the condition of theorem ~\ref{theorem:PoC1} is not only sufficient, it is also necessary. The next Theorem also reveals a significant difference between $PoC(BP)$ and $PoC(GBP)$: in BP , because $v^i_{max} = 1$, if $S(I_i) = R_i \geq \alpha, \alpha < 1$ then $PoC(BP) =  \frac{1}{\alpha}$ (can be proven by the fact that $\sum_{I_i}OPT(I_i) \leq \frac{S(I_i)}{\alpha}$), where for GBP we get by the next theorem that $R_i \geq \alpha$, $\forall \alpha < 1$, then $PoC(GBP) = 1+\frac{1}{\alpha}$, which is significantly higher.  

\begin{theorem}
\label{theorem:PoCalpha}
For all $\alpha \in (0,1)$, if $R_i \geq \alpha$ $\forall i \in [m] \:$, then the price of clustering for GBP $(PoC(GBP))$ is $1+\frac{1}{\alpha}$.
\end{theorem}
\begin{proof}
We first prove that $PoC(GBP) \leq 1+\frac{1}{\alpha}$. In addition, we prove that there is a sequence $I^1, \ldots, I^z, \ldots$ such that each element of the sequence is divided into clusters $I^z,\{I_i\}_{i = 1}^{m_z}$, such that $\displaystyle{ \lim_{z \rightarrow \infty}\frac{{\sum_{I_i}OPT(I^z_i)}}{OPT(I^z)} \geq 1+\frac{1}{\alpha}}$. Therefore, $PoC(GBP) \geq 1+\frac{1}{\alpha}$ and we conclude that $PoC = 1+\frac{1}{\alpha}$ under the given constraints.

\begin{enumerate}

\item 
Let there be a GBP instance $I$ and a partition of $I$ into $m$ clusters, $I_1, \ldots, I_m$ such that $R_i \geq \alpha, \forall i \in [m]$. By Lemma~\ref{lem:balancedColoring} we know that any GBP instance I holds $OPT(I) \leq max \{2S(I),S(I)+v_{max}\}$. Let $D \subseteq [m]$ be all indices $i$ for the clusters which hold $2S(I_i) > S(I_i)+v^i_{max}$.  Therefore, 

\[
\begin{array}{ll}
 PoC(GBP) & = \displaystyle{sup_{I,I_i}\frac{\sum_{I_i}OPT(I_i)}{OPT(I)}} \\
 
 \\
 
 & \leq \displaystyle{sup_{I,I_i}\frac{\sum_{i \in D}OPT(I_i)+\sum_{i \in [m] \setminus D}OPT(I_i)}{OPT(I)}} \\
 
  \\

& \leq \displaystyle{sup_{I,I_i}\frac{\sum_{i \in D}2S(I_i)+\sum_{i \in [m] \setminus D}(S(I_i)+v^i_{max})}{OPT(I)}} \\

 \\

& \leq \displaystyle{sup_{I,I_i}\frac{\sum_{i \in D}2S(I_i)+\sum_{i \in [m] \setminus D}(S(I_i)+\frac{S(I_i)}{\alpha})}{OPT(I)}} \\  \\

& \leq \displaystyle{sup_{I,I_i}\frac{\sum_{I_i}(1+\frac{1}{\alpha})S(I_i)}{OPT(I)}} \\  \\

& \leq \displaystyle{sup_{I,I_i}\frac{\sum_{I_i}(1+\frac{1}{\alpha})S(I_i)}{S(I)}} \\  \\

& = \displaystyle{sup_{I,I_i}\frac{(1+\frac{1}{\alpha}) S(I)}{S(I)}} = 1+\frac{1}{\alpha}
 \end{array}
 \]

The third inequality is true because it is given that $R_i \geq \alpha, \alpha < 1$ thus $ \frac{S(I_i)}{\alpha} \geq v^i_{max}$. The forth inequality is true since $S(I) \leq OPT(I)$ and $\alpha < 1$.
\item  We show that $PoC(GBP) \geq 1+\frac{1}{\alpha}$. our way of achieving that is to define a sequence of GBP instances and partitions of them into clusters, with increasing price of clustering, where the price of clustering for the limit of the sequence is at least $1+\frac{1}{\alpha}$. We choose deliberately a partition into clusters such that its price of clustering is high: in each cluster we put two groups, one group with items of sizes close to $0$, and in the other the sizes of the items are close to $1$ such that $R_i \geq \alpha$ and we cannot pack together two items from the two groups due to the capacity constraint. For each group, there is another group in a different cluster such that the total size of any two items together, one from each of group is $1$. The remaining items can be packed in an asymptotically negligible number of bins. The upper bound follows by a concrete definition of such a sequence.

 We define the sequence of GBP instances and for each a corresponding partition into clusters as follows.  Denote the sequence by $a_1 = (I^1,\{I_i\}_{i = 1}^{m_1}), \ldots, a_z = (I^z,\{I^z_i\}_{i = 1}^{m_z}), \ldots$, where $m_z$ will be defined later in order to adjust its value to the analysis. 

Let there be $\alpha < 1$. We now define each element of the sequence, $a_z$. Let ${{\gamma}_z}, {{\beta}_z} \in \mathbb{N}, {{\gamma}_z} < {{\beta}_z} \leq z$ such that $\frac{{{\gamma}_z}}{{{\beta}_z}} \geq \alpha$ and in addition $\frac{{{\gamma}_z}}{{{\beta}_z}} - \alpha$ is minimal. We deduce that  $\lim_{z \rightarrow \infty}\frac{{{\gamma}_z}}{{{\beta}_z}} = \alpha$. 
 We define $I^z$ and its partition into $m_z$ clusters. Cluster $I^z_i, i \in [m_z-1]$ is defined as follows.
 
 $$I^z_{i} = \{G^z_{2{i}-1},G^z_{2{i}}\}$$ 
$$G^z_{2{i}-1} = \{s^z_1,...,s_{{{\beta}_z}}\}, s^z_{\ell} = \sum_{j = 1}^{{i+1}}{e^{-zj}} \: \forall \ell \in [{{{\beta}_z}}]$$ 
$$G^z_{2{i}} = \{s^z_1,...,s_{{{\gamma}_z}}\}, s^z_{\ell} = 1-\sum_{j = 1}^{i}{e^{-zj}} \: \forall \ell \in [{{{\gamma}_z}}]$$

E.g., $I^z_1$ is defined as:  
$$I^z_1 = \{G^z_1,G^z_2\}$$ 
$$G^z_1 = \{s1,...,s^z_{{{\beta}_z}}\}, s^z_{\ell} = e^{-z}+e^{-2z} \: \forall \ell \in [{{{\beta}_z}}]$$ 
$$G^z_2 = \{s^z_1,...,s^z_{{{\gamma}_z}}\}, s^z_{\ell} = 1-e^{-z} \: \forall \ell \in [{{{\gamma}_z}}]$$

In addition, $I^z_{m_z}$ is defined as: 

$$I^z_{{m_z}} = \{G^z_{2{m_z}-1},G^z_{2{m_z}}\}$$ 
$$G^z_{2{m_z}-1} = \{s1,...,s^z_{{{\beta}_z}}\},s^z_{\ell} = e^{-z} \: \forall \ell \in [{{{\beta}_z}}]$$ 
$$G^z_{2{m_z}} = \{s^z_1,...,s^z_{{{\gamma}_z}}\},s^z_{\ell} = 1-\sum_{j = 1}^{{m_z}}{e^{-zj}} \: \forall \ell \in [{{{\gamma}_z}}]$$

Let there be $I^z_i$. We notice that $R_i \geq \alpha$ because $S(I^z_i) > {{\gamma}_z}, v^i_{max} = {{\beta}_z}$ and $\frac{\gamma_z}{\beta_z} \geq \alpha$. Due to the fact that there are no two items which can be accommodated together in $I^z_i$ we  deduce that $OPT(I^z_i) = {{{\beta}_z}}+{{{\gamma}_z}}, \forall i \in [m_z-1]$.
In addition: $OPT(I^z_m) = {{{\beta}_z}}$ by packing one item from $G^z_{2m-1}$ with one item from $G^z_{2m}$ in each bin. Hence: $$\sum_{I^z_i} OPT(I^z_i) = (m_z-1)({{{\beta}_z}}+{{{\gamma}_z}})+{{{\beta}_z}}$$

We now find a packing for $I^z$ without the partition into clusters. Each item of $G^z_4$ is packed with an item from $G^z_1$, each item of $G^z_6$ is packed with an item from $G^z_3$,..., each item of $G^z_{2j}, \forall j \in \{2,3, ...,m_z\}$ is packed with an item from $G^z_{2j-3}$. In addition, each item from $G^z_{2}$ is packed with an item from $G^z_{2m_z-1}$. 

All the remaining items are ${{\beta}_z}-{{\gamma}_z}$ items from each odd numbered group. Depends on the value of $m_z$, the remaining items can be packed in ${{\beta}_z}-{{\gamma}_z}$ bins, i.e., we pack in ${{\beta}_z}-{{\gamma}_z}$ bins an item from each odd numbered group. The total size of each of these ${{\beta}_z}-{{\gamma}_z}$ bins is $\sum_{j = 1}^{m_z}(m_z+1-j)e^{-zj}$. We define $m_z$ such that each of these bins is packed feasibly. 

We demand that $\sum_{j = 1}^{{m_z}}({m_z}+1-j)e^{-zj} \leq \sum_{j = 1}^{{m_z}}({m_z}+1)e^{-zj} \leq 1$ thus we get by sum of a geometric sequence:  $$(m_z+1)(\frac{e^{-z}-e^{-z^{m_z+1}}}{1-e^{-z}}) \leq 1$$

We argue that $m_z = \frac{e^z}{2}$ holds the last inequality thus can make a feasible packing as we described.

 \[
\begin{array}{ll}
 \frac{{\sum_{I^z_i}OPT(I^z_i)}}{OPT(I^z)} & \geq \displaystyle{\frac{(m_z-1)({{\beta}_z}+{{\gamma}_z})+{{\beta}_z}}{{{\gamma}_z}\cdot m_z+{{\beta}_z}-{{\gamma}_z}}} \\  \\
 
 & = \displaystyle{\frac{m_z {{\beta}_z}+(m_z-1)\cdot {{\gamma}_z}}{{{\gamma}_z} \cdot (m_z-1)+{{\beta}_z}-{{\gamma}_z}+{{\gamma}_z}}} \\  \\
 & = \displaystyle{\frac{m_z {{\beta}_z}}{{{\gamma}_z} \cdot m_z+{{\beta}_z}-{{\gamma}_z}}+\frac{(m_z-1)\cdot {{\gamma}_z}}{{{\gamma}_z} \cdot (m_z-1)+{{\beta}_z}}} \\  \\
 
 & \geq \displaystyle{\frac{m_z {{\beta}_z}}{{{\gamma}_z} \cdot m_z+z}+\frac{(m_z-1)\cdot {{\gamma}_z}}{{{\gamma}_z} \cdot (m_z-1)+z}} \\  \\
 
 & = \displaystyle{\frac{{\frac{e^z}{2}} {{\beta}_z}}{{{\gamma}_z} \cdot {\frac{e^z}{2}}+z}+\frac{({\frac{e^z}{2}}-1)\cdot {{\gamma}_z}}{{{\gamma}_z} \cdot ({\frac{e^z}{2}}-1)+z}} \\ \\
 
 & = \frac{1}{\displaystyle{\frac{{{\gamma}_z} \cdot {\frac{e^z}{2}}+z}{{\frac{e^z}{2}} {{\beta}_z}} }} +                           \frac{1}{\displaystyle{\frac{{{\gamma}_z} \cdot ({\frac{e^z}{2}}-1)+z}{({\frac{e^z}{2}}-1)\cdot {{\gamma}_z}}}} \\ \\
 
 & \geq \displaystyle{\frac{1}{\alpha+\frac{1}{z} + \frac{2z}{e^z {{\beta}_z}}  }  + \frac{1}{1+\frac{z}{(e^z/2-1){{\gamma}_z}}}}
 \end{array}
 \]

The first inequality is true since by the packing that we describe above, we get $OPT(I^z) \leq K \cdot m_z + {{\beta}_z} - {{\gamma}_z}$. The second inequality is true since we know that ${{\beta}_z}-{{\gamma}_z} \leq z,{{\beta}_z} \leq z$. The third equality is true since we defined $m_z = \frac{e^z}{2}$. The third inequality is true Because $\frac{{{\gamma}_z}}{{{\beta}_z}}$ is as close as possible to $\alpha$ from all the numbers up to $b$, we deduce that $\alpha \leq \frac{{{\gamma}_z}}{{{\beta}_z}} \leq \alpha+\frac{1}{z}$.

Hence: 
 $$\displaystyle{\lim_{z \rightarrow \infty} \frac{1}{\alpha+\frac{1}{z} + \frac{2z}{e^z {{\beta}_z}}  }  + \frac{1}{1+\frac{z}{(e^z/2-1){{\gamma}_z}}} = 1+\frac{1}{\alpha}}$$

In summary, there is a sequence of instances, such that $$\lim_{z \rightarrow \infty} \frac{{\sum_{I^z_i}OPT(I^z_i)}}{OPT(I^z)} \geq 1+\frac{1}{\alpha}$$
thus because PoC is defined to be the Supremum upon any $I,\{I_i\}_{i = 1}^{m}$, it means that $PoC \geq 1+ \frac{1}{\alpha}$

\end{enumerate}
Concluding the proof, we show that $PoC(GBP) \leq 1+ \frac{1}{\alpha}$ and that $PoC(GBP) \geq 1+ \frac{1}{\alpha}$, hence $PoC(GBP) = 1+ \frac{1}{\alpha}$

\qed    \end{proof}

Denote the APTAS that we obtain in Section~\ref{sec:APTAS} by ${\cal A}$. Let $v^{min}_{max}$ be the size of the maximal group in the cluster where the maximal group is minimal among all other clusters. 

\begin{lemma}
\label{lem:maxmin}
For all $\alpha \in (0,1)$, if $R_i \geq \alpha$ $\forall i \in [m] \:$ then $\frac{m}{S(I)} \leq \frac{1}{v^{min}_{max} \alpha}$
\end{lemma}

 \begin{proof}
\[
 \displaystyle{\begin{array}{ll}
 m \cdot \alpha & \leq \displaystyle{ \sum_{i \in [m]} R_i} \\ \\
 & = \displaystyle{\sum_{i \in [m]} \frac{S(I_i)}{v^i_{max}}} \\ \\
 & = \displaystyle{\frac{\sum_{i \in [m]} \prod_{j \neq i} v^j_{max} S(I_i)}{\prod_{i \in [m]} v^i_{max}}} \\ \\
 & \leq \displaystyle{\frac{\sum_{i \in [m]} \prod_{j \neq {min}} v^j_{max} S(I_i)}{\prod_{i \in [m]} v^i_{max}}} \\ \\
 & = \displaystyle{\frac{\prod_{j \neq {min}} v^j_{max} \sum_{i \in [m]}  S(I_i)}{\prod_{i \in [m]} v^i_{max}}} \\ \\
 & = \displaystyle{\frac{S(I)}{v^{min}_{max}}}

 \end{array}}
 \]
 Therefore, we can use the inequalities above to conclude that $\frac{m}{S(I)} \leq \frac{1}{v^{min}_{max} \alpha}$.
\qed    \end{proof}

The next lemma show bounds on $PoC_{{\cal A}}(GBP)$ using the previous theorems. 

\begin{lemma}
\label{lem:pocA}
For all $\alpha \in (0,1)$, if $R_i \geq \alpha$ $\forall i \in [m] \:$, $PoC_{{\cal A}}(GBP) = 1+\frac{1}{\alpha}+\frac{1}{v^{min}_{max} \alpha(1+\eps)}$
\end{lemma}

\begin{proof}
\[
\begin{array}{ll}
 PoC_{{\cal A}}(GBP) & = \displaystyle{sup_{I,I_i}\frac{\sum_{I_i}{\cal A}(I_i)}{{\cal A}(I)}} \\ \\
 & = \displaystyle{sup_{I,I_i}\frac{\sum_{I_i}(1+\eps)OPT(I_i)+1}{(1+\eps)OPT(I)+1}} \\ \\
 & = \displaystyle{sup_{I,I_i}\frac{m+(1+\eps)\sum_{I_i}OPT(I_i)}{(1+\eps)OPT(I)+1}} \\ \\
 & \leq \displaystyle{sup_{I,I_i}\frac{m+(1+\eps)\sum_{I_i}OPT(I_i)}{(1+\eps)OPT(I)}} \\ \\
  & \leq \displaystyle{\frac{m}{(1+\eps)S(I)}+PoC(GBP)} \\ \\
 & \leq \displaystyle{\frac{1}{v^{min}_{max} \alpha(1+\eps)}+PoC(GBP) = 1+\frac{1}{\alpha}+\frac{1}{v^{min}_{max} \alpha(1+\eps)}}
 \end{array}
 \]
 
 The second equality is true by Section~\ref{sec:APTAS}. The third inequality is true By Lemma~\ref{lem:maxmin}. 
 
 The lower bound can be achieved similarly to the proof of the lower bound in Theorem~\ref{theorem:PoCalpha}. 
\qed    \end{proof}

Similarly to the proof of Lemma~\ref{lem:pocA}, we can prove that if $R_i > k \geq 3, \forall i \in [m]$, then $PoC_{{\cal A}}(GBP) \leq 1.93558+\frac{1}{v^{min}_{max} k(1+\eps)}$ and $lim_{k \rightarrow \infty} PoC_{{\cal A}}(GBP) = 1.691...$ (see the proof of Lemma~\ref{lem:maxmin}). In addition, it stems from the above that if $R_i \geq 1, \forall i \in [m]$, then $PoC_{{\cal A}}(GBP) \leq 2+\frac{1}{v^{min}_{max}(1+\eps)}$

}

\section{Omitted Proofs}\label{omitted-proofs}
\label{appendix:ommited proofs}

\noindent{\bf Proof of Lemma~\ref{lem:k_val}:}
	Assume that, for a given instance $I$, 
	$$\forall k \in \{1,...,\lceil \frac{1}{\varepsilon^2} \rceil \}:  \sum_{\ell \in I :  s_{\ell} \in [\varepsilon^{k+1},\varepsilon^{k})}  s_{\ell} > \varepsilon^{2} \cdot OPT.$$
	For each value of $k$, there is a distinct subset of items whose sizes are in $[\varepsilon^{k+1},\varepsilon^{k})$. 
	Therefore,
	$$\sum_{\ell \in I}  s_{\ell} > \sum_{k = 1}^{\lceil \frac{1}{\varepsilon^{2}}\rceil} \varepsilon^{2} \cdot OPT \geq  OPT.$$ 
	Hence, the items cannot be packed in $OPT$ bins. Contradiction.
\qed    
\posA

\noindent{\bf Proof of Lemma~\ref{lem:few_large_groups}:}
Each large group contains at least $\eps^{k+2} OPT$ items that are large or medium; thus, the total size of a large group is at least $(\eps^{k+2} \cdot OPT)  \eps^{k+1} = \eps^{2k+3} OPT$. Since $OPT$ is an upper bound on the total size of the instance, there are at most $\frac{1}{\eps^{2k+3}}$ large groups. \qed
\posA
\noindent{\bf Proof of Lemma~\ref{lem:shiftingNotIncreaseOPT}:}
 Given a feasible packing  $\Pi$ of the instance $I$, we define a feasible packing $\Pi'$ of $I'$ as follows. For each large group $G_i$, pack items of class $2$ in bins where items of class 1 of $I$ are packed in $\Pi$, items of class 3 where items of class 2 of $I$ are packed in $\Pi$, etc. 

We note that $\Pi'$ is feasible since shifted items in class $r$, for $r > 1$, are no larger than any non-shifted item in class $r-1$. Moreover, there are no conflicts, since shifting is done for each group separately using $\Pi$, which is a feasible packing of $I$.
\qed   
\posA
\noindent{\bf Proof of Lemma~\ref{lem:shiftingCanBeUsedForI}:}
 Given a feasible packing $\Pi'$ of the instance $I'$, we define a feasible packing $\Pi$ of $I$ as follows. For each large group $G_i$, pack items of class 2 where items of class 2 of $I'$ are packed in $\Pi'$, items of class 3 where items of class 3 of $I'$ are packed in $\Pi'$, etc. The items of class $r$ in $I'$ are no smaller than the items of the corresponding class in $I$; thus, the capacity constraint is satisfied. Moreover, no conflict can occur, since $\Pi'$ is a feasible packing for $I'$.
 
 The discarded items can be packed in $O(\eps)OPT$ extra bins. 
 The number of discarded items from each large group is at most $\eps^{2k+4} OPT$. 
 By Lemma~\ref{lem:few_large_groups}, there are $\frac{1}{\eps^{2k+3}}$ large groups; thus, the number of discarded items is at most $\frac{1}{\eps^{2k+3}} \cdot \eps^{2k+4} \cdot OPT = \eps OPT$. It follows that these items fit in at most $O(\eps) OPT$ extra bins. Hence, the resulting packing of $I$ is feasible and uses at most $(1+O(\eps)) OPT$ bins. \qed

\posA
\noindent{\bf Proof of Lemma~\ref{lem:rounding1}:}
Clearly, $OPT$ is an upper bound on the total size of the large and medium items. Since each of these items has a size at least $\eps^{k+1}$, the overall
number of large and medium items is at most $\frac{OPT}{\eps^{k+1}}$. Hence, after shifting, the number of distinct sizes of large items from small groups is at most	
	
	$$\frac{\frac{OPT}{\varepsilon^{k+1}}}{\lfloor{2\varepsilon \cdot OPT}\rfloor} \leq \frac{\frac{\lfloor \eps OPT \rfloor+1}{\varepsilon^{k+2}}}{{\lfloor{2\varepsilon \cdot OPT}\rfloor}} \leq \frac{1}{\varepsilon^{k+2}}+\frac{1}{\varepsilon^{k+2}\lfloor{\varepsilon \cdot OPT}\rfloor} \leq \frac{2}{\varepsilon^{\frac{1}{\varepsilon^2}+3}} = O(1).$$
	
	The second inequality holds since $\lfloor{\varepsilon \cdot OPT}\rfloor \geq 1$. Using a similar calculation for each large group, we conclude that after shifting of these groups, there can be at most $\frac{\frac{OPT}{\eps^{k+1}}}{\lfloor{\eps^{2k+4} \cdot OPT}\rfloor} \leq \frac{2}{\eps^{3k+6}}$ distinct sizes for each group.
	 By Lemma~\ref{lem:few_large_groups}, there are at most $\frac{1}{\eps^{2k+3}}$ large groups. Hence, there can be at most $\frac{2}{\eps^{3k+6}} \cdot \frac{1}{\eps^{2k+3}} = \frac{2}{\eps^{5k+9}}$ distinct sizes for all large and medium items in large groups. In addition, there are $\frac{2}{\eps^{k+3}} = O(1)$ distinct sizes for large items from small groups. Thus, overall there are at most  $\frac{2}{\eps^{k+3}}+\frac{2}{\eps^{5k+9}} = O(1)$ distinct sizes for large items from small groups and large and medium items from large groups. 
\qed   

\posA
\noindent{\bf Proof of Lemma~\ref{lem:dw1}:}
	Let $L$ be the number of large groups. Denote by $G_{i_1}^{\ell}, \ldots, G_{i_L}^{\ell}$ the large and medium items in the large groups. Our scheme enumerates over
	all slot patterns for packing the medium and large items from large groups, and the large items from small groups. The slot patterns indicate how many items of each size are assigned to each bin from each large group.
	
	For the small groups, the slots do not specify to which small group the slots belong. Such slots only indicate that the selected item belongs to a small group. Denote by $T$ the set of slots for an instance $I$, and let $P$ be the set of patterns. Formally, a slot is a $2$-tuple $(s_{\ell},j)$, where $s_{\ell}$ is the size of an item, and $j \in \{i_1, \ldots, i_L\} \cup \{u\}$ labels one of the $L$ large groups, or any of the small groups, represented by a single label $u$. 
	Let $\beta$ be the number of slots in a pattern $p \in P$. It follows that $1 \leq \beta \leq {\lfloor{\frac{1}{\varepsilon^{k+1}}}\rfloor}$ since the number of medium/large items that fit in a single bin is at most ${\lfloor{\frac{1}{\varepsilon^{k+1}}}\rfloor}$.  Then, 
	$p$ is defined as a multi-set, i.e.,  $p=\{t_1,\ldots, t_{\beta}\}$, where  $t_i \in T$, for all $1 \leq  i \leq \beta$. 
	
	 By Lemma~\ref{lem:few_large_groups}, there are at most $\frac{1}{\varepsilon^{2k+3}}$ large groups; thus, the number of distinct labels is at most $\frac{1}{\varepsilon^{2k+3}}+1$. By Lemma~\ref{lem:rounding1}, after rounding the sizes of the large and medium items, there are at most $O(1)$ distinct sizes of these items.
	 Therefore, $|T| = O(1)$.
	  We conclude that $|P| \leq |T|^{\beta} = O(1)$.
	 
	 We proceed to enumerate over the number of bins packed by each pattern. The number of possible packings is $(OPT^{O(1)}) = O(N^{O(1)})$. One of these packings corresponds to
	 an optimal solution for the given instance $I$. At some iteration, this packing will be considered and used in later steps for packing the remaining items. 
	 This gives the statement of the lemma.
\qed    

\posA
\noindent{\bf Proof of Lemma~\ref{lem:classes}:}
Note that at most $\frac{1}{\varepsilon^{k+1}}$ large or medium items can be packed together in a single bin. 
By Lemma~~\ref{lem:rounding1}, after rounding there are at most $\frac{2}{\eps^{k+3}}+\frac{2}{\eps^{5k+11}}$ distinct sizes for these items. Therefore, the number of 
distinct total sizes for bins is at most $(\frac{2}{\eps^{k+3}}+\frac{2}{\eps^{5k+11}})^{\frac{1}{\varepsilon^{k+1}}}$.
By Lemma~\ref{lem:few_large_groups}, there are at most $\frac{1}{\eps^{2k+3}}$ large groups. Thus, the number of subsets of large groups is bounded by
$2^{\frac{1}{\eps^{2k+3}}}$. Each bin can also contain at most $\lceil \frac{1}{\eps^{k+1}} \rceil$ slots assigned to items from small groups. It follows that the total number of bin types is bounded by $$(\frac{2}{\eps^{k+3}}+\frac{2}{\eps^{5k+11}})^{\frac{1}{\varepsilon^{k+1}}} \cdot 2^{\frac{1}{\eps^{2k+3}}} \cdot \frac{1}{\varepsilon^{k+1}} = O(1)$$
\qed    

\comment{\begin{proof}[of Lemma~\ref{lem:larggroupCorrectness}]
	We prove by induction on $W$ that there is a partition of the items in $G_{1}, \ldots, G_{W}$ to the bin types, and a packing of the assigned items in each bin type. This packing corresponds to an optimal solution. For the base case, let $W=1$. In this case, we need to pack a single group, $G_1$. 
	Since the optimal solution consists of $OPT$ bins, one of the partitions of $G_1$ among bin types, considered in
	Step~\ref{step:1}, corresponds to an optimal solution.
	Given such optimal partition, the subset of items $G_1(B_\ell)$ can be feasibly packed in the bins of $B_\ell$, for all $1 \leq \ell  \leq R$. Note that since all bins in $B_\ell$ are {\em identical} (i.e., they have the same pattern and the same total size), any feasible packing of  
	$G_1(B_\ell)$ will be optimal.
	
	For the induction step, assume that the claim holds for $W-1$ groups. We prove the claim for $W$ groups. Algorithm~\ref{Alg:small item from large groups} initially 
	guesses (in Step~\ref{step:1}) a partition of the items in $G_1, \ldots, G_W$ among the bin types. Then, in Step~\ref{step:2} the algorithm considers each type of bins $B_\ell$ separately. If there is only one group to pack in $B_\ell$ then an optimal packing for this bin type exists, as shown for the base case.
	If items from two groups or more need to be packed in $B_\ell$, the algorithm packs $B_{\ell}$ with the items in $G_f(B_\ell)$, where $G_f$ is the lowest index group containing items which are assigned to $B_\ell$. Then, $B_{\ell}$ is split (in Step~\ref{step:subclasses}) into sub-types of bins, $B_{{\ell}_1},...,B_{{\ell}_h}$. By the induction hypothesis, one of the packings output by Algorithm~\ref{Alg:small item from large groups} for these groups corresponds to an optimal solution..
\qed    \end{proof}}

\posA
\noindent{\bf Proof of Theorem~\ref{lem:larggroupComplexity}:}
We first show that EnumGroups indeed outputs a packing for $G_1,\ldots, G_{W}$. Specifically, 
we prove by induction on $W$ that there is a partition of the items in $G_{1}, \ldots, G_{W}$ to the bin types, and a packing of the assigned items in each bin type. This packing corresponds to an optimal solution. For the base case, let $W=1$. In this case, we need to pack a single group, $G_1$. 
Since the optimal solution consists of $OPT$ bins, one of the partitions of $G_1$ among bin types, considered in
	Step~\ref{step:1}, corresponds to an optimal solution.
	Given such optimal partition, the subset of items $G_1(B_\ell)$ can be feasibly packed in the bins of $B_\ell$, for all $1 \leq \ell  \leq R$. Note that since all bins in $B_\ell$ are {\em identical} (i.e., have the same pattern and the same total size), any feasible packing of  
	$G_1(B_\ell)$ is optimal. 
	
	For the induction step, assume that the claim holds for $W-1$ groups. We prove the claim for $W$ groups. Algorithm~\ref{Alg:small item from large groups} initially 
	guesses (in Step~\ref{step:1}) a partition of the items in $G_1, \ldots, G_W$ among the bin types. Then, in Step~\ref{step:2} the algorithm considers each type of bins $B_\ell$ separately. If there is only one group to pack in $B_\ell$, then an optimal packing for this bin type exists, as shown for the base case.
	If items from two groups or more need to be packed in $B_\ell$, the algorithm packs $B_{\ell}$ with the items in $G_f(B_\ell)$, where $G_f$ is the lowest index group containing items which are assigned to $B_\ell$. Then, $B_{\ell}$ is split (in Step~\ref{step:subclasses}) into sub-types of bins, $B_{{\ell}_1},...,B_{{\ell}_h}$. By the induction hypothesis, one of the packings output by Algorithm~\ref{Alg:small item from large groups} for these groups corresponds to an optimal solution.

	We now show that algorithm EnumGroups has polynomial running time. Recall that EnumGroups
	is used twice: Initially, the algorithm is used 
	(in Section~\ref{subsection:Small Items from Large Groups}) to pack small items in large groups. Then, the 
	algorithm is called as a subroutine in algorithm SmallGroups
	(see Section~\ref{subsection:Small Items from Small Groups}), for packing small items in the {\em special} small groups.
	In both of these calls, the distinct number of item sizes in $G_i \in \{ G_1, \ldots, G_W \}$ is at most $1/\eps^{2k+4}$. Hence, the number of partitions of $G_i$ 
	among bin types is bounded by 
	${|G_i|+(R-1) \choose |G_i|}^{1/\eps^{2k+4}} = {OPT+(R-1) \choose OPT}^{1/\eps^{2k+4}}$. Since, by Lemma~\ref{lem:few_large_groups}, 
	the number of large groups is bounded by $W = O(1)$, and by Lemma~\ref{lem:classes}, the number of bin types is bounded by $R = O(1)$, 
	the time complexity of Step~\ref{step:1} is bounded by 
	$${{OPT+(R-1) \choose OPT}^{1/\eps^{2k+4}}}^{\frac{1}{\eps^{2k+3}}} \leq {\frac{(e(OPT+R-1))}{R}^{(R-1) \cdot O(1)}} = O(N^{O(1)}),$$ which is polynomial. The first inequality holds since ${a \choose b} \leq (e a)^b$ for positive integers $1 \leq b \leq a$. 
	
	Once all groups are partitioned among the bin types, the algorithm proceeds recursively and independently for each bin type; thus, it suffices to show that the recursion depth is $O(1)$, and that the number of bin types in each recursive call is bounded by a constant. By Lemma~\ref{lem:classes}, before the first recursive call to EnumGroups the number of bin types is $O_\eps(1)$. Since the number of 
	distinct item sizes in $G_i$ is also a constant, for all $1 \leq i \leq W$, the resulting number of sub-types of bins in each recursive call is also 
	$O(1)$. As for the number of recursive calls,  we note that since in each call the number of groups decreases by one, the recursion depth is
	$O(1)$. As each step has a polynomial running time, the overall running time of EnumGroups is polynomial.
\qed    

\posA
\noindent{\bf Proof of Lemma~\ref{lem:disgroups}:}
	Consider a group $G_{i_j}^s \in A$. Since $G_{i_j}^s$ is {\em special}, it holds that $s_{m+2,i_j} > \frac{\varepsilon \cdot V}{OPT}$.
	Also, as $s_{1,i_j} \geq s_{2,i_j}\geq \ldots \geq s_{m+2,i_j}$, the total size of $G_{i_j}^s$ is larger than
	$(m+2) \cdot \frac{\eps V}{OPT} = (\lfloor \eps OPT \rfloor +2) \frac{\eps V}{OPT} > \eps^{2} V$. 
	Since the total size of the small items in small groups is equal to $V$, the number of special groups satisfies $|A| \leq \frac{1}{\eps^2}$.
\qed    

\posA
\noindent{\bf Proof of Lemma~\ref{lem:helpgreedy}:}
For the base case, let $OPT = 1$. Since there is only one bin in an optimal solution, there is one item from each group (i.e., no conflicts).  Also, if the free space is non-negative, the total size of all items is at most the capacity of the bin (since we have an almost optimal profile), so we can pack all items feasibly.

For the induction step, assume the claim holds for $OPT-1$ bins. Now, suppose that 
there are $OPT$ bins, partially packed with an almost optimal profile. We start by packing the first bin (of maximal total size) using GreedyPack. We distinguish between two cases.

$(i)$ After packing the first bin, the total size of items in this bin is less than $1-\mu$. This means that the first bin contains the largest items left from each small group.
As the bins are sorted in non-decreasing order by free capacity, and the items in each group are sorted in non-increasing order by sizes, we can feasibly pack all items in the remaining $OPT-1$ bins.

$(ii)$ The first bin is packed with total size of at least $1-{\mu}$. 
After packing the first bin, we prove that there is sufficient free space left for using the induction hypothesis. Since the remaining free space in the first bin is at most $\mu$, the total remaining free space is $(OPT-1) \mu$. Hence, the free space on average in the remaining $OPT-1$ bins is at least $\mu$. Also, there are $OPT-1$ bins, and at most $OPT-1$ items from each group (as there is one item from each group in the first bin).
As all required conditions are satisfied, the induction hypothesis can be applied and the remaining items can then be packed in the last $OPT-1$ bins. \qed

\posA
\noindent{\bf Proof of Lemma~\ref{lem:smallO1groups}:}
	In Step~\ref{step:discardregular} of algorithm SmallGroups we discard 
	the $m+1$ largest small items of each small group, whose total size 
	is at least $\eps V$. Thus, the free space on average is at least $\mu \geq \frac{\eps V}{OPT}$. In addition, the maximum size of any remaining item is at most $\frac{V\cdot\varepsilon}{OPT}$ (since we use algorithm EnumGroups for special groups). There are at most $OPT$ items in each group.
	Hence, given an almost optimal profile, by Lemma~\ref{lem:helpgreedy}, all the remaining small items can be packed feasibly.

\qed    

\posA
\noindent{\bf Proof of Lemma~\ref{lem:additionalsmallitems}:}
	In Steps~\ref{step:discardx=1} and~\ref{step:discard1} of Algorithm SmallGroups, we discard the $x$ largest small items of each small group. The total size of these items is at least $\eps OPT$; thus, the free space on average is at least $\mu \geq \eps$. In addition, all the remaining items are of size at most $\eps^{k+1} \leq \mu$. There are at most $OPT$ items from each group. Hence, given an almost optimal profile, by Lemma~\ref{lem:helpgreedy}, all the remaining small items can be packed feasibly.
	
\qed    

\posA
\noindent{\bf Proof of Theorem~\ref{lem:smallgroup}:}
By Lemma~\ref{lem:smallO1groups} and Lemma~\ref{lem:additionalsmallitems}, for any value of $V_{m+1}$ all small items are packed feasibly, given an almost optimal profile. Since our scheme enumerates over all possible slot patterns, it is guaranteed to reach an almost optimal profile. For the running time, 
we first note that, by Theorem~\ref{lem:larggroupComplexity} and Lemma~\ref{lem:disgroups}, the time complexity of EnumGroups (in Step~\ref{step:largegroups} of SmallGroups) is polynomial. Also, algorithm GreedyPack (called in Steps~\ref{step:greedystep2} and~\ref{step:greedystep1} of SmallGroups) runs in time
$O(N \cdot OPT)$. 
Finally, algorithm SmallGroups packs the small items with no violation of bin capacities while avoiding group conflicts. Thus, the resulting packing is feasible.
\qed   

\posA
\noindent{\bf Proof of Lemma~\ref{lem:discardFinal}:}
We prove the claim by deriving a bound on the number of extra bins required for packing the items discarded throughout the execution of the scheme.
To this end, we first bound the total size of these items. We distinguish between the different steps of the scheme.

\begin{enumerate}
\item \label{1}
Rounding the sizes of large and medium items from large groups. By the proof of Lemma~\ref{lem:shiftingCanBeUsedForI}, the total size of items discarded due to shifting is at most ${\eps \cdot OPT}$, and at most $\eps ^{2k+4} OPT$ items from each large groups are discarded.

\item \label{smallshift}
Rounding the sizes of large items from small groups. The total size of items discarded due to the shifting is at most $\lfloor 2\eps \cdot OPT \rfloor$, and at most $\eps ^{k+2} OPT$ items are discarded from each small group, as this is the maximum number of large items in a small group. Also, we discard the items in the last size class, i.e., at most $\lfloor 2\eps \cdot OPT \rfloor$ items of total size at most $\lfloor 2\eps \cdot OPT \rfloor$.

\item \label{4} Packing medium items from small groups (Section~\ref{subsection:Medium Items of Small Groups}). By Lemma~\ref{lem:k_val},
 the total size of these items is at most $\eps^2 \cdot OPT$. Since the items belong to small groups, their total number in each group is at most $\eps^{k+2} OPT -1$.
 
 \item \label{2} Rounding the sizes of small items in large groups (before packing these items by algorithm EnumGroups).
 We discard 
  $\eps^{2k+4} OPT$ items  from each large group. By Lemma~\ref{lem:few_large_groups}, the number of large groups is $L \leq \frac{1}{\eps^{2k+3}}$.
 The size of each small item is at most $\varepsilon^{k+1}$. Therefore, the total size of discarded items is at most 
 $\eps^{2k+4} OPT \cdot L \cdot \eps^{k+1} \leq \eps \cdot OPT$. 
 
\item \label{3} Items discarded in algorithm SmallGroups.
\begin{enumerate}
\item 
In Step~\ref{step:discardx=1}, the total size of discarded items is at most $\eps \cdot OPT+ \eps^{k+1}$. The number of items discarded from each group is at most $x = 1$. 

\item
In Step~\ref{step:discard1}, the total size of discarded items is at most $2 \eps \cdot OPT$.
The number of items discarded from each group is at most $m= \lfloor \eps OPT \rfloor$.

\item In Step~\ref{step:discardregular}, the total size of discarded items is at most $V_{m+1} \leq m$. 
The number of items discarded from each group is exactly $m+1$.

\item
In Step~\ref{step:linearGrouping} we apply shifting to the items in special groups. By Lemma~\ref{lem:disgroups}, the number of these groups satisfies
$|A| \leq \frac{1}{\eps^2}$. At most $Q= \lfloor \eps^{3} \cdot OPT \rfloor$ small items are discarded from each group. Hence, the total size of these items is
at most $\frac{1}{\eps^2} \cdot \eps^{k+1} \lfloor \eps^{3} \cdot OPT \rfloor \leq \eps \cdot OPT$. The number of items discarded from each group is at most $m$.

\item
In GreedyPack some small items are discarded due to conflicts with large items. Since all of these items are from small groups, at most $\eps^{k+2} \cdot OPT$
items are discarded from each group. Assume that the total size of discarded items is larger than $\eps \cdot OPT$. Since each of these items
is {\em coupled} with a large conflicting item from the same group, whose size is at least $1/\eps^2$ times larger, this implies that the total size of large conflicting
items is at least $\frac{OPT}{\eps}$. Contradiction (as $OPT$ is an upper bound on the total size of the instance).

\end{enumerate}
\end{enumerate}

By the above discussion, the total size of discarded items is at most $\eps OPT+4\eps OPT+\eps^2 OPT+\eps OPT+3\eps OPT \leq 10 \eps OPT$ (In \ref{1},~\ref{smallshift}, 
\ref{4}, \ref{2}, \ref{3}, respectively, where we take the worst case in \ref{3}). The maximum number of discarded items from each group is at most $\eps OPT+\eps OPT+\eps OPT+2m+1+\eps^{k+2}OPT \leq 6\eps OPT+1$ (in Steps~\ref{step:discardregular}, \ref{step:linearGrouping} of algorithm SmallGroups, in addition to large items discarded due to shifting). Hence, we can use BalancedColoring to pack the  
items discarded throughout the execution of the scheme in at most $max\{ \lceil 2 \cdot 10\eps OPT \rceil, \lceil 8\eps OPT+6\eps OPT+1 \rceil \} \leq 20\eps OPT+1$ bins. The inequality holds since $OPT > \frac{1}{\eps}$. 
\qed   

\posA
\noindent{\bf Proof of Theorem~\ref{theorem:APTAS}:}
	The feasibility of the packing follows from the way algorithms EnumGroups, GreedyPack and SmallGroups assign items to the bins.
	We now bound the total number of bins used by the scheme. As shown in the proof of Lemma~\ref{lem:discardFinal}, given the parameter $\eps \in (0,1)$, the total number of extra bins used for packing the medium items from small groups and the discarded items is at most $20 \eps OPT +1$. Taking $\eps'= \eps/20$, we have that the total number of bins used by the scheme is $ALG(I) \leq (1+\eps)OPT +1$. As shown above, each step of the scheme has running time polynomial in $N$.

	\comment{
	We note that each item is packed by exactly one way, using the correctness of one of the following lemmas: large and medium items from large groups are packed feasibly by Lemma~\ref{lem:dw1}, the Large items from small groups are packed feasibly by Lemma~\ref{lem:dw2} and the medium items of small groups are packed in extra bins by Lemma~\ref{lem:discardFinal}. Moreover, The small items of large groups are packed feasibly by Theorem~\ref{lem:larggroupCorrectness} and the small items of small groups are packed feasibly by Theorem~\ref{lem:smallgroup}. 
	
	The $OPT$ bins $b_1, \ldots, b_{OPT}$ which we use through all steps are packed with no overflows, as we explicitly prove in previous steps. Combining the number of additional bins that we use through all steps of the packing: the medium items and the discarded items, we need overall $12 \varepsilon OPT+1$ new bins at most by Lemma~\ref{lem:discardFinal}. Assuming that $0 < \varepsilon < 1$ is the desired parameter for the algorithm. Let there be $\varepsilon' = \frac{\varepsilon}{12}$. We use $\varepsilon'$ as the parameter for the algorithm, thus the number of bins that we use is $ALG(I) \leq OPT+12\varepsilon' \cdot OPT+1 = OPT+12\cdot \frac{\varepsilon}{12} \cdot OPT+1 = (1+\varepsilon)OPT +1$. We show for each step that its running time is polynomial. Since there are $O(1)$ steps, the overall running time is polynomial in $N$.
}
\qed   

}

\comment{	
\section{An APTAS for GBP with Fixed Size Optimum}
\label{An APTAS for GBP with fixed size optimum}

We now consider instance $I$ of GBP for which
$OPT(I)  < \frac{3}{\eps^{k+2}}+1$ for some fixed $\eps > 0$. We call GBP on this subclass of instances {\em GBP with fixed size optimum (GFO)}.

Below we describe an adaptation of our scheme in Section~\ref{sec:APTAS} to yield an APTAS for GFO. We use in this scheme simpler classifications for items and the groups.
Given a fixed $\eps \in (0,1)$, an item $\ell$ is {\em large} if $s_{\ell} > \eps$, and {\em small} otherwise.
We show there is a constant number of large items, which can be optimally packed in polynomial time. Small items are packed using a variant of the SmallGroups algorithm.

\begin{theorem}
If $OPT(I)  < \frac{3}{\eps^{k+2}}+1$ then we can find a packing of $I$ in $(1+\eps)OPT(I)+1$ bins in time $N^{O(1)}$.
\end{theorem}

\begin{proof}
As before, we use for short $OPT=OPT(I)$. 
We first show that all items can be feasibly packed in at most
$(1+\eps) OPT+1$ bins. 

\begin{enumerate}
	
    \item \label{11} {\em Large groups}. The size of each large item is at least $\eps$, and the total size of large items is at most $OPT$, thus there can be at most $\frac{OPT}{\eps} = O(1)$ large items. Moreover, there is at least one large item in each large group; therefore, the number of large groups is bounded by the number of large items, which is $O(1)$. In each large group there are at most $OPT = O(1)$ items (small and large), so the number of items in large groups is $O(1)$. Hence, there is only a constant number of items in large groups, and all possible packings of items in large groups can be enumerated in constant time. One of the these packings corresponds to an optimal solution. 
    
    \item \label{22} {\em Small groups}. The packing of the rest of the small items (i.e., the small groups) is done similar to SmallGroups algorithm (Algorithm~\ref{Alg:smallGroups}). The changes in SmallGroups for GFO instances are the following.
     First, there is no need to discard conflicting items, as small groups in GFO instances do not contain large items. Secondly, there is no need to round the special groups, as each of the special groups already contains at most $O(1)$ distinct item sizes. 

Now, we distinguish between two cases.     
    \begin{enumerate}
    
    \item \label{33} If $\eps OPT \geq 1$ then, by Theorem~\ref{lem:smallgroup} and Lemma~\ref{lem:discardFinal}, Algorithm~\ref{Alg:smallGroups} generates a feasible packing in $(1+O(\eps))OPT+1$ bins, including the packing of all discarded items. 
    
    \item \label{44} Otherwise, $\eps OPT < 1$ and $m = 0$. We show that the scheme outputs a packing in at most $(1+\eps)OPT+1$ bins also when Algorithm~\ref{Alg:smallGroups} is used with $m = 0$. If the number of special groups is $O(1)$, then these groups can be packed as in some optimal solution in polynomial time (see Theorem~\ref{lem:larggroupComplexity}). We show below that there are indeed $O(1)$ such groups in GFO instances. In Step~\ref{step:discardregular} of SmallGroups for GFO, we discard exactly one item from each group. The total discarded size is at least $\frac{V}{OPT}$. Thus, on average, for each bin the total discarded size is at least $\frac{V}{OPT^2}$. We now show that the number of groups for which the next largest item is larger than $\frac{V}{OPT^2}$ is bounded by $OPT^2$ (a constant). Assume there are at least $OPT^2+1$ such groups. Then, the total size of small groups is at least $\frac{V}{OPT^2} \cdot (OPT^2+1) >  V$. Contradiction. Hence, we can use algorithm SmallGroups to pack the small groups feasibly in the $OPT$ bins in polynomial time. The total size of discarded items is at most $2 \eps OPT$, and there is at most one discarded item from each group.
    \end{enumerate}
 \end{enumerate} 

   By the above discussion, excluding the discarded items, all items can be packed feasibly in $OPT$ bins. If $\eps OPT \geq 1$, then we can bound the number of extra bins used for discarded items  
  similar to the proof of Lemma~\ref{lem:discardFinal}. Thus, we get a feasible packing of all items
    in at most $(1+O(\eps))OPT+1$ bins. 
    
    Otherwise, $\eps OPT < 1$, i.e., we are in Case~\ref{44}. 
    As $m=0$, items are discarded in algorithm SmallGroups only in
    Steps~\ref{step:discardx=1} or~\ref{step:discardregular}. 
    The total discarded size is then at most $2\eps OPT$, with at most one item discarded from each group. Hence, the discarded items can be packed feasibly in only one additional bin if their total size is at most the capacity of a bin, namely, if $2\eps OPT \leq 1$.    
    If $2\eps OPT > 1$,
    we can take $\eps' =: \frac{1}{2} \cdot \frac{1}{2 OPT} = \frac{1}{4OPT}$. We have that $2\eps' OPT \leq 1$ and the scheme yields a feasible packing in $OPT+1$ bins.
    
    For the running time of the scheme, we note that 
     both the enumeration and algorithm SmallGroups run in time $N^{O_{\eps'}(1)}$.
    Overall, we find a feasible packing of all items in $(1+\eps)OPT+1$ bins in time  $N^{O(1)}$.
\qed    \end{proof}

\section{Bin Packing with Interval-graph Conflicts}
\label{An APTAS for Proper Interval Conflict Graph}

We now consider generalizations of the scheme in Section~\ref{sec:APTAS} to BPC where conflicts are induced by an interval graph. In particular, we discuss the subclasses of {\em proper} and
{\em constant length} interval graphs.
Recall that an interval graph $G=(V,E)$ is the intersection graph of a family 
of intervals on the real-line. Furthermore, the 
maximal cliques of $G$ can be linearly ordered such that for any vertex  $v \in V$, the maximal cliques containing $v$ appear consecutively (see, e.g., \cite{LB62,GH64}). It follows that, given an instance $I$ of BPC with 
interval-graph conflicts, we can represent each item $\ell \in I$ as an interval on the real-line. We can then partition the items into maximal cliques $G_1, \ldots , G_n$, which can be ordered such that each item $\ell$ belongs to $k_\ell$ consecutive cliques in this order for some $k_\ell \geq 1$. 

We note that GBP is the special case where each item belongs to a {\em single} maximal clique.\footnote{Throughout the discussion we use interchangeably {\em cliques} and {\em groups} when referring to maximal subsets of conflicting items.} In Section~\ref{sec:proper_intervals} we discuss the extension of our scheme to BPC where conflicts are induced by a proper interval graph. We call this problem {\em bin packing with proper interval conflicts (PBP)}. In Section~\ref{sec:constant_length_int} we discuss the problem of {\em bin packing with constant length interval graph (BCI)}.\footnote{We note that the scheme in Section~\ref{sec:proper_intervals} can be viewed as a special case of the scheme in Section~\ref{sec:constant_length_int} where $c=2$. For clarity of the presentation we handle first this special case.}

\subsection{Proper Interval Graphs}
\label{sec:proper_intervals}

We note that when the conflict interval graph is proper, each item belongs to at most two consecutive maximal cliques. Consider an item $\ell \in I$; then, for some $1 \leq i \leq n$, $\ell \in G_i$ or 
 $\ell \in G_i \cap G_{i+1}$, where $G_{n+1}= \emptyset$. We give below an overview of the changes required in our scheme to handle such instances. We first classify the items and the cliques (or, {\em groups}) as in Section~\ref{sec:APTAS}. 
As before, after guessing $OPT$, we add to the instance dummy items so that $|G_i|=OPT$. 
In the proof of Theorem~\ref{thm:proper} we refer to the changes required in the scheme of Section~\ref{sec:APTAS} to obtain an APTAS for PBP.

\noindent{\bf Proof of Theorem~\ref{thm:proper}:}
We consider below the main components of the scheme as given in 
Section~\ref{sec:APTAS}. 
\negA
\negA
\paragraph{Large and Medium Items:}
We first slightly modify the rounding for the sizes of large and medium items. Specifically, we consider separately each pair of consecutive cliques of which at least one is a {\em large} group.
For each such pair of groups, we use linear shifting with parameter $Q = \lfloor \eps^{2k+4} \cdot OPT \rfloor$ for all items belonging to this pair of groups. Thus, Lemmas~\ref{lem:shiftingNotIncreaseOPT} and~\ref{lem:shiftingCanBeUsedForI} hold for PBP as well. 
We now take the set of items that belong to small groups (only) and apply to this
set linear shifting with parameter $Q = 2\lfloor 2\eps \cdot OPT \rfloor$.  
	
For packing the large items and medium items from large groups, we refine our definitions of {\em slots} and {\em patterns} as follows.
We assign a label to each large group. Also, we add a label for each pair of consecutive groups in which at least one group is large.
Specifically, we represent the pair $(G_r, G_{r+1})$ by the label 
$\rho_r$, for $1 \leq r \leq n-1$. For a slot that contains an item of size $s_\ell$ which belongs to $G_r$ (only) we use $(s_\ell, i_r)$,
while for an item of the same size that belongs to $G_r$ and $G_{r+1}$, where at least one is large, we use $(s_\ell, \rho_r)$. All slots for items that belong to small groups (only) share the same label. We note that, as the number of labels remains $O_\eps(1)$, we can use enumeration to optimally pack the large and medium items in large groups.

Now, consider large items in small groups. We need to show that these items can be feasibly packed in $OPT$ bins. We distinguish between two cases. If $OPT(I) > \frac{6}{\eps^{k+2}}$ then Theorem~\ref{lem:swap} holds, i.e., the Swapping algorithm resolves all conflicts.
For the complementary case, where $OPT(I) \leq \frac{6}{\eps^{k+2}}$, we can use (with slight modification) the APTAS for GFO (see Appendix~\ref{An APTAS for GBP with fixed size optimum}).

\negA
\negA
\paragraph{Small Items in Large Groups:}
We make the following changes in Algorithm~\ref{Alg:small item from large groups}. Recall that in Step~\ref{step:subclasses} of the algorithm, an item is added to each bin of type $B_{\ell}$. Now, we redefine the sub-types of bins in Step~\ref{step:subclasses2}. Specifically, 
$b_r, b_t \in B_{\ell}$  belong to different sub-types if the item added to $b_r$ is in group $G_i$, while the item added to $b_t$ is both in groups $G_i$ and $G_{i+1}$, for some $1 \leq i \leq W-1$. This is in contrast to Algorithm~\ref{Alg:small item from large groups} for GBP, in which adding items of the same size to $b_r$ and $b_t$ implies that $b_r$ and $b_t$ belong to the same sub-type.
The recursive call to Algorithm~\ref{Alg:small item from large groups} for sub-types containing items from group $G_{i+1}$ includes the groups $G_{i+2},\ldots, G_W$;
the recursive call for sub-types with items from $G_i$ (only) includes the groups $G_{i+1},\ldots, G_W$.
The number of distinct item sizes in each group after rounding is $O(1)$; thus, the running time of Algorithm~\ref{Alg:small item from large groups} remains polynomial, and Theorem~\ref{lem:larggroupComplexity} holds for PBP. 
\negA
\negA
\paragraph{Small Items from Small Groups:}
Recall the idea of the SmallGroups algorithm. First, discard items of total size $\eps V$ (or $\eps OPT$), and then pack greedily the maximum possible 
total size in each bin, taking one item from each group. We slightly modify this approach for solving PBP. Considering the bins in non-increasing order of packed size, we assign to each bin as before items from all groups, such that the total size of packed items in a bin is at
 least $1-\frac{\eps V}{OPT}$ (same as in Algorithm~\ref{Alg:smallGroups}). 
 We use dynamic programming (DP) to find an item from each small group (one item may cover two groups), such that the total size packed in the current bin is maximized. If the total size exceeds the bin capacity, we discard the largest item from one group and use the DP algorithm to compute a new packing.
 The process is repeated while the packed items cause an overflow in the current bin.
This guarantees that in each iteration the total size of the remaining items decreases, until a packing with no overflow is found. Given an almost optimal profile, and by discarding enough total size, we obtain a feasible packing (see Theorem~\ref{lem:smallgroup}).

Formally, let $G_1, \ldots, G_S$ be the small groups, ordered such that each item belongs to at most two consecutive groups. Let $B[i]$ denote the maximal
 total size which can be packed in a single bin, taking items from groups $G_1, \ldots, G_i$.\footnote{
 	The DP computes only the maximal solution value rather than the corresponding subset of items; this can be done in polynomial time and space.} Thus, $B[0] = 0$ and $B[1] = max \{s_{\ell} | \ell \in G_1\}$. Let $\ell_{i+1}$ ($\ell_{i, i+1}$) be the largest remaining item that belongs to $G_i$ ($G_{i}$ and $G_{i+1}$). If there is no such item
 then we set $s_{\ell_{i+1}} = 0$ ($s_{\ell_{i, i+1}} = 0$). We use the following recursion to compute $B[i+1]$, $1 \leq i \leq S-1$:
 
$$B[i+1] = max \{s_{\ell_{i+1}}+B[i] \: , \: s_{\ell_{i, i+1}}+B[i-1]\}$$

Thus, $B[S]$ can be computed in polynomial time. We prove by induction on $i$ that $B[i+1]$ is the maximal total size which can be packed in a single bin, taking items from groups $G_1, \ldots, G_i$. The claim holds trivially for $B[0], B[1]$. Assume that the claim holds for $B[i']$ such that  $i' < i+1$. For $B[i+1]$, we take the maximum between $s_{\ell_{i+1}}+B[i]$ and $s_{\ell_i,{i+1}}+B[i-1]$. This is because we must add an item from each of $G_{i}, G_{i+1}$. This can be done either 
by taking one item from each group or one item that belongs to both groups. Thus, using the induction hypothesis for $B[i-1]$ and $B[i]$ we have the claim. 

The above approach is a generalization of Algorithm~\ref{Alg:greedyPack}: the packing of each bin by GreedyPack is computed using the DP subroutine, which is called at most $N$ times  for each bin. Hence, overall we get a polynomial running time for packing the remaining small items in $OPT$ bins. 

 Let $b$ be the bin currently filled by GreedyPack, and let $s$ be the size of the largest remaining item. We show below that the total size packed in $b$ (i.e., $B[S]$) decreases at most by $2s$ between any two consecutive iterations of the DP subroutine. In each iteration, exactly one item is discarded. This item belongs to at most two groups. Thus, the value of $B[S]$ in the next iteration may decrease by the size of at most two items. Indeed, after each iteration of the DP subroutine, we choose a group $G_{i}$ and discard its largest remaining item, denoted by $h$. This item is then replaced by the next largest item in $G_{i}$, denoted by $\ell$. It holds that $s_h-s_\ell \leq s$, and consequently, this change decreases $B[S]$ at most by $s$. 

As the conflict graph is a proper interval graph, item $\ell$ may belong also to $G_{i+1}$ or $G_{i-1}$. Assume w.l.o.g. that $\ell \in G_{i+1}$. If $\ell$ is used for $B[S]$ in the next iteration, adding another item in $G_{i+1}$ will generate a conflict. Therefore, $B[S]$ may decrease by the size of an item in $G_{i+1}$ that was used in the previous iteration. The size of this item is at most $s$. We conclude that $B[S]$ decreases at most by $2s$ between any two consecutive iterations.

Recall that Lemma~\ref{lem:helpgreedy} (for GBP) requires that $\mu$, the free space on average, must be at least the size of the largest remaining item, $s$. It suffices to guarantee that $\mu \geq s$ since, between any two attempts to pack the current bin $b$, the total size packed in $b$ decreases by at most $s$ (see the proof of Lemma~\ref{lem:helpgreedy}). On the other hand, as shown above, for PBP instances, the total size packed in $b$ decreases at most by $2s$ between consecutive iterations.

Hence, to ensure the correctness of Lemma~\ref{lem:helpgreedy} for PBP, we need to double the increase in the free space on average. This can be done by increasing the total size of discarded items in Algorithm~\ref{Alg:smallGroups}. More specifically, Algorithm~\ref{Alg:smallGroups} will use the parameter $m = 2 \lfloor \eps OPT \rfloor$.
Note that the total size of discarded items remains $O(\eps) OPT$, including discarded items in Step~\ref{step:discardsmalloverlarge} of Algorithm~\ref{Alg:greedyPack}; thus, Lemma~\ref{lem:discardFinal} holds for PBP. \qed

\subsection{Constant Length Interval Graphs}
\label{sec:constant_length_int}
 
 In a {\em constant length} interval graph, each interval belongs to at most a constant number of maximal cliques.
 In the corresponding instances of BPC, each 
 item belongs to only a constant number $c = O(1)$ of cliques, which appear consecutively
  on the real-line. This is a generalization of BPC with proper interval conflict graphs, where an item belongs to at most two consecutive groups. 
 
 We describe below the changes required for generalizing the scheme in Section~\ref{sec:proper_intervals} to bin packing with constant length interval graphs.
The proof of Theorem~\ref{thm:ConstantLength} relies on the claims we prove in Section~\ref{sec:APTAS} and on the ideas in the proof of Theorem~\ref{thm:proper}.

\noindent{\bf Proof of Theorem~\ref{thm:ConstantLength}:}
We discuss below the main components of the scheme as given in Sections~\ref{sec:APTAS} and~\ref{sec:proper_intervals}. 
 
 \negA
 \negA
 \paragraph{Large and Medium Items:}
We first slightly modify the rounding of sizes for the large and medium items. Specifically, we consider separately each subset of at most $c$ consecutive cliques
(= groups) in which at least one group is {\em large}.
For each such subset of groups, we use linear shifting with parameter $Q = \lfloor \eps^{2k+4} \cdot OPT \rfloor$ for all large and medium items that belong exactly to this subset of groups.
We now take together all the large and medium items belonging to small groups (only). We apply to these items linear shifting with parameter $Q = c \lfloor 2\eps \cdot OPT \rfloor$. 
	
For packing the large items and medium items from large groups, we refine our definition of {\em slots} and {\em patterns} as follows.
We assign a distinct label to each subset of at most $c$ consecutive groups in which at least one group is large.
Specifically, we represent the subset $G_r,\ldots, G_{r+i}$ by the label 
$\rho_{r, r+i}$, for $0 \leq i \leq c-1$, $1 \leq r \leq n-i$. 
 For a slot that contains an item of size $s_\ell$ which belongs to groups $G_r,\ldots, G_{r+i}$, at least one of which is large, we use the pair
  $(s_\ell, \rho_{r, r+i})$. All slots for items that belong to small groups only share the same label. 

We note that, as the number of labels remains $O_\eps(1)$, we can use enumeration to optimally pack the large and medium items in large groups. For BCI, we require 
that $OPT(I) > c\frac{3}{\eps^{k+2}}$ to guarantee that Theorem~\ref{lem:swap} holds. For the complementary case, where $OPT(I) \leq c\frac{3}{\eps^{k+2}}$, we can use (with slight modification) the APTAS for GFO (see Appendix~\ref{An APTAS for GBP with fixed size optimum}).

\negA
\negA
\paragraph{Small Items in Large Groups:}
We make the following changes in Algorithm~\ref{Alg:small item from large groups}. We first redefine the sub-types of bins in Step~\ref{step:subclasses2}. Specifically, 
$b_r, b_t \in B_{\ell}$  belong to different sub-types if the items added to $b_r, b_t$ belong to different subsets of groups. 
The recursive call to Algorithm~\ref{Alg:small item from large groups} for sub-types containing items from groups $G_{i}, \ldots, G_{i+q}$ includes the groups $G_{i+q+1},\ldots, G_W$. The number of distinct item sizes in each group after rounding is $O(1)$; thus, the running time remains polynomial, and Theorem~\ref{lem:larggroupComplexity} holds for BCI. 

\negA
\negA
\paragraph{Small Items from Small Groups:}
We describe below the changes in the DP subroutine of Section~\ref{sec:proper_intervals}. Formally, let $G_1, \ldots, G_S$ be the small groups, ordered such that each item belongs to at most $c$ consecutive groups. Let $B[i]$ denote the maximal
 total size which can be packed in a single bin, taking items from groups $G_1, \ldots, G_i$.\footnote{For simplicity, the DP subroutine computes only the maximal solution value rather than the corresponding subset of items; this can be done in polynomial time and space.} We find the values of $B[0], B[1], \ldots, B[c-1]$ by enumerating over all possible packings. This can be achieved in polynomial time, as for each of these values, we consider a constant number of groups. Let $\ell_{i, i+q}$ be the largest remaining item that belongs to groups $G_i, \ldots, G_{i+q}$ (only), for $0 \leq q \leq c-1, i+q \leq S$. If there is no such item, then we set $s_{\ell_{i, i+q}} = 0$. We use the following recursion to compute $B[i]$, $c \leq i \leq S$:
 
$$B[i] = max_{q \in \{0, \ldots, c-1\}} \{s_{\ell_{i-q, i}}+B[i-q-1]\}$$

We prove by induction on $i$ that $B[i]$ is the maximal total size which can be packed in a single bin, taking items from groups $G_1, \ldots, G_i$. The claim holds trivially for $B[0], B[1], \ldots, B[c-1]$ as we enumerate over all possible packings. Assume that the claim holds for $B[i']$ such that  $i' < i$. Then $B[i]$ is the maximal value obtained by taking $s_{\ell_{i-q},{i}}+B[i-q-1]$, where $q \in \{0,\ldots,c-1\}$. This is because in $B[i]$ we cover each group by exactly one item (i.e., taking one item from each group) and each item in a BCI instance belongs to at most $c$ consecutive groups. Thus, by using the induction hypothesis for $B[i-q-1]$ we obtain the claim.

 Let $b$ be the current bin to be packed by the GreedyPack algorithm, and let $s$ be the size of the largest remaining item. Similar to our discussion in Section~\ref{sec:proper_intervals}, the total size packed in $b$ (i.e., $B[S]$) decreases at most by $c \cdot s$ between any two consecutive iterations of the DP subroutine. Thus, it suffices to require that $\mu \geq c s$ (i.e., the free space on average is at least $c$ times larger than the largest remaining item). The correctness of Lemma~\ref{lem:helpgreedy} follows, and thus also the correctness of Algorithm~\ref{Alg:smallGroups}. To guarantee that $\mu \geq c s$, Algorithm~\ref{Alg:smallGroups} is computed with parameter $m'' = c \lfloor \eps OPT \rfloor$.  Note that the total size of discarded items remains $O(\eps) OPT$; thus, Lemma~\ref{lem:discardFinal} holds also for BCI.
 \qed
 

}

\comment{
\section{Definition of Slots and Patterns}
\label{sec:Definitions}
A {\em slot} is characterized by a size, and by a {\em label}. A label can represent explicitly one of the large groups, or can denote `small group' with no indication to which small group it belongs. Denote by $u$ the label for all the small groups. Let $G_{i_1}, \ldots, G_{i_L}$ be the large groups.    
Formally, a slot is a pair $(s_{\ell},j)$, where $s_{\ell}$ is the size of an item $\ell \in I$ and $j \in \{i_1, \ldots, i_L\} \cup \{u\}$. A {\em pattern} is a multiset $\{t_1,\ldots, t_{\beta}\}$ containing at most ${\lfloor{\frac{1}{\varepsilon^{k+1}}}\rfloor}$ elements, where $t_i$ is a slot for each $i \in [\beta]$. 
}

\section{Omitted Proofs}
\label{Omitted Proofs}

\comment{\noindent{\bf Proof of Lemma~\ref{lem:k_val}:}

Assume that, for a given instance $I$, 
	$$\forall k \in \{1,...,\lceil \frac{1}{\varepsilon^2} \rceil \}:  \sum_{\ell \in I :  s_{\ell} \in [\varepsilon^{k+1},\varepsilon^{k})}  s_{\ell} > \varepsilon^{2} \cdot OPT.$$
	For each value of $k$, there is a distinct subset of items whose sizes are in $[\varepsilon^{k+1},\varepsilon^{k})$. 
	Therefore,
	$$\sum_{\ell \in I}  s_{\ell} > \sum_{k = 1}^{\lceil \frac{1}{\varepsilon^{2}}\rceil} \varepsilon^{2} \cdot OPT \geq  OPT.$$ 
	Hence, the items cannot be packed in $OPT$ bins. Contradiction.
\qed    }

\noindent{\bf Proof of Lemma~\ref{lem:few_large_groups}:}
Each large group contains at least $\eps^{k+2} OPT$ items that are large or medium; thus, the total size of a large group is at least $(\eps^{k+2} \cdot OPT)  \eps^{k+1} = \eps^{2k+3} OPT$. Since $OPT$ is an upper bound on the total size of the instance, there are at most $\frac{1}{\eps^{2k+3}}$ large groups. \qed

\noindent{\bf Proof of Lemma~\ref{lem:shiftingNotIncreaseOPT}:}
Given a feasible packing  $\Pi$ of the instance $I$, we define a feasible packing $\Pi'$ of $I'$ as follows. For each large group $G_i$, pack items of class $2$ in bins where items of class 1 of $I$ are packed in $\Pi$, items of class 3 where items of class 2 of $I$ are packed in $\Pi$, etc. 

We note that $\Pi'$ is feasible since shifted items in class $r$, for $r > 1$, are no larger than any non-shifted item in class $r-1$. Moreover, there are no conflicts,
since shifting is done for each group separately using $\Pi$, which is a feasible packing of $I$.
\qed

\noindent{\bf Proof of Lemma~\ref{lem:shiftingCanBeUsedForI}:}
Given a feasible packing $\Pi'$ of the instance $I'$, we define a feasible packing $\Pi$ of $I$ as follows. For each large group $G_i$, pack items of class 2 where items of class 2 of $I'$ are packed in $\Pi'$, items of class 3 where items of class 3 of $I'$ are packed in $\Pi'$, etc. The items of class $r$ in $I'$ are no smaller than the items of the corresponding class in $I$; thus, the capacity constraint is satisfied. Moreover, no conflict can occur since $\Pi'$ is a feasible packing for $I'$.

The discarded items can be packed in $O(\eps)OPT$ extra bins. 
 The number of discarded items from each large group is at most $\eps^{2k+4} OPT$. 
 By Lemma~\ref{lem:few_large_groups}, there are $\frac{1}{\eps^{2k+3}}$ large groups; thus, the number of discarded items is at most $\frac{1}{\eps^{2k+3}} \cdot \eps^{2k+4} \cdot OPT = \eps OPT$. It follows that these items fit in at most $O(\eps) OPT$ extra bins. Hence, the resulting packing of $I$ is feasible and uses at most $(1+O(\eps)) OPT$ bins. \qed

\posA
\noindent{\bf Proof of Lemma~\ref{lem:rounding1}:}
Clearly, $OPT$ is an upper bound on the total size of the large and medium items. Since each of these items has a size at least $\eps^{k+1}$, the overall
number of large and medium items is at most $\frac{OPT}{\eps^{k+1}}$. Hence, after shifting, the number of distinct sizes of large items from small groups is at most	
	
	$$\frac{\frac{OPT}{\varepsilon^{k+1}}}{\lfloor{2\varepsilon \cdot OPT}\rfloor} \leq \frac{\frac{\lfloor \eps OPT \rfloor+1}{\varepsilon^{k+2}}}{{\lfloor{2\varepsilon \cdot OPT}\rfloor}} \leq \frac{1}{\varepsilon^{k+2}}+\frac{1}{\varepsilon^{k+2}\lfloor{\varepsilon \cdot OPT}\rfloor} \leq \frac{2}{\varepsilon^{\frac{1}{\varepsilon^2}+3}} = O(1).$$
	
	The second inequality holds since $\lfloor{\varepsilon \cdot OPT}\rfloor \geq 1$. Using a similar calculation for each large group, we conclude that after shifting of these groups, there can be at most $\frac{\frac{OPT}{\eps^{k+1}}}{\lfloor{\eps^{2k+4} \cdot OPT}\rfloor} \leq \frac{2}{\eps^{3k+6}}$ distinct sizes for each group.
	 By Lemma~\ref{lem:few_large_groups}, there are at most $\frac{1}{\eps^{2k+3}}$ large groups. Hence, there can be at most $\frac{2}{\eps^{3k+6}} \cdot \frac{1}{\eps^{2k+3}} = \frac{2}{\eps^{5k+9}}$ distinct sizes for all large and medium items in large groups. In addition, there are $\frac{2}{\eps^{k+3}} = O(1)$ distinct sizes for large items from small groups. Thus, overall there are at most  $\frac{2}{\eps^{k+3}}+\frac{2}{\eps^{5k+9}} = O(1)$ distinct sizes for large items from small groups and large and medium items from large groups. 
\qed

\posA
\noindent{\bf Proof of Lemma~\ref{lem:dw1}:}
	Let $L$ be the number of large groups. Denote by $G_{i_1}^{\ell}, \ldots, G_{i_L}^{\ell}$ the large and medium items in the large groups. Our scheme enumerates over
	all slot patterns for packing the medium and large items from large groups, and the large items from small groups. The slot patterns indicate how many items of each size are assigned to each bin from each large group.
	
Denote by $T$ the set of slots for an instance $I$, and let $P$ be the set of patterns. Recall that a slot is a $2$-tuple $(s_{\ell},j)$, where $s_{\ell}$ is the size of an item, and $j \in \{i_1, \ldots, i_L\} \cup \{u\}$ labels one of the $L$ large groups, or any of the small groups, represented by a single label $u$. 
	Let $\beta$ be the number of slots in a pattern $p \in P$. We note that $1 \leq \beta \leq {\lfloor{\frac{1}{\varepsilon^{k+1}}}\rfloor}$ since the number of medium/large items that fit in a single bin is at most ${\lfloor{\frac{1}{\varepsilon^{k+1}}}\rfloor}$.  Then, 
	$p$ is defined as a multi-set, i.e.,  $p=\{t_1,\ldots, t_{\beta}\}$, where  $t_i \in T$, for all $1 \leq  i \leq \beta$. 
	
	 By Lemma~\ref{lem:few_large_groups}, there are at most $\frac{1}{\varepsilon^{2k+3}}$ large groups; thus, the number of distinct labels is at most $\frac{1}{\varepsilon^{2k+3}}+1$. By Lemma~\ref{lem:rounding1}, after rounding the sizes of the large and medium items, there are at most $O(1)$ distinct sizes of these items.
	 Therefore, $|T| = O(1)$
	 We conclude that $|P| \leq |T|^{\beta} = O(1)$.
	 
	 We proceed to enumerate over the number of bins packed by each pattern. The number of possible packings is $OPT^{O(1)} = O(N^{O(1)})$. One of these packings corresponds to
	 an optimal solution for the given instance $I$. At some iteration, this packing will be considered and used in later steps for packing the remaining items. 
	 This gives the statement of the lemma.
\qed

\posA
\noindent{\bf Proof of Theorem~\ref{lem:swap}:} 
We prove that for each conflict involving an item $\ell \in G_i$ of size $s_\ell$ in bin $b$, there is an item $y \in G_j \neq G_i$ of size $s_{y} = s_\ell$ in bin $c \neq b$, such that $swap(\ell,y)$ is good. Consider 
a packing of large and medium items by a slot pattern corresponding to an optimal solution. Then, the items are packed in $OPT$ bins with no
overflow, and the only conflicts may occur among items from small groups.

Due to shifting with parameter $Q = \lfloor 2\eps \cdot OPT \rfloor$ for large items from small groups, there are $\lfloor 2\eps \cdot OPT \rfloor-1$ items of size $s_\ell$ in addition to $\ell$ (recall that the last size class, which may contain less items, is discarded).
We prove that the number of items $y$ for which $swap(\ell,y)$ is bad is at most $\lfloor 2\eps \cdot OPT \rfloor-2$; therefore, there exists an item $y$ of size $s_{\ell}$, for which $swap(\ell,y)$ is good. 
We note that $swap(\ell,y)$ is bad if (at least) one of the following holds: $(i)$ $y$ belongs to a group $G_j$ which has an item in bin $b$, or
$(ii)$ there is an item from $G_i$ in bin $c$.

We handle $(i)$ and $(ii)$ separately. 
$(i)$ The number of items of size $s_\ell$ from groups $G_j$ that have an item in bin $b$ is at most $\frac{1}{\eps^{k}} \cdot \eps^{k+2} OPT = \eps^2 OPT$, since at most $\frac{1}{\eps^k}$ small groups can have a large item in $b$, and each such group has at most $\eps^{k+2} OPT$ large items.
$(ii)$ We now bound the number of items of size $s_\ell$ in bins that contain items in $G_i$. We note that the total number of bins containing large items in $G_i$ is at most 
$\eps^{k+2} OPT$, since $G_i$ is small. Also, in each such bin, the total number of large items is at most $\frac{1}{\eps^k}$. Thus, the total number of items of size $s_\ell$ in bins containing items in $G_i$ is at most $\frac{1}{\eps^{k}} \eps^{k+2} OPT = \eps^2 OPT$.
Using the union bound, the number of bad swaps 
for $\ell$, i.e., $swap(\ell,y)$ for some item $y$,
is at most 
$2\eps^2 OPT$. We have $2\eps^2OPT < \eps OPT < \eps OPT+\eps OPT-3 \leq \lfloor 2\eps \cdot OPT \rfloor-2.$ The first inequality holds since we may assume that $\eps < \frac{1}{2}$.
For the second inequality, we note that $OPT > \frac{3}{\eps^{k+2}} > \frac{3}{\eps}$. 
We conclude that there is an item $y$ in the size class of $\ell$
such that $swap(\ell,y)$ is good. 

We now show that the Swapping algorithm is polynomial in $N$. We note that items of some group are in conflict only if they are placed in the same bin. As these are only large items, an item may conflict with at most $\frac{1}{\eps^{k}}$ items. Hence, there are at most $\frac{N}{\eps^{k}} = O(N)$ conflicts.
As finding a good swap takes at most $O(N)$, the overall running time of Swapping is $O(N^2)$. \qed

\posA
\noindent{\bf Proof of Lemma~\ref{lem:classes}:}
Note that at most $\frac{1}{\varepsilon^{k+1}}$ large or medium items can be packed together in a single bin. 
By Lemma~~\ref{lem:rounding1}, after rounding there are at most $\frac{2}{\eps^{k+3}}+\frac{2}{\eps^{5k+11}}$ distinct sizes for these items. Therefore, the number of 
distinct total sizes for bins is at most $(\frac{2}{\eps^{k+3}}+\frac{2}{\eps^{5k+11}})^{\frac{1}{\varepsilon^{k+1}}}$.
By Lemma~\ref{lem:few_large_groups}, there are at most $\frac{1}{\eps^{2k+3}}$ large groups. Thus, the number of subsets of large groups is bounded by
$2^{\frac{1}{\eps^{2k+3}}}$. Each bin can also contain at most $\lceil \frac{1}{\eps^{k+1}} \rceil$ slots assigned to items from small groups. It follows that the total number of bin types is bounded by $$(\frac{2}{\eps^{k+3}}+\frac{2}{\eps^{5k+11}})^{\frac{1}{\varepsilon^{k+1}}} \cdot 2^{\frac{1}{\eps^{2k+3}}} \cdot \frac{1}{\varepsilon^{k+1}} = O(1)$$
\qed

\comment{
\noindent{\bf Proof of Theorem~\ref{lem:larggroupComplexity}:}
We first show that EnumGroups indeed outputs a packing for $G_1,\ldots, G_{W}$. Specifically, 
we prove by induction on $W$ that there is a partition of the items in $G_{1}, \ldots, G_{W}$ to the bin types, and a packing of the assigned items in each bin type. This packing corresponds to an optimal solution. For the base case, let $W=1$. In this case, we need to pack a single group, $G_1$. 
Since the optimal solution consists of $OPT$ bins, one of the partitions of $G_1$ among bin types, considered in
	Step~\ref{step:1}, corresponds to an optimal solution.
	Given such optimal partition, the subset of items $G_1(B_\ell)$ can be feasibly packed in the bins of $B_\ell$, for all $1 \leq \ell  \leq R$. Note that since all bins in $B_\ell$ are {\em identical} (i.e., have the same pattern and the same total size), any feasible packing of  
	$G_1(B_\ell)$ is optimal. 
	
	For the induction step, assume that the claim holds for $W-1$ groups. We prove the claim for $W$ groups. Algorithm~\ref{Alg:small item from large groups} initially 
	guesses (in Step~\ref{step:1}) a partition of the items in $G_1, \ldots, G_W$ among the bin types. Then, in Step~\ref{step:2} the algorithm considers each type of bins $B_\ell$ separately. If there is only one group to pack in $B_\ell$, then an optimal packing for this bin type exists, as shown for the base case.
	If items from two groups or more need to be packed in $B_\ell$, the algorithm packs $B_{\ell}$ with the items in $G_f(B_\ell)$, where $G_f$ is the lowest index group containing items which are assigned to $B_\ell$. Then, $B_{\ell}$ is split (in Step~\ref{step:subclasses}) into sub-types of bins, $B_{{\ell}_1},...,B_{{\ell}_h}$. By the induction hypothesis, one of the packings output by Algorithm~\ref{Alg:small item from large groups} for these groups corresponds to an optimal solution.

	We now show that algorithm EnumGroups has polynomial running time. Recall that EnumGroups
	is used twice: Initially, the algorithm is used 
	(in Section~\ref{subsection:Small Items from Large Groups}) to pack small items in large groups. Then, the 
	algorithm is called as a subroutine in algorithm SmallGroups
	(see Section~\ref{subsection:Small Items from Small Groups}), for packing small items in the {\em special} small groups.
	In both of these calls, the distinct number of item sizes in $G_i \in \{ G_1, \ldots, G_W \}$ is at most $1/\eps^{2k+4}$. Hence, the number of partitions of $G_i$ 
	among bin types is bounded by 
	${|G_i|+(R-1) \choose |G_i|}^{1/\eps^{2k+4}} = {OPT+(R-1) \choose OPT}^{1/\eps^{2k+4}}$. Since, by Lemma~\ref{lem:few_large_groups}, 
	the number of large groups is bounded by $W = O(1)$, and by Lemma~\ref{lem:classes}, the number of bin types is bounded by $R = O(1)$, 
	the time complexity of Step~\ref{step:1} is bounded by 
	$${{OPT+(R-1) \choose OPT}^{1/\eps^{2k+4}}}^{\frac{1}{\eps^{2k+3}}} \leq {\frac{(e(OPT+R-1))}{R}^{(R-1) \cdot O(1)}} = O(N^{O(1)}),$$ which is polynomial. The first inequality holds since ${a \choose b} \leq (e a)^b$ for positive integers $1 \leq b \leq a$. 
	
	Once all groups are partitioned among the bin types, the algorithm proceeds recursively and independently for each bin type; thus, it suffices to show that the recursion depth is $O(1)$, and that the number of bin types in each recursive call is bounded by a constant. By Lemma~\ref{lem:classes}, before the first recursive call to EnumGroups the number of bin types is $O_\eps(1)$. Since the number of 
	distinct item sizes in $G_i$ is also a constant, for all $1 \leq i \leq W$, the resulting number of sub-types of bins in each recursive call is also 
	$O(1)$. As for the number of recursive calls,  we note that since in each call the number of groups decreases by one, the recursion depth is
	$O(1)$. As each step has a polynomial running time, the overall running time of EnumGroups is polynomial.
\qed

\posA
\noindent{\bf Proof of Lemma~\ref{lem:additionalsmallitems}:}
	In Steps~\ref{step:discardx=1} and~\ref{step:discard1} of Algorithm SmallGroups, we discard the $x$ largest small items of each small group. The total size of these items is at least $\eps OPT$; thus, the free space on average is at least $\mu \geq \eps$. In addition, all the remaining items are of size at most $\eps^{k+1} \leq \mu$. There are at most $OPT$ items from each group. Hence, given an almost optimal profile, by Lemma~\ref{lem:helpgreedy}, all the remaining small items can be packed feasibly. \qed

}

\posA
\noindent{\bf Proof of Lemma~\ref{lem:recursiveOPT}:}
We prove that $RecursiveEnum$ satisfies all the conditions.

(i) We first show that the running time is polynomial.
Using arguments similar to the proof of
Lemma~\ref{lem:few_large_groups}, there is a constant number of $t(g_{opt})$-significant groups for each $t \in T$ and $h = 0,\ldots,\alpha$ (Step~\ref{step:t-significantGroups}). These groups can be guessed in polynomial time. Furthermore, guessing the largest item in each of these groups (Step~\ref{step:Representative}), for each size class in the linear shifting, is also done in polynomial time, because there is a constant number of size classes. Then, the enumeration of patterns in Step~\ref{step:t-pattern} can be done in polynomial time, for each type $t$.

By Lemma~\ref{lem:classes}, the initial number of types is a constant. It follows that the number of sub-types is a constant, for each type. Therefore, the overall number of types is a constant. Thus, Step~\ref{step:subTypes} increases the number of types by a constant, and because there are $O(1)$ iterations, the overall running time is polynomial. 

(ii) We show that the increase in the number of bins due to padding types of small cardinality (Step~\ref{step:padding}) is bounded by a constant . The number of types in each iteration is a constant, and the number of iterations is a constant. To each type we add (at most) a constant number of bins. Thus, the total number of bins added is a constant. 

(iii) Recall that we apply linear shifting to each $t(g_{opt})$-significant group, for each type $t$, in each iteration (Step~\ref{step:Representative}). Thus, we discard from $G^t_{i_j}$ at most $\eps^3 |t|$ items (the largest size class).
This holds for any type $t$ and any iteration in which $G_{i_j}$ is  $t(g_{opt})$-significant. 
Thus, in $\alpha$ iterations we discard from $G^t_{i_j}$ at most $\eps |t|$ items. We conclude that the number of items discarded from $G_{i_j}$ in all types in all iterations is at most $\eps OPT$. 

We now bound the total size of discarded items.
For each $b \in E_0$ $f^0_b \leq \eps$. Therefore, $\sum_{b \in E_0} f_b^0 \leq \eps OPT$. Thus, the total size discarded in RecursiveEnum is at most $\eps OPT$, because the linear shifting is done solely on items that are guessed to be packed in bins in $E_0$. The latter is true if our guess corresponds to an optimal solution, as we prove below.  

(iv) For each type $t$ we enumerate over all possible selections of its $t(g_{opt})$-significant groups in Step~\ref{step:t-significantGroups}. One of the guesses corresponds to an optimal solution. Let $G_{i_j}$ be a group that is $t$-significant in an optimal solution $g_{opt}$. Then, applying linear shifting and discarding the largest size class will preserve the feasibility of the packing of the remaining size classes. Hence, our use of shifting preserves the optimal guess. We prove next that our greedy algorithm for choosing the items to each size class 
yields a feasible subset,
given an optimal guess for the groups and representative for each size class. We choose for each size class the largest remaining items that fit into this class. Thus, if the guess is optimal, we are guaranteed to be able to fill the other classes as well. Assume, towards contradiction, that there is a size class $C$ that cannot be filled. Therefore, its items are already taken by previous size classes $C_1, \ldots, C_R$. For a size class $C'$, let $s(C')$ be the size of the largest item in $C'$.
By the ordering of the algorithm, $s(C) \leq s(C_r)$ for all $r \in [R]$.
Thus, there must be items small enough  left for $C$.   \qed

\posA
\noindent{\bf Proof of Lemma~\ref{lem:eviction}:}
We construct a partition $I_t, t \in T$ satisfying conditions $(i)$-$(iii)$ 
for all $t \in T$. Let $g_{opt}$ be a guess corresponding to an optimal solution, i.e., $I_{\alpha}$ can be added to the bins resulting in an optimal solution. Let $I_{t(g_{opt})} \subseteq I_\alpha, t \in T\setminus\{t'\}$ be a packing of items in $I_\alpha$ in the type that yields $g_{opt}$. Recall that in Step~\ref{step:ExtraType} of Algorithm~\ref{Alg:PackSmallItems} we add $\eps OPT$ empty bins as a new type $t'$. The partition is defined as follows. $I_{t'}$ consists of  all $\ell \in I_{\alpha}$ such that $\ell$ is a $t$-non-negligible item from a $t(g_{opt})$-insignificant group in some iteration $h \in [\alpha]$, where $t \in T\setminus\{t'\}$.
Also, for $t \in T \setminus\{t'\}$, $I_t = I_{t(g_{opt}) \setminus I_{t'}}$.

We first prove that $I_{t'}$ satisfies conditions $(i)$-$(iii)$. There are $\alpha$ iterations. For some group $G_i$, $1 \leq i \leq n$, the number of non-negligible items from $G_i$ in types $t$ for which $G_i$ is $t(g_{opt})$-insignificant is at most $\sum_{t \in T}\eps^4 |t| = \eps^4 OPT$.  Thus, after $\alpha$ iterations the number of bins of type $t'$ required to avoid conflicts in $G_i$ is at most $\eps OPT$ and (i) holds for $I_{t'}$. The bins in $t'$ are of capacity $1$ and for all items in $\ell \in I_{t'}: s_{\ell} \leq \eps$ because $I_{\alpha}$ contains only small items. Therefore, condition $(ii)$ also holds for $I_{t'}$. Finally, the total size of these items is at most $\eps OPT$, as these items are packed in bins that are in $E_0$ and condition $(iii)$ holds.\footnote{Recall that we do not use guessing on bins in $D_0$.} 

Next, we prove that $I_{t}, t \in T\setminus\{t'\}$ satisfies conditions $(i)$-$(iii)$.
If condition $(i)$ does not hold for $I_t$ for some group $G_i$, then $I_{t(g_{opt})}$ cannot be packed into bins of type $t$ without causing a conflict (there must be a bin with two items from the same group by the pigeonhole principle) and this is a contradiction. In addition, there must be a partition in which condition $(iii)$ holds, or some $I_{t(g_{opt})}$ cannot be packed in bins of type $t$ contradicting the optimality of the packing up to this point, and the correct guess of $I_{t(g_{opt})}$.

Now, consider condition $(ii)$. The items in $I_t$ are either packed according to $g_{opt}$ such that they are $t$-negligible after iteration $\alpha$, or $t \subseteq E_{\alpha}$. First, for all types $t \subseteq D_{\alpha}$, by the definition of $D_{\alpha}$, $I_{t(g_{opt})}$ contains only $t$-negligible items; thus, condition $(ii)$ holds in this case. 

Note that condition $(ii)$ does not necessarily hold if we omit Step~\ref{step:eviction} in 
Algorithm~\ref{Alg:PackSmallItems}. We show that this step is completed successfully. Note that any bin $b \in E_\alpha$ contains at least $\alpha$ items, since in each iteration at least one item is added to $b$ (else, $b$ is moved to $D_h$ in some iteration $1 \leq h < \alpha$, and $b \notin E_\alpha$). Now, if we omit from bin $b$ and item $\ell$ that was added to $b$ in iteration $h \leq \alpha-3$ then the free space in $b$ increases at least by $\frac{f_b^{\alpha}}{\eps}$. Indeed, item $\ell$ was $t$-non-negligible  in iteration $h \leq alpha -3$; thus, $s_\ell \geq \eps^2 f_b^{\alpha-3}$. For $t \in E_\alpha$, it holds that $f_b^{\alpha-3} \geq \frac{f_b^{\alpha}}{\eps^3}$. Hence, $s_\ell \geq \eps^2 \cdot 
\frac{f_b^{\alpha}}{\eps^3}  = \frac{f_b^{\alpha}}{\eps}$.

We now show that we can evict in the process at most $\eps OPT$ items from each group $G_i$, $1 \leq i \leq n$. Recall that for each $t \in E_\alpha$ the bins of type $t$ contain items from distinct groups. Since there are at least $\alpha$ items in each bin, and $ \alpha > \frac{1}{\eps} +4$, the bins of type $t$  can be partitioned into $1/\eps$ subsets, $B_1^t, \ldots, B_{1/\eps}^t$, each consists of $\eps |t|$ bins. In the worst case,
all the bins of type $t$ contain items from the same set of groups,  $G_{K_1}, \ldots , G_{k_R}$, where $R \geq 1/\eps$. We can now evict from each bin in $B_r^t$, an item from $G_{k_r}$, $1 \leq r \leq 1/\eps$. Thus, we omit from each group at most $\eps |t| \leq \eps OPT$ items.

Let $I_{b(g_{opt})}$ be the set of items packed in bin $b \in t \subseteq E_{\alpha}$ in $g_{opt}$. For any $\ell \in I_{b(g_{opt})}: s_{\ell} \leq f(t)$ because $g_{opt}$ is a feasible packing. After Step~\ref{step:eviction}, the capacity available in each bin in $t$ and in particular in $b$ is at least $\frac{f(t)}{\eps}$. Hence, condition $(ii)$ holds for each $t \subseteq E_{\alpha}$ after the eviction phase. \qed

\comment{
\noindent{\bf Proof of Lemma~\ref{lem:movement}:}
    Let $P'$ be the polytope obtained by removing the following constraint from $P$: all constraints (2), and all constraints (4) except for the constraint in (4) corresponding to Group $G_j$. Let $A$ be the coefficient matrix of the constraints defining $P'$. We first prove that $A$ is totally unimodular.
    
    By Ghouila-Houri characterization \cite{ghouila1962caracterisation} a matrix $H$ is totally unimodular if and only if its columns can be partitioned into two disjoint sets $J_1, J_2$ such that for every row $i$: $$(8) \: \: \: |\sum_{j \in J_1} H_{i,j} -  \sum_{j \in J_2} H_{i,j} | \leq 1$$ Assume, w.l.o.g, that $T = \{t_1, \ldots, t_{|T|}\}$ such that $\forall i > j: \: f(t_i) \geq f(t_j)$ and that the items in $G_j$ are all integers in $[v,w], v,w \in \mathbb{N}$.\footnote{This can be assumed because we can use a permutation on the items identifiers.} 
    
    Define $J_1$ to be all columns corresponding to $(i,t_q)$ such that $i = 0 \: mod(2) ,q = 0 \: mod(2)$, and $i = 1 \: mod(2) ,q = 0 \: mod(2)$ and all other columns to be in $J_2$. We now prove for each row (defined by a constraint of $P'$) that (8) holds. First, it is easy to see that all non zero entries in $A$ are ones. For constraints in (1), they refer to rows with exactly one non zero entry, so (8) holds. As was said, constraints (2) are not considered in $P'$. In (3), for a fixed $i \in I$, (8) holds since half (+-1) of the entries are in $J_1$ and the rest are in $J_2$. For the constraint in (4) corresponding to $G_j$, the only non zero entries are for sequential natural numbers, and because they are all correspond to the same type $t$, half of them (+-1) are in $J_1$ by the definition of $J_1$. 
    
    One of the properties of totally unimodular matrices is that the vertices of the corresponding polytope are integral \cite{wolsey1999integer}. Thus, $x$ cannot be a vertex of $P'$, because $G_j$ is fractional. Therefore, there is a vector $m^j \neq 0$ such that $x+m^j \in P', x-m^j \in P'$. By stating the constraints of $P'$ for both $x+m^j, x-m^j$ and then subtracting the constraints of $x+m^j$ from the constraints of $x-m^j$, it follows that  (5), (6), (7) hold for $m^j$, and thus, $m^j$ is a movement for $G_j$. \qed

\posA
\noindent{\bf Proof of Lemma~\ref{lem:FewFractionalGroups}:}

    Assume towards a contradiction that there are $|T|+1$ fractional groups. W.l.o.g, assume that these groups are $G_1, \ldots, G_{|T|+1}$. By Lemma~\ref{lem:movement} $G_j, j \in [|T|+1]$ has a movement $m^j \neq 0$. Consider the following set of equalities over $\lambda_1, \ldots, \lambda_{|T|+1}$:
    
    $$(9) \: \: \forall t \in T: \:\: \sum_{j=1}^{|T|+1}\lambda_j \sum_{i \in G_j} m^j_{i,t} \cdot s_i = 0$$
    
    These are $|T|$ homogeneous equalities in $|T|+1$ variables. Thus, there exist $\lambda_1, \ldots, \lambda_{|T|+1}$, not all zeros, such that (9) holds. Using scaling we can assume that $x_{i,t}-\lambda_j m^j_{i,t} \in [0,1], \forall i \in I_0, t\in T, j \in [|T|+1]$. It can now be easily verified, by the properties of a movement, that $x-\sum_{j=1}^{|T|+1}\lambda_j m^j \in P, x+\sum_{j=1}^{|T|+1}\lambda_j m^j \in P$. Also, $\sum_{j=1}^{|T|+1}\lambda_j m^j \neq 0$. This is a contradiction to the fact that $x$ is a vertex of $P$. \qed

\posA
\noindent{\bf Proof of Lemma~\ref{lem:FewFractionalItems}:}
    Let $P_j$ be the following polytope. 
    
    $$P_j = \{y \in [0,1]^{G_j \times T}\} \: \: \: s.t:$$

$$ \: \: \:\forall i \in G_j, t \in T: s_i > \eps f(t) \rightarrow y_{i,t} = 0$$

$$ \: \: \:\forall t \in T: \sum_{i \in G_j} y_{i,t} s_i+\sum_{i \in I_0 \setminus G_j} x_{i,t} s_i \leq f(t) |t|$$

$$ \: \: \:\forall i \in G_j: \sum_{t \in T} y_{i,t} = 1$$

$$ \: \: \:\forall t \in T: \sum_{i \in G_j} y_{i,t} \leq |t|-|(I'_t \setminus I_L) \cap G_j|$$

It holds that $y \in [0,1]^{G_j \times T}$, defined by $y_{i,t} = x_{i,t}, i \in G_j, t \in T$ is a vertex of $P_j$. $P_j$ is a 2-dimensional Vector Bin Packing polytope, for which it is known that its vertices are with at most $2|T|$ fractional entries. \qed

}
\comment{
\posA
\noindent{\bf Proof of Lemma~\ref{lem:5ecoupledItems}:}
The number of items discarded from each group is at most $\eps^{k+2} \cdot OPT$, since all groups are small.
 Assume that the total size of these items is strictly larger than $\eps OPT$. Since each discarded item is {\em coupled} with a large conflicting item from the same group, whose size is at least $1/\eps$ times larger (recall that the medium items are discarded), this implies that the total size of large conflicting items is greater than $OPT$. Contradiction. \qed    
}

\posA
\noindent{\bf Proof of Lemma~\ref{lem:greedyInType}:}
    We prove the claim by induction on $|t|$. For the base case, let $|t| = 1$. Since there is only one bin in $t$, $|G_j^t| \leq 1$ by Condition (i). Thus, $I_t$ can be packed in $t$ without conflicts. Also, the total size of all items is at most the free capacity of the bin, by Condition (iii). Hence, we can pack all items feasibly.

For the induction step, assume the claim holds for $|t|-1$ bins. Now, suppose that 
there is a type $t$ with $|t|$ bins. Recall that GreedyPack initially assigns items to the bin of maximum total size. Now, consider two cases.

$(1)$ After the packing of the first bin, the total size of items from $I_t$ in this bin is strictly less than $(1-\delta)f(t)$. Then, we prove in this case that the first bin contains the largest item left in each small group. In any two consecutive attempts in GreedyPack of packing the first bin, the difference in the total size packed is by the size of one item. By Condition (ii), this difference is at most $\delta f(t)$. Therefore, if packing the largest item from each group overflows, GreedyPack would continue until the first iteration that it finds a feasible packing. Observe the last attempt, and assume towards a contradiction that it is not the largest item of each group. Then, the previous attempt overflowed, therefore, the current attempt must be with total size at least $(1-\delta)f(t)$, which is a contradiction to $(1)$.  

The remaining items can be feasibly packed in the remaining $|t|-1$ bins, since taking an arbitrary remaining item from each group cannot overflow because the packing of the first bin does not overflow and contains the largest item from each group.

$(2)$ The first bin is packed with total size at least $(1-{\delta})f(t)$. We prove that all conditions hold for applying the induction hypothesis for the last $|t|-1$ bins. First, GreedyPack packs in the first bin an item from each group. Assume towards a contradiction that GreedyPack fails in doing so in the first bin. Therefore, the sum of the smallest item of each group in $I_t$ overflows from $f(t)$. Let $s_{j,t}^{min}$ be the size of the smallest item in group $G_j^t$. Since there are exactly $|t|$ item from each group in $I_t$, it follows that $S(I_t) >   |t| \sum_{G_j^t} s_{j,t}^{min} > |t|  f(t)$ in contradiction to Condition (iii). Thus, we are left with $|t|-1$ item from each group after the packing of the first bin. 

Second, Condition (ii) is trivially satisfied for any number of bins in $t$. Third, by Case (2) we pack in the first bin total size of at least $(1-{\delta})f(t)$. Therefore, the residual size of $I_t$ after the packing of the first bin is at most $(1-\delta) f(t) |t| - (1-{\delta})f(t) \leq (1-\delta) f(t) (|t|-1)$ and Condition (iii) holds. 

 Hence, by the induction hypothesis, the remaining items in $I_t$ can be packed in bins $2, \ldots, |t|$. \qed

\posA
\noindent{\bf Proof of Lemma~\ref{lem:greedySummary}:}
    Given $g_{opt}$, we can find a partition $I_t, t \in T$ of the remaining items, not violating constraints 1,2,3. Furthermore, in Step~\ref{step:discardCapacity} we add $2\cdot\eps |t|$ extra bins to $t$, thus $S(I_t) \leq (1-\eps)|t|$ ($|t|$ refers to the cardinality of $t$ after Step~\ref{step:discardCapacity}). Therefore, conditions (i), (ii) for Lemma~\ref{lem:greedyInType} hold for each type $t \in T$ with parameter $\delta = \eps$ (except for maybe too many items from some group because of large items from small groups, which GreedyPack discards and by Lemma~\ref{lem:5ecoupledItems} this results with a small number of extra bins). For each $t \in T$ GreedyPack discards items of total size at least $\eps |t|f(t)$. Therefore, the remaining items in $I_t$ are with total size at most $(1-\eps)|t|f(t)$ because of constraint (1). We conclude that besides the discarded items, all items $I_t$ are packed feasibly in $t$. The discarded items are at most $\eps |t|$ from each group in each type $t$ by GreedyPack. Thus, combined, the overall total size discarded is $O(\eps)OPT$ since these are small items, and $O(\eps)OPT$ items are discarded from each group. \qed 
    
 \posA   
\noindent{\bf Proof of Lemma~\ref{lem:discardFinal}:}
We prove the claim by deriving a bound on the number of extra bins required for packing the items discarded throughout the execution of the scheme. During the scheme, we discard $O(\eps) OPT$ items from each group, and discard a total size $O(\eps) OPT$ overall. These items are packed by a $2$-approximation algorithm (BalancedColoring), and thus, the number of extra bins needed for packing these items is $O(\eps)OPT+1$. \qed

\comment{
\noindent{\bf Proof of Lemma~\ref{lem:discardFinal}:}
We prove the claim by deriving a bound on the number of extra bins required for packing the items discarded throughout the execution of the scheme. During the scheme, we discard $O(\eps) OPT$ items from each group, and discard a total size $O(\eps) OPT$ overall. These items are packed by a $2$-approximation algorithm (BalancedColoring), and thus, the number of extra bins needed for packing these items is $O(\eps)OPT+1$. 
\begin{enumerate}
\item \label{1}
Rounding the sizes of large and medium items from large groups. By the proof of Lemma~\ref{lem:shiftingCanBeUsedForI}, the total size of items discarded due to shifting is at most ${\eps \cdot OPT}$, and at most $\eps ^{2k+4} OPT$ items from each large groups are discarded.

\item \label{smallshift}
Rounding the sizes of large items from small groups. The total size of items discarded due to the shifting is at most $\lfloor 2\eps \cdot OPT \rfloor$, and at most $\eps ^{k+2} OPT$ items are discarded from each small group, as this is the maximum number of large items in a small group. Also, we discard the items in the last size class, i.e., at most $\lfloor 2\eps \cdot OPT \rfloor$ items of total size at most $\lfloor 2\eps \cdot OPT \rfloor$.

\item \label{4} Packing medium items from small groups (Section~\ref{subsection:Medium Items of Small Groups}). By Lemma~\ref{lem:k_val},
 the total size of these items is at most $\eps^2 \cdot OPT$. Since the items belong to small groups, their total number in each group is at most $\eps^{k+2} OPT -1$.
 
 \item \label{2} Rounding the sizes of small items in large groups (before packing these items by algorithm EnumGroups).
 We discard 
  $\eps^{2k+4} OPT$ items  from each large group. By Lemma~\ref{lem:few_large_groups}, the number of large groups is $L \leq \frac{1}{\eps^{2k+3}}$.
 The size of each small item is at most $\varepsilon^{k+1}$. Therefore, the total size of discarded items is at most 
 $\eps^{2k+4} OPT \cdot L \cdot \eps^{k+1} \leq \eps \cdot OPT$. 
 
\item \label{3} Items discarded in algorithm SmallGroups.
\begin{enumerate}
\item 
In Step~\ref{step:discardx=1}, the total size of discarded items is at most $\eps \cdot OPT+ \eps^{k+1}$. The number of items discarded from each group is at most $x = 1$. 

\item
In Step~\ref{step:discard1}, the total size of discarded items is at most $2 \eps \cdot OPT$.
The number of items discarded from each group is at most $m= \lfloor \eps OPT \rfloor$.

\item In Step~\ref{step:discardregular}, the total size of discarded items is at most $V_{m+1} \leq m$. 
The number of items discarded from each group is exactly $m+1$.

\item
In Step~\ref{step:linearGrouping} we apply shifting to the items in special groups. By Lemma~\ref{lem:disgroups}, the number of these groups satisfies
$|A| \leq \frac{1}{\eps^2}$. At most $Q= \lfloor \eps^{3} \cdot OPT \rfloor$ small items are discarded from each group. Hence, the total size of these items is
at most $\frac{1}{\eps^2} \cdot \eps^{k+1} \lfloor \eps^{3} \cdot OPT \rfloor \leq \eps \cdot OPT$. The number of items discarded from each group is at most $m$.

\item By Lemma~\ref{lem:5ecoupledItems} The total size of discarded items in Step~\ref{step:discardsmalloverlarge} in Algorithm~\ref{Alg:greedyPack} is at most $\eps OPT$ and at most $\eps^{k+2} \cdot OPT$ items are discarded from each group.

\end{enumerate}
\end{enumerate}

By the above discussion, the total size of discarded items is at most $\eps OPT+4\eps OPT+\eps^2 OPT+\eps OPT+3\eps OPT \leq 10 \eps OPT$ (In \ref{1},~\ref{smallshift}, 
\ref{4}, \ref{2}, \ref{3}, respectively, where we take the worst case in \ref{3}). The maximum number of discarded items from each group is at most $\eps OPT+\eps OPT+\eps OPT+2m+1+\eps^{k+2}OPT \leq 6\eps OPT+1$ (in Steps~\ref{step:discardregular}, \ref{step:linearGrouping} of algorithm SmallGroups, in addition to large items discarded due to shifting). Hence, we can use BalancedColoring to pack the  
items discarded throughout the execution of the scheme in at most $max\{ \lceil 2 \cdot 10\eps OPT \rceil, \lceil 8\eps OPT+6\eps OPT+1 \rceil \} \leq 20\eps OPT+1$ bins. The inequality holds since $OPT > \frac{1}{\eps}$. 
\qed
}

\posA
\noindent{\bf Proof of Theorem~\ref{theorem:APTAS}:}
The feasibility of the packing follows from the way algorithms RecursiveEnum, GreedyPack and SmallGroups assign items to the bins.
	We now bound the total number of bins used by the scheme. As shown in the proof of Lemma~\ref{lem:discardFinal}, given the parameter $\eps \in (0,1)$, the total number of extra bins used for packing the medium items from small groups and the discarded items is at most $O( \eps) OPT +1$. By scaling $\eps$ by a constant (that dos not depend on $\eps$), we have that the total number of bins used by the scheme is $ALG(I) \leq (1+\eps)OPT +1$. As shown above, each step of the scheme has running time polynomial in $N$.

	\comment{
	We note that each item is packed by exactly one way, using the correctness of one of the following lemmas: large and medium items from large groups are packed feasibly by Lemma~\ref{lem:dw1}, the Large items from small groups are packed feasibly by Lemma~\ref{lem:dw2} and the medium items of small groups are packed in extra bins by Lemma~\ref{lem:discardFinal}. Moreover, The small items of large groups are packed feasibly by Theorem~\ref{lem:larggroupCorrectness} and the small items of small groups are packed feasibly by Theorem~\ref{lem:smallgroup}. 
	
	The $OPT$ bins $b_1, \ldots, b_{OPT}$ which we use through all steps are packed with no overflows, as we explicitly prove in previous steps. Combining the number of additional bins that we use through all steps of the packing: the medium items and the discarded items, we need overall $12 \varepsilon OPT+1$ new bins at most by Lemma~\ref{lem:discardFinal}. Assuming that $0 < \varepsilon < 1$ is the desired parameter for the algorithm. Let there be $\varepsilon' = \frac{\varepsilon}{12}$. We use $\varepsilon'$ as the parameter for the algorithm, thus the number of bins that we use is $ALG(I) \leq OPT+12\varepsilon' \cdot OPT+1 = OPT+12\cdot \frac{\varepsilon}{12} \cdot OPT+1 = (1+\varepsilon)OPT +1$. We show for each step that its running time is polynomial. Since there are $O(1)$ steps, the overall running time is polynomial in $N$.
}
\qed

\section{Proof of Theorem \ref{thm:partitionPhase}}

\newcommand{\bm}{\bar{m}}
\newcommand{\by}{\bar{y}}
\newcommand{\bx}{\bar{x}}
\newcommand{\bgam}{\bar{\gamma}}
\newcommand{\bb}{\bar{b}}
\newcommand{\ariel}[1]{{\color{red} (#1)}}

We prove the theorem using the next lemmas. Let $\bx \in P$ be a vertex of $P$. For a group $G_j, j \in [n]$, a {\em movement} is a vector $\bm^j \in \mathbb{R}^{I_\alpha \times T}$ such that:

\begin{align}
&\forall \ell\in I_{\alpha}, t \in T, \ell \notin G_j  \textnormal{ or } x_{\ell, t} \in \{0,1\} : &~~~~~& \bm^j_{\ell,t} = 0
\label{eq:move_supp}\\
&\forall \ell \in I_{\alpha}: &~~~~~&\sum_{t \in T} \bm^j_{\ell,t} = 0
\label{eq:move_select}\\
&\forall t \in T \:s.t\: \sum_{\ell \in G_j} x_{\ell,t} = L_{t,j}: &~~~~~&  \sum_{\ell \in G_j} \bm^j_{\ell,t} = 0
\label{eq:move_group}
\end{align}
where 
$L_{t,j}= |t| - |(I'_t)\setminus I_L) \cap G_j|$ for $t\in T$ and $j\in \{1,\ldots, N\}$. Additionally, define the 
sets $X_j = \{ (\ell,t)\in G_j\times T~|~ s_{\ell}>\eps \cdot f(t)\}$ for $j\in [n]$. 

We say that group $G_j$ is {\em fractional} if there are $\ell \in G_j$ and $t \in T$ such that $\bx_{\ell,t} \in (0,1)$.  

\begin{lemma}
	\label{lem:movement}
	If Group $G_j$ is fractional then $G_j$ has a movement $\bm^j \neq 0$.
\end{lemma}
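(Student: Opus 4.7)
The plan is to reduce the existence of a movement to a polyhedral integrality argument on a restricted polytope associated with $G_j$ alone. Define
\[
P_j \;=\; \Bigl\{y\in[0,1]^{G_j\times T}~:~ y_{\ell,t}=0~\forall (\ell,t)\in X_j;~ \sum_{t\in T} y_{\ell,t}=1~\forall \ell\in G_j;~ \sum_{\ell\in G_j} y_{\ell,t}\le L_{t,j}~\forall t\in T\Bigr\}.
\]
That is, $P_j$ retains, in the variables indexed by $G_j\times T$, only the support constraints coming from $X_j$, the assignment constraints, and the $j$-th family of group constraints of $P$; the capacity constraints~(2) and the group constraints~(4) for $j'\neq j$ are discarded. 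Let $\by:=(\bx_{\ell,t})_{\ell\in G_j,t\in T}$; then $\by\in P_j$ is immediate from $\bx\in P$.

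The key claim is that every vertex of $P_j$ is integral. After suppressing the variables forced to $0$ by the $X_j$-constraints, the remaining equality/inequality rows of $P_j$ form precisely the vertex--edge incidence matrix of the bipartite graph on $G_j\cup T$ whose edges correspond to the surviving variables, a classical totally unimodular matrix. Adjoining the box constraints $0\le y_{\ell,t}\le 1$ contributes identity-type rows and preserves total unimodularity; with integral right-hand sides ($0$, $1$, and $L_{t,j}\in\mathbb{Z}$), every vertex of $P_j$ is therefore integral.

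Since $G_j$ is fractional, $\by$ has a coordinate in $(0,1)$ and hence is not a vertex of $P_j$. So there exists a nonzero direction $d\in\mathbb{R}^{G_j\times T}$ with $\by+d,\,\by-d\in P_j$. Extend $d$ to $\bm^j\in\mathbb{R}^{I_\alpha\times T}$ by setting $\bm^j_{\ell,t}=d_{\ell,t}$ for $\ell\in G_j$ and $\bm^j_{\ell,t}=0$ otherwise. The box constraints of $P_j$ force $d_{\ell,t}=0$ whenever $\bx_{\ell,t}\in\{0,1\}$, giving (5); condition (6) is obtained by subtracting $\sum_t\by_{\ell,t}=1$ from $\sum_t(\by_{\ell,t}\pm d_{\ell,t})=1$; and (7) at any tight $t$ follows by combining $\sum_{\ell\in G_j}(\by_{\ell,t}+d_{\ell,t})\le L_{t,j}=\sum_{\ell\in G_j}\by_{\ell,t}$ with the analogous inequality for $-d$ to conclude $\sum_{\ell\in G_j}d_{\ell,t}=0$.

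The main obstacle I anticipate is ensuring that dropping the capacity constraint~(2) and the other groups' constraints~(4) neither ejects $\by$ from $P_j$ nor introduces new fractional vertices. The first is immediate from feasibility of $\bx$ in $P$; the second is exactly what the totally unimodular argument buys us, and it highlights \emph{why} the movement definition~(5)--(7) excludes capacity: the capacity rows have arbitrary item-size coefficients and would destroy total unimodularity, so their absence from the movement requirements is what makes the reduction to $P_j$ work.
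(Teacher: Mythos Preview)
Your proof is correct and follows essentially the same route as the paper: restrict to a polytope $P_j$ in the $G_j\times T$ variables (assignment equalities, the $j$-th group cardinality constraints, and the $X_j$ support constraints), invoke total unimodularity to conclude that the fractional restriction $\by$ of $\bx$ is not a vertex, and extract a nonzero direction that extends to a movement. The only cosmetic difference is that you certify total unimodularity via the bipartite incidence-matrix observation, whereas the paper writes $P_j$ in an inequality form and appeals to the column-sign criterion; the two arguments are equivalent.
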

The proof of Lemma~\ref{lem:movement} utilizes properties of {\em totally unimodular} matrices. A matrix $A$ is totally unimodular if every square submatrix of $A$ has a determinant $1$, $-1$ or $0$. If $A\in \mathbb{R}^{n\times m}$ is totally unimodular and $b\in \mathbb{Z}^{m}$ is an integral vector, it holds that the vertices of the polytope  $P_A =\{\bx \in \mathbb{R}_{\geq 0}^n~|~A \bx \leq \bb\}$ are integral. That is,  if $\bx\in P_A$ is a vertex  of $P_A$ then $\bx\in \mathbb{Z}^n$ \cite{HK56}.  We use the following criteria for total unimodularity, which is a simplified version of a theorem from~\cite{HK56}.
\begin{lemma}
	\label{lem:unimodular}
	Let $A\in \mathbb{R}^{n\times m}$ be a matrix which satisfies the following properties.
	\begin{itemize}
		\item All the entries of $A$ are  in  $\{-1,1,0\}$.
		\item Every column of $A$ has up to two non-zero entries.
		\item If a column of $A$ has two non-zero entries, then these entries have opposite signs.
		\end{itemize}
	\end{lemma}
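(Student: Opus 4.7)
The plan is to establish that under the three listed structural conditions the matrix $A$ is totally unimodular, i.e., every square submatrix of $A$ has determinant in $\{-1,0,1\}$. I would proceed by induction on $k$, the size of the square submatrix considered, noting first that the three hypotheses are inherited by every submatrix of $A$ (restricting rows and columns preserves the entry restriction, the ``at most two non-zero entries per column'' condition, and the opposite-signs condition).

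For the base case $k=1$, a $1\times 1$ submatrix is a single entry in $\{-1,0,1\}$, so its determinant is trivially in $\{-1,0,1\}$. For the inductive step, assume the claim holds for all $(k-1)\times(k-1)$ submatrices and let $B$ be a $k\times k$ submatrix of $A$. I would split into three exhaustive cases based on the column structure of $B$: (i) some column of $B$ is entirely zero, in which case $\det(B)=0$; (ii) some column of $B$ contains exactly one non-zero entry, in which case I expand the determinant along that column, obtaining $\det(B)=\pm \det(B')$ for a $(k-1)\times(k-1)$ submatrix $B'$, and conclude by the induction hypothesis; (iii) every column of $B$ has exactly two non-zero entries. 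In case (iii), the third hypothesis forces each such column to contain exactly one $+1$ and one $-1$, so summing all $k$ rows of $B$ yields the zero row vector. Thus the rows of $B$ are linearly dependent and $\det(B)=0$.

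The analysis above covers all possibilities by the second hypothesis (at most two non-zero entries per column), so the induction closes and every square submatrix of $A$ has determinant in $\{-1,0,1\}$, i.e., $A$ is totally unimodular. The only non-routine part is case (iii), where the key observation is that the opposite-signs hypothesis, combined with entries being in $\{-1,1\}$, immediately yields a linear dependence among all $k$ rows; everything else is bookkeeping. I would expect no substantial obstacle, since this is a standard textbook argument (a simplified version of the Heller--Tompkins/Hoffman--Kruskal criterion cited in the paper), and its main use here is downstream, in deriving Lemma~\ref{lem:movement} via the fact that totally unimodular constraint matrices yield integral polytope vertices.
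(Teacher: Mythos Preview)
Your inductive argument is correct and complete: the three hypotheses are hereditary under taking submatrices, the base case is trivial, and your three-way case split in the inductive step is exhaustive and sound. In particular, case~(iii) is handled cleanly, since each column having exactly one $+1$ and one $-1$ forces the row-sum to vanish.

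The only point of comparison is that the paper does not actually prove this lemma: it is stated as a known criterion and attributed to~\cite{HK56}, with the conclusion ``Then $A$ is totally unimodular'' appearing just after the lemma environment. So you have supplied a self-contained proof where the paper simply cites the literature. Your argument is the standard textbook one (essentially the Heller--Tompkins proof), and there is nothing to correct.
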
  
Then $A$ is totally unimodular.

\posA
\noindent{\bf Proof of Lemma~\ref{lem:movement}:}
We show the existence of the movement $\bm^j$ using the polytope $P_j$ defined as follows. 
	\begin{equation}
		\label{eq:Pj_def}
	P_j= \left\{ \by\in \mathbb{R}_{\geq 0}^{G_j \times T} ~\middle |~ \begin{array}{lcc}
		\displaystyle \forall (\ell,t)\in X_j&:& \displaystyle\by_{\ell,t}\leq 0\\
	 \displaystyle	\forall \ell\in G_j&:& \displaystyle  \sum_{t\in T \textnormal{ s.t. } (\ell,t)\notin X_j} -\by_{\ell,t} \leq -1\\
	 \displaystyle \forall t\in T&:& 
	 \displaystyle  \sum_{\ell\in G_j \textnormal{ s.t. } (\ell,t)\notin X_j } \by_{\ell,t} \leq L_{t,j}
	\end{array}\right\}.
	\end{equation}

Define  a vector $\by^*\in \mathbb{R}^{G_j \times T}$ by $\by^*_{\ell,t}=\bx_{\ell,t}$. It follows from the definition of $P$ that $\by^*\in P_j$. We can represent the inequalities in \eqref{eq:Pj_def} using a matrix notation as  $P_j =\left\{\by \in \mathbb{R}^{G_j \times T}_{\geq 0} ~\middle|~ A\by \leq \bb \right\}$. It follows that $A$ contains only entries in $\{-1,0,1\}$ and the entries in $\bb$ are all integral.
Furthermore, every column of $A$ contains at most $2$ non-zero entries, and if there are two non-zero entries in a column then they are of a different sign.
By Lemma~\ref{lem:unimodular}, it follows that $A$ is totally unimodular. It thus holds that all the vertices of the polytope $P_j$ are integral. As $\by^*\in P_j$ is non-integral (since $G_j$ is fractional), it follows that $\by^*$ is not a vertex of $P_j$. Hence, there is a vector $\bgam\in \mathbb{R}^{G_j \times T}$, $\bgam \neq 0$  such that $\by^*+\bgam, \by^*-\bgam\in P_j$. 

We define $\bm^j\in \mathbb{R}^{I_{\alpha}\times T}$ by $\bm^j_{\ell,t}=\bgam_{\ell,t}$ for $(\ell,t)\in G_j \times T$ and $\bm^j_{\ell,t} =0$ otherwise. Clearly, $\bm^j\neq 0$ as $\bgam\neq 0$. 
Observe that for $\ell\in I_{\alpha}\setminus G_j$ and $t\in T$ it holds that $\bm^j_{\ell,t}=0$ by definition.
 For $\ell\in G_j$ and $t\in T$ such that $\bx_{\ell,t} =0$, as $\by^*_{\ell,t}+\bgam_{\ell,t}, \by^{*}_{\ell,t} -\bgam_{\ell,t}\geq 0$ and $\by^*_{\ell,t}=\bx_{\ell,t}=0$,  it follows that $\bgam_{\ell,t}=0$. 
 For $\ell \in G_j$ and $t\in T$ such that $\bx_{\ell,t} =1$, it follows that $\by^*_{\ell,t'}=\bx_{\ell,t'} =0$ for  $t'\in T\setminus \{t\}$ by the definition of $P$ as well as $(\ell,t)\not\in X_j$. Thus, by the previous argument, we have $\bgam_{\ell,t'}=0$ for every $t'\in T\setminus \{t\}$. 
 Therefore, 
 \begin{equation}
 	\label{eq:gam_supp_first}
-1 -\bgam_{\ell,t}=- \sum_{t'\in T \textnormal{ s.t } (\ell,t')\not\in X_j} \left(\by_{\ell,t'}  +\bgam_{\ell,t'}\right) \leq -1,
\end{equation} 
where the last inequality is due to $ \by^*+\bgam \in P_j$. 
Similarly, as $\by^*-\bgam \in P_j$, we have
\begin{equation}
	 	\label{eq:gam_supp_second}
	 	 -1 +\bgam_{\ell,t}
=-\sum_{t'\in T \textnormal{ s.t } (\ell,t)\not\in X_j} \left(\by_{\ell,t'}  -\bgam_{\ell,t'}\right) \leq -1,
\end{equation} 
By~\eqref{eq:gam_supp_first} and \eqref{eq:gam_supp_second} we have $\bgam_{\ell,t}=0$. Overall, we have that $\bm^j$ satisfies \eqref{eq:move_supp}. 

For any $\ell\in I_{\alpha}\setminus G_j$ it holds that $\sum_{t\in T} \bm^j_{\ell,t} = 0 $. For $\ell\in G_j$, since  $\by^*+\bgam\in P_j$ we have
\begin{equation}
	\label{eq:t_sum_first}
-1 -\sum_{t\in T} \bgam_{\ell,t} = \sum_{t\in T}-(\bx_{\ell,t} +\bgam_{\ell,t}) =  
 \sum_{t\in T~\textnormal{ s.t. }(\ell,t)\notin X_j}-(\by^*_{\ell,t} +\bgam_{\ell,t}) \leq -1,
 \end{equation}
 where the first equality holds since $\bx\in P$ and the second equality uses $\by^*_{\ell,t} +\bgam_{\ell,t}=0$ for $(\ell,t)\in X_j$. Similarly, since $\by^*-\bgam\in P_j$, we have 
 \begin{equation}
 		\label{eq:t_sum_second}
 -1 +\sum_{t\in T} \bgam_{\ell,t} = \sum_{t\in T}-(\bx_{\ell,t} -\bgam_{\ell,t}) =  
 \sum_{t\in T~\textnormal{ s.t. }(\ell,t)\notin X_j}-(\by^*_{\ell,t} -\bgam_{\ell,t}) \leq -1.
 \end{equation}
By \eqref{eq:t_sum_first} and \eqref{eq:t_sum_second} we have 
  $\sum_{t\in T} \bm^{j}_{\ell,t}=\sum_{t\in T} \bgam_{\ell,t}=0$. Thus, $\bm^j$ satisfies \eqref{eq:move_select}.
  
  Finally, let $t\in T$ such that $\sum_{\ell\in G_j} \bx_{\ell,t} =L_{t,j}$ . As before,
  \begin{equation}
  	\label{eq:gam_group_sum_first}
  	L_{t,j} +\sum_{\ell\in G_j} \bgam_{\ell,t}= \sum_{\ell\in G_j} \left(\by^*_{\ell,t} +\bgam_{\ell,t}\right)
  	= \sum_{\ell\in G_j \textnormal{ s.t. } (\ell,t)\notin X_j} \left(\by^*_{\ell,t} +\bgam_{\ell,t}\right)  \leq L_{t,j}, 
  	\end{equation}
  where the inequality follows from $\by^*+\bgam \in P_j$. Using a similar argument,
  \begin{equation}
  	  	\label{eq:gam_group_sum_second}L_{t,j} -\sum_{\ell\in G_j}\bgam_{\ell,t}= \sum_{\ell \in G_j} \left(\by^*_{\ell,t} -\bgam_{\ell,t}\right) =
  	  	\sum_{\ell\in G_j \textnormal{ s.t. } (\ell,t)\notin X_j} \left(\by^*_{\ell,t} -\bgam_{\ell,t}\right) 
  	  	 \leq L_{t,j}.
  \end{equation}
By \eqref{eq:gam_group_sum_first} and \eqref{eq:gam_group_sum_second}, we have $\sum_{\ell\in G_j} \bm^j_{\ell,t} = \sum_{\ell\in G_j} \bgam^j_{\ell,t} =0$. Thus, $\bm^j$ satisfies \eqref{eq:move_group}. Overall, we show that $\bm^j\neq 0$ is a movement of $G_j$.  
\qed

\begin{lemma}
	\label{lem:FewFractionalGroups}
	There are at most $|T|$ fractional groups.
\end{lemma}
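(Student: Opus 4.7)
The plan is to proceed by contradiction. Suppose there are at least $|T|+1$ fractional groups, and denote them by $G_{j_1}, \ldots, G_{j_{|T|+1}}$. By Lemma~\ref{lem:movement}, each such group admits a nonzero movement $\bar{m}^{j_k}$. I would try to construct a single nonzero direction $\bar{m} = \sum_{k=1}^{|T|+1} \lambda_k \bar{m}^{j_k}$ such that $\bar{x} + \epsilon \bar{m}$ and $\bar{x} - \epsilon \bar{m}$ both lie in $P$ for a sufficiently small $\epsilon > 0$, contradicting the assumption that $\bar{x}$ is a vertex.

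The choice of coefficients $\lambda_k$ is dictated by the size (capacity) constraint, which is the one constraint of $P$ not automatically preserved by the movement properties. For each $t \in T$, moving in direction $\bar{m}$ changes the total packed size in type $t$ by
\[
\sum_{k=1}^{|T|+1} \lambda_k \sum_{\ell \in G_{j_k}} \bar{m}^{j_k}_{\ell,t}\, s_\ell .
\]
Requiring this to vanish for every $t \in T$ gives a homogeneous linear system of $|T|$ equations in $|T|+1$ unknowns, and hence admits a nonzero solution $(\lambda_1, \ldots, \lambda_{|T|+1})$.

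Next I would check that $\bar{x} \pm \epsilon \bar{m} \in P$ for small $\epsilon > 0$, constraint by constraint. The zero-coordinate constraints (those coming from $X_{j_k}$) and the $[0,1]$ box bounds are preserved because each $\bar{m}^{j_k}$ is supported outside the set where $\bar{x}_{\ell,t}\in\{0,1\}$, by \eqref{eq:move_supp}. The equality $\sum_{t} x_{\ell,t} = 1$ is preserved since \eqref{eq:move_select} gives $\sum_t \bar{m}^{j_k}_{\ell,t} = 0$ for every $\ell$. The size constraint for each $t$ is preserved by the very choice of the $\lambda_k$'s. Finally, for the cardinality constraint $\sum_{\ell \in G_j} x_{\ell,t} \le L_{t,j}$: if $G_j$ is not one of the fractional groups, no movement touches any coordinate in $G_j \times T$, so the sum is unchanged; if $G_j = G_{j_k}$, disjointness of supports among the movements implies that the change equals $\lambda_k \sum_{\ell \in G_{j_k}} \bar{m}^{j_k}_{\ell,t}$, which is zero whenever the constraint is tight by \eqref{eq:move_group}, while slack constraints survive a small perturbation.

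The last thing to verify is that $\bar{m} \neq 0$. This follows from the fact that the movements $\bar{m}^{j_k}$ have pairwise disjoint supports (each $\bar{m}^{j_k}$ is supported on $G_{j_k}\times T$, and the groups are disjoint), so any $\lambda_k \ne 0$ with $\bar{m}^{j_k} \ne 0$ prevents cancellation. The main obstacle I anticipate is the bookkeeping needed to check all four families of constraints of $P$ simultaneously; the crucial conceptual point is that the three movement conditions \eqref{eq:move_supp}--\eqref{eq:move_group} are tailored precisely so that every constraint of $P$ except the size constraint is automatically preserved, which is what reduces the problem to the small linear system above.
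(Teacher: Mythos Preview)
Your proposal is correct and follows essentially the same approach as the paper: assume $|T|+1$ fractional groups, take their movements from Lemma~\ref{lem:movement}, solve a homogeneous system of $|T|$ equations (one per type, coming from the size constraint) in $|T|+1$ unknowns to find a nonzero combination, and then verify constraint-by-constraint that $\bar{x}\pm\epsilon\bar{m}\in P$ using the three movement properties and the disjointness of the supports. The paper carries out the scaling explicitly via constants $K$ and $M$ rather than invoking a small $\epsilon$, but the argument is otherwise the same.
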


\begin{proof}
	Assume towards a contradiction that there are $|T|+1$ fractional groups. W.l.o.g, assume that these groups are $G_1, \ldots, G_{|T|+1}$. By Lemma~\ref{lem:movement}, $G_j, j \in [|T|+1]$ has a movement $\bm^j \neq 0$. Consider the following set of equalities over $\lambda_1, \ldots, \lambda_{|T|+1}$:
	\begin{equation}
		\label{eq:homegeneous} \forall t \in T: ~~~~ \sum_{j=1}^{|T|+1}\lambda_j \sum_{\ell \in I_{\alpha}} \bm^j_{\ell,t} \cdot s_\ell = 0.
	\end{equation}
	These are $|T|$ homogeneous linear equalities in $|T|+1$ variables. Thus, there exist $\lambda_1, \ldots, \lambda_{|T|+1}$, not all zeros, for which  \eqref{eq:homegeneous} holds.
	
	Define 
	$$K_1 = \min_{(\ell,t)\in I_{\alpha}\times T}\min \left\{ \bx_{\ell,t}, 1-\bx_{\ell,t} ~\middle|~0<\bx_{\ell,t}<1\right\}, $$ 
	$$	K_2= \min \left\{
	L_{t,j} - \sum_{\ell \in G_j} \bx_{\ell,t}~\middle |~1\leq j\leq |T|+1,\sum_{\ell \in G_j} \bx_{\ell,t} < L_{t,j}  \right\},
	$$  
	and $K= \min\{K_1,K_2\}$.  Additionally, define 
	$$M=\max\left\{ |\bm^{j}_{\ell,t}|~|~1 \leq j \leq |T|+1,~ (\ell,t)\in I_{\alpha} \times T\right\}.$$
	Observe that $M,K>0$. 
	Using a scaling argument, we may assume that $\lambda_j\leq \frac{K}{|I_{\alpha}| \cdot M}$ for any $1\leq j\leq |T|+1$. 
	
	Define $\bx^+= \bx + \sum_{j=1}^{|T|+1} \lambda_j \cdot \bm^j$. In the following we show that $\bx^+\in P$. 
	
	For every $\ell\in I_{\alpha}\setminus (G_1\cup \ldots \cup G_{|T|+1})$  and $t\in T$ it holds that $\bm^{1}_{\ell,t}= \ldots = \bm^{|T|+1}_{\ell,t} = 0$ due to \eqref{eq:move_supp}. Thus,
	$\bx^+_{\ell,t} = \bx_{\ell,t}\in  \{ 0,1 \}$. For $\ell \in G_{j}$ with $1\leq j\leq |T|+1$ and $t\in T$, using \eqref{eq:move_supp} once more we have $\bx^+_{\ell,t} =\bx_{\ell,t} + \lambda_j \cdot \bm^j_{\ell,t}$. Following the definitions  of $K$ and $M$, we have 
	$$ 0\leq K -M \cdot \frac{K}{|I_{\alpha}|\cdot M} \leq \bx_{\ell,t} + \lambda_j \cdot \bm^j_{\ell,t}\leq 1-K + M \cdot \frac{K}{|I_{\alpha}|\cdot M} \leq 1.$$
	Thus, $\bx^+\in [0,1]^{I_{\alpha} \times T}$. 	
	
	For every $(\ell,t)\in T$ such that $s_{\ell} > \eps f(t)$, it holds that $\bx_{\ell,t}=0$. Thus by \eqref{eq:move_supp} we have $\bm^{j}_{\ell,t} =0$ for all $1\leq j \leq |T|+1$. Therefore, $\bx^+_{\ell,t} =0$. 
	
	For every $t\in T$ it holds that 
	$$
	\sum_{\ell \in I_{\alpha}} \bx^+_{\ell,t} \cdot s_{\ell}  = 
	\sum_{\ell\in I_{\alpha}} \bx_{\ell,t}\cdot s_{\ell}+ \sum_{j=1}^{|T|+1}\lambda_j \sum_{\ell\in I_{\alpha}}
	\bm^j_{\ell,t} s_{\ell} = \sum_{\ell\in I_{\alpha}} \bx_{\ell,t}\cdot s_{\ell}  \leq f(t) |t|,$$
	where the second equality is by \eqref{eq:homegeneous}, and the inequality is due to $\bx \in P$. 
	
	For every $\ell \in I_{\alpha}$, we have
	$$\sum_{t\in T} \bx^+_{\ell,t} = 
	\sum_{t\in T} \bx_{\ell,t} + \sum_{j=1}^{|T|+1}
	\lambda_j \sum_{t\in T} \bm^j_{\ell,t} = 
	\sum_{t\in T} \bx_{\ell,t} =1,$$
	where the second equality is by \eqref{eq:move_select}, and the last equality follows from $\bx\in P$. 
	
	Finally, let $1\leq j \leq n$  and $t\in T$. 
	If $j>|T|+1$ then $\bx^+_{\ell,t} = \bx_{\ell,t}$ for every $\ell\in G_j$, and thus 
	$\sum_{\ell\in G_j } \bx^+_{\ell,t} = \sum_{\ell \in G_j } \bx_{\ell,t}\leq L_{t,j}$, as $\bx\in P$. Otherwise $1\leq j \leq |T|+1$. Observe that $\bx^+_{\ell,t} =\bx_{\ell,t}+\lambda_j \bm^j_{\ell,t}$ for every $\ell\in G_j$, 
	and consider the following cases.
	\begin{itemize}
		\item $\sum_{\ell\in G_j} \bx_{\ell,t} <L_{t,j}$. Using the definitions of $K$ and $M$, we have
		$$\sum_{\ell\in G_j} \bx^+_{\ell,t}= 
		\sum_{\ell\in G_j} \bx_{\ell,t} + \lambda_j\sum_{\ell\in G_j}  \bm^j_{\ell,t}\leq 
		L_{t,j} -K + |I_{\alpha}| \cdot \frac{K}{|I_\alpha|\cdot M  } \cdot M = L_{t,j}.$$
		\item $\sum_{\ell\in G_j} \bx_{\ell,t} =L_{t,j}$. By \eqref{eq:move_group} we have
	$$\sum_{\ell\in G_j} \bx^+_{\ell,t}= 
	\sum_{\ell\in G_j} \bx_{\ell,t} + \lambda_j\sum_{\ell \in G_j}  \bm^j_{\ell,t}=L_{t,j}.$$
	\end{itemize}
	We showed  that $\sum_{\ell \in G_j} \bx_{\ell,t} \leq L_{t,j}$ in all cases. Overall, we have that $\bx^+ \in P$.
	
	We can also define $\bx^- = \bx-\sum_{j=1}^{|T|+1} \lambda_j\cdot \bm^j$. By a symmetric argument we can show that $\bx^-\in P$ as well. It also holds that $\bx^+,\bx^-\neq \bx$ as there is $1\leq j^*\leq |T|+1$ such that $\lambda_{j^*}                                                                                                                                                                                                                                                                                                                                                                                                                                                                                                                                                                                                                                                                                                                                                                                                                                                                                                                                                                                                                                                                                                                                                                                                                                                                                                          \neq 0$; also, there are $\ell^*\in G_{j^*}$  and $t^*\in T$ such that $\bm^{j^*}_{\ell^*,t^*}\neq 0$. Thus, $\bx^+_{\ell^*,j^*}= \bx_{\ell^*,t^*} + \lambda_{j^*} \bm^{j^*}_{\ell^*,j^*}\neq \bx_{\ell^*,t^*}$. 
	Furthermore, $\bx =\frac{1}{2} \cdot \bx^++\frac{1}{2}\cdot \bx^-$, and we conclude that $\bx$ is not a vertex of $P$. Contradiction. \qed
\end{proof}

\begin{lemma}
	\label{lem:FewFractionalItems}
	If Group $G_j$ is fractional then $$\left|\left\{\ell\in G_j~|~\exists t \in T :~ \bx_{\ell,t} \in (0,1)\right\}\right| \leq 2|T|.$$
\end{lemma}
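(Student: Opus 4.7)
The plan is to adapt the polytope approach of Lemma~\ref{lem:movement} and derive the bound via a standard dimension-counting argument at a vertex of a restricted polytope (the approach is classical for LP relaxations of two-dimensional vector bin packing).

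Let $F_j = \{\ell \in G_j : \exists t \in T,~ \bx_{\ell,t}\in (0,1)\}$ be the set of fractional items of $G_j$. I would introduce the auxiliary polytope
\begin{equation*}
\tilde{P}_j = \left\{\by \in \mathbb{R}_{\geq 0}^{G_j \times T} ~\middle|~ \begin{array}{l}
\forall (\ell,t) \in X_j:~ \by_{\ell,t} = 0,\\
\forall t \in T:~ \sum_{\ell \in G_j} \by_{\ell,t} s_\ell \leq f(t)|t| - \sum_{\ell' \in I_\alpha \setminus G_j} \bx_{\ell',t} s_{\ell'},\\
\forall \ell \in G_j:~ \sum_{t \in T} \by_{\ell,t} = 1,\\
\forall t \in T:~ \sum_{\ell \in G_j} \by_{\ell,t} \leq L_{t,j}
\end{array}\right\},
\end{equation*}
and the vector $\by^* \in \mathbb{R}^{G_j\times T}$ defined by $\by^*_{\ell,t} = \bx_{\ell,t}$. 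Since $\bx \in P$, clearly $\by^* \in \tilde{P}_j$. If $\by^*$ were not a vertex of $\tilde{P}_j$, then some $\bgam \neq 0$ would satisfy $\by^* \pm \bgam \in \tilde{P}_j$; extending $\bgam$ to $\tilde{\bgam} \in \mathbb{R}^{I_\alpha \times T}$ by zeros outside $G_j \times T$, a routine verification (the constraints of $P$ for any other group $G_{j'}$, $j'\neq j$, depend only on coordinates in $G_{j'}\times T$, and the packing constraint at each $t$ is preserved by design) shows $\bx \pm \tilde{\bgam} \in P$, contradicting that $\bx$ is a vertex of $P$. Thus $\by^*$ is a vertex of $\tilde{P}_j$.

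Next I would count tight constraints of $\tilde{P}_j$ at $\by^*$. The ambient space has dimension $|G_j|\cdot|T|$, and at a vertex there must be $|G_j|\cdot|T|$ linearly independent tight constraints. The $|X_j|$ equalities $\by_{\ell,t}=0$ and the $|G_j|$ selection equalities are always tight and together contribute $|X_j| + |G_j|$ linearly independent tight constraints. The packing and cardinality families contribute at most $|T|+|T|=2|T|$ additional tight constraints. Therefore, setting $Z = |\{(\ell,t)\notin X_j : \by^*_{\ell,t}=0\}|$, we obtain
\begin{equation*}
|X_j| + |G_j| + 2|T| + Z \;\geq\; |G_j|\cdot|T|, \qquad \text{i.e.,} \qquad Z \;\geq\; |G_j|\cdot|T| - |X_j| - |G_j| - 2|T|.
\end{equation*}
Conversely, every $\ell\in F_j$ has at least two strictly positive entries in row $\ell$ of $\by^*$: one entry is strictly in $(0,1)$, and since $\sum_t \by^*_{\ell,t}=1$, another entry must also be positive. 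Every $\ell\in G_j\setminus F_j$ has exactly one positive entry (equal to $1$). All strictly positive entries occur at indices $(\ell,t)\notin X_j$. Hence the number of $(\ell,t)\notin X_j$ with $\by^*_{\ell,t}>0$ is at least $2|F_j| + (|G_j|-|F_j|) = |G_j|+|F_j|$, which gives $Z \leq |G_j|\cdot|T| - |X_j| - |G_j| - |F_j|$. Combining the two inequalities yields $|F_j| \leq 2|T|$, as required.

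I do not foresee a significant obstacle; the only delicate point is ensuring that $\tilde{P}_j$ retains \emph{every} constraint of $P$ that restricts the $G_j$-coordinates (in particular, the packing constraints, whose right-hand sides must absorb the contribution of items outside $G_j$), so that the vertex status of $\bx$ transfers to $\by^*$ via the zero-extension argument.
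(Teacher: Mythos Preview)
Your proposal is correct and follows essentially the same approach as the paper: restrict $\bx$ to the $G_j$-coordinates, show the restriction is a vertex of an auxiliary polytope (by lifting any nontrivial direction back to $P$), and then count tight constraints to bound the number of fractional rows by $2|T|$. The only cosmetic differences are that the paper works directly in $\mathbb{R}^{R_j}$ with $R_j=(G_j\times T)\setminus X_j$ and uses $\sum_t \by_{\ell,t}\geq 1$ rather than $=1$, and it organizes the tightness count per item (at most $\xi_\ell-1$ tight nonnegativity/cover constraints for fractional $\ell$, at most $\xi_\ell$ for integral $\ell$) instead of your global zero-entry count; the arithmetic is identical.
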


\begin{proof}
	Define $R_j = (G_j \times T)\setminus X_j$. 
	and let  $Q_j\subseteq \mathbb{R}^{R_j}$ be the set (polytope) of  all the vectors $\by \in \mathbb{R}^{R_j}$ which satisfy the following inequalities:
	
	\begin{align}
	\label{eq:Qdef_nonneg}
	&\forall (\ell,t)\in R_j&:~~~& \by_{\ell,t} \geq 0\\
	&\forall \ell \in G_j&:~~~& \sum_{t\in T\textnormal{ s.t. } (\ell,t)\in R_j} \by_{\ell,t} \geq 1 \label{eq:Qdef_cover}\\ 
	&\forall t\in T &:~~~& 
	\sum_{\ell \in G_j\textnormal{ s.t. } (\ell,t)\in R_j} \by_{\ell,t} \cdot s_{\ell} + \sum_{\ell \in I_{\alpha }\setminus G_j} \bx_{\ell,t} \cdot s_{\ell} \leq f(t) |t| \label{eq:Qdef_capacity}\\
&\forall t\in T &:~~~& 
\sum_{{\ell }\in G_j\textnormal{ s.t. } (\ell,t)\in R_j } \by_{\ell,t} \leq L_{t,j} \label{eq:Qdef_cardinality}	
	\end{align}

Also, define $\by^*\in  \mathbb{R}^{R_j}$ by $\by^*_{\ell,t} = \bx_{\ell,t}$ for $(\ell,t)\in R_j$. As $\bx\in P$, 
it holds that $\by^* \in Q_j$ (recall $\bx_{\ell,t}=0$ for all $(\ell,t)\in X_j$). 

Assume towards contradiction that $\by^*$ is not a vertex of $Q_j$. Thus, there is a vector $\bgam \in \mathbb{R}^{R_j}$, $\bgam\neq 0$  such that $\by+\bgam, \by-\bgam \in Q_j$.  Thus, by \eqref{eq:Qdef_cover}, for all $\ell\in G_j$ it holds that, 
$$
	1+ \sum_{t\in T\textnormal{ s.t. } (\ell,t)\in R_j} \bgam_{\ell,t}   = \sum_{t\in T\textnormal{ s.t. } (\ell,t)\in R_j} \left(\by^*_{\ell,t} + \bgam_{\ell,t} \right) \geq 1,
$$
and 
$$
1- \sum_{t\in T\textnormal{ s.t. } (\ell,t)\in R_j} \bgam_{\ell,t}   = \sum_{t\in T\textnormal{ s.t. } (\ell,t)\in R_j} \left(\by^*_{\ell,t} - \bgam_{\ell,t} \right) \geq 1.
$$
Hence,
\begin{equation}\sum_{t\in T\textnormal{ s.t. } (\ell,t)\in R_j} \bgam_{\ell,t}=0.
	\label{eq:gamsum}
	\end{equation}

Define $\bx^+,\bx^- \in \mathbb{R}^{I_{\alpha}\times T}$ by $\bx^+_{\ell,t} = \bx^-_{\ell,t} =\bx_{\ell,t}$ for $(\ell,t)\notin R_j$, and $\bx^+_{\ell,t}= \bx_{\ell,t}+\bgam_{\ell,t},~\bx^-_{\ell,t} = \bx_{\ell,t}-\bgam_{\ell,t}$ for $(\ell,t)\in R_j$.  Since $\bgam\neq 0$ it follows that $\bx^+ \neq \bx^-$. 

Let $(\ell,t)\in I_{\alpha}\times T$. If $(\ell,t)\in R_j$ then $\bx^+_{\ell,t} =\by^*_{\ell,t} +\bgam_{\ell,t}\geq 0$, since $\by^*+\bgam_{\ell,t} \in Q_j$  and due to \eqref{eq:Qdef_nonneg}. If $(\ell,t)\notin R_j$ then $\bx^+_{\ell,t}= \bx_{\ell,t} \geq 0$. Overall, we have that $\bx^+_{\ell,t}\geq 0$ in both cases. 
Furthermore, if $s_{\ell}>\eps f(t)$ then $\bx^+_{\ell,t} = \bx_{\ell,t} = 0$ as $(\ell,t)\notin R_j$. 

For any $t\in T$, it holds that 
$$\sum_{\ell\in I_{\alpha}} \bx^+_{\ell,t} s_{\ell}
= \sum_{\ell\in  I_{\alpha}\setminus G_j} \bx_{\ell,t} s_{\ell}  +
\sum_{\ell \in G_j\textnormal{ s.t. } (\ell,t)\in R_j} \left(\by^*_{\ell,t} +\bgam_{\ell,t}\right)s_{\ell} \leq f(t)|t|,$$
where the last equality is due to \eqref{eq:Qdef_capacity} and $\by^*+\bgam \in Q_j$. 

Finally, for any $\ell \in I_{\alpha}$, if $\ell \notin G_j$ then
$$\sum_{t\in T} \bx^+_{\ell,t} = \sum_{t\in T} \bx_{\ell,t} =1.$$ 
Also, if  $\ell \in G_j$ then using \eqref{eq:gamsum}, we have
$$\sum_{t\in T} \bx^+_{\ell,t} =
\sum_{t\in T} \bx_{\ell,t}
 \sum_{t\in T~\textnormal{ s.t. } (\ell,t)\in R_j } \bgam_{\ell,t}  =1.$$ 
 
 Overall, we have that $\bx^+\in P$. Using a symmetric argument it follows that $\bx^-\in P$ as well.  Since $\bx= \frac{\bx^+ \bx^-}{2}$, it follows that $\bx$ is not a vertex of $P$. A contradiction. Thus, $\by^*$ is a vertex of $Q_j$. 
 
 Let $F= \{\ell\in G_j~|~\exists t\in T:~\by^*_{\ell,t}\in (0,1)\}$. By the definition of $\by^*$, it suffices to show that $|F|\leq 2|T|$. We say that an inequality from \eqref{eq:Qdef_nonneg}, \eqref{eq:Qdef_cover}, \eqref{eq:Qdef_capacity} and 
\eqref{eq:Qdef_cardinality} is {\em tight} if it holds with equality with respect to $\by^*$. 
 For any $\ell\in G_j$ define $\xi_{\ell} = |\{t\in T~|~(\ell,t)\in R_j\}|$. It follows that $|R_j|= \sum_{\ell\in G_j} \xi_{\ell}$. For $\ell\in F$ up to $\xi_{\ell}-1$ of the inequalities in \eqref{eq:Qdef_nonneg} and \eqref{eq:Qdef_cover} are tight. Also, for $\ell \in G_j\setminus F$ up to $\xi_{\ell}$ of inequalities in \eqref{eq:Qdef_nonneg} and \eqref{eq:Qdef_cover} are tight. As the number of inequalities in \eqref{eq:Qdef_cardinality} and \eqref{eq:Qdef_capacity} is $2|T|$, it follows that the number of tight equalities is at most
 $$
\sum_{\ell\in F} (\xi_{\ell} -1)+ \sum_{\ell\in G_j\setminus F} \xi_\ell +2|T| \leq |R_j| -|F| +2|T|.
$$ 
As $\by^*$ is a vertex, there are at least $|R_j|$ tight inequalities. Thus, $|F|\leq 2|T|$ as required.
 \qed
\end{proof}

Since $|T| = O(1)$, Theorem~\ref{thm:partitionPhase} follows from Lemmas~\ref{lem:FewFractionalGroups} and~\ref{lem:FewFractionalItems}.




\end{document}